\documentclass[runningheads]{llncs}
\usepackage[T1]{fontenc}
\usepackage{graphicx}
\usepackage[colorlinks=true, allcolors=blue]{hyperref}
\usepackage{color}

\urlstyle{rm}
\usepackage[dvipsnames]{xcolor}
\usepackage{amsmath}
\usepackage{amssymb}
\usepackage{stmaryrd}
\usepackage{mathtools}
\usepackage{mdframed}
\usepackage{booktabs}
\usepackage{tikz}
\usepackage{csquotes}
\usetikzlibrary{patterns,shapes,positioning,automata}
\usepackage[altpo]{backnaur}
\usepackage[shortlabels]{enumitem}
\usepackage{multicol}
\usepackage{adjustbox}
\usepackage{wrapfig}
\usepackage{nicefrac}
\usepackage{siunitx}
\usepackage{colortbl}
\usepackage{listings}
\usepackage{cleveref}
\usepackage{thmtools}
\usepackage{thm-restate}
\usepackage{bussproofs}
\usepackage{xspace}
\usepackage{pifont}
\usepackage[textsize=scriptsize,bordercolor=white]{todonotes}
\usepackage{setspace}
\setlength{\marginparwidth}{40mm} 

\newcommand{\blue}[1]{\textcolor{Blue}{#1}}
\newcommand{\red}[1]{\textcolor{Bittersweet}{#1}}
\newcommand{\gray}[1]{\textcolor{gray}{#1}}

\newcommand{\weight}[1]{|#1|}
\newcommand{\subdists}{{\Deltasub \Sigma}}

\newcommand{\dists}{{\Delta \Sigma}}
\newcommand{\onlybdists}{\langle \bool \rangle}
\newcommand{\Deltasub}{\Delta_{\leq 1}}

\newcommand{\Deltaspec}[1]{\Delta_{#1}}

\newcommand{\LFPset}{X}
\newcommand{\Mfdr}{{\normalfont \initSet_{\textsc{fdr}}}}
\newcommand{\varsfdr}{\vars_{\textsc{fdr}}}
\newcommand{\Mfldr}{{\normalfont \initSet_{\textsc{fldr}}}}

\newcommand{\initDist}{\mu_0} 
\newcommand{\initSet}{M} 
\newcommand{\hSetOne}{M} 
\newcommand{\hSetTwo}{N}
\newcommand{\hSetThree}{O}
\newcommand{\hSetFour}{P}
\newcommand{\triple}[3]{\{\hspace{-2.1pt}|#1|\hspace{-2.1pt}\}~#2~\{\hspace{-2.1pt}|#3|\hspace{-2.1pt}\}} 
\newcommand{\tripleGen}{\triple{\hSetOne}{\prog}{\hSetTwo}} 
\newcommand{\inv}{I}
\newcommand{\marg}[2]{\mathbb{M}_{#1}(#2)}

\newcommand{\unif}[3]{\mathsf{unif}_{#1}(#2,\dots,#3)}

\newcommand{\prosim}[4]{\textnormal{Pr}_{#1}[#2 #3 #4]}
\newcommand{\pr}[2]{\textnormal{Pr}_{#1}[#2]}
\newcommand{\minim}[2]{\min\{#1, #2\}}
\newcommand{\dirac}[1]{\delta_{#1}} 

\newcommand{\prog}{C}
\newcommand{\loopbody}{B}

\newcommand{\progOne}{\prog_1}
\newcommand{\progTwo}{\prog_2}
\newcommand{\arith}{E}
\newcommand{\bool}{\varphi}
\newcommand{\probability}{p}
\newcommand{\textvar}[1]{\normalfont #1}
\newcommand{\progvar}[1]{#1}
\newcommand{\numb}[1]{\mathbf{#1}}
\newcommand{\supp}{\textnormal{supp}}
\newcommand{\pgcl}{\textnormal{pGCL}}
\newcommand{\conf}{\textnormal{Conf}}

\newcommand{\vars}{\mathit{Var}}
\newcommand{\false}{\mathsf{ff}}
\newcommand{\true}{\mathsf{tt}}
\newcommand{\bools}{\mathbb{B}}
\newcommand{\Ifdr}{{\normalfont I_{\textsc{fdr}}}}
\newcommand{\Ifldr}{{\normalfont I_{\textsc{fldr}}}}
\newcommand{\progfldr}{{\normalfont\prog_\textsc{fldr}}}
\newcommand{\progfdr}{{\normalfont\prog_\textsc{fdr}}}
\newcommand{\prob}[2]{\textnormal{prob}_{#1}(#2)}
\newcommand{\probt}[2]{\textnormal{probt}_{#1}(#2)}

\newcommand{\lfp}{\mathsf{lfp\,}}
\newcommand{\nats}{\mathbb{N}}

\newcommand{\unmod}[1]{\textnormal{unmod}(#1)}

\newcommand{\reach}[1]{\mathsf{reach}\llbracket #1\rrbracket}
\newcommand{\post}[1]{\mathsf{sp}\llbracket #1\rrbracket}
\newcommand{\den}[1]{\mathsf{D}\llbracket #1\rrbracket}

\newcommand{\reachn}{\mathsf{reach}}
\newcommand{\postn}{\mathsf{sp}}
\newcommand{\denn}{\mathsf{D}}

\newcommand{\prfont}[1]{\textnormal{\texttt{#1}}}
\newcommand{\prspace}{\,}
\newcommand{\prparenop}{\prfont{(}}
\newcommand{\prparencl}{\prfont{)}}
\newcommand{\prsqbrackop}{\prfont{[}}
\newcommand{\prsqbrackcl}{\prfont{]}}
\newcommand{\prbraceop}{\prfont{\{}}
\newcommand{\prbracecl}{\prfont{\}}}
\newcommand{\ski}{\prfont{skip}}
\newcommand{\ass}[2]{#1\prspace\prfont{:=}\prspace#2}
\newcommand{\pchoice}[3]{\prbraceop#1\prbracecl \prspace \prsqbrackop#2\prsqbrackcl \prspace \prbraceop#3\prbracecl}
\newcommand{\nchoice}[2]{\{#1\} \prsqbrackop\prsqbrackcl \{#2\}}
\newcommand{\comp}[2]{#1\prspace\prfont{;}\prspace#2}
\newcommand{\SEMICLN}{\prspace\prfont{;}\prspace}
\newcommand{\ifelse}[3]{\prfont{if}\prspace\prparenop#1\prparencl\prspace\prbraceop#2\prbracecl\prspace\prfont{else}\prspace\prbraceop#3\prbracecl}
\newcommand{\ifelseskip}[2]{\prfont{if}\prspace\prparenop#1\prparencl\prspace\prbraceop#2\prbracecl}
\newcommand{\ELSE}{\prbracecl\prspace\prfont{else}\prspace\prbraceop}
\newcommand{\IF}[1]{\prfont{if}\prspace\prparenop#1\prparencl\prspace\prbraceop}
\newcommand{\while}[2]{\prfont{while}\prspace\prparenop#1\prparencl\prspace\prbraceop#2\prbracecl}
\newcommand{\WHILE}[1]{\prfont{while}\prspace\prparenop#1\prparencl\prspace\prbraceop}

\renewcommand{\H}[2]{H[#1,#2]}
\newcommand{\bound}[2]{\phi_{#1}(#2)}
\newcommand{\isThere}[2]{\Theta_{#1}(#2)}



\newcommand{\clos}{c\ell}
\newcommand{\closOf}[1]{{\clos\left(#1\right)}}


\newcommand{\mdp}{\mathcal{M}}
\newcommand{\mdpTr}{\mdp_{\prog}}

\newcommand{\states}{S}
\newcommand{\act}{{Act}}
\newcommand{\prmdp}{P}

\newcommand{\subdistsStates}{\Deltasub \states}

\newcommand{\smallstep}[2]{\overset{#1.#2}{\longrightarrow}}

\newcommand{\markc}{\mathcal{M}}
\newcommand{\markctuple}{(\states,\prmdp)}
\newcommand{\markcTr}{\markc_{\prog}}

\newcommand{\hBrace}[1]{\blue{\{\hspace{-2.1pt}|#1|\hspace{-2.1pt}\}}}
\newcommand{\htrip}[3]{\hBrace{#1}~#2~\hBrace{#3}}
\newcommand{\intropre}{pre}
\newcommand{\intropost}{post}

\newcommand{\splitStateInline}[2]{\tikz[baseline=-2.75pt] \node [circle split,draw,rotate=90,inner sep=0.5pt] {\rotatebox{-90}{\tiny$#1$} \nodepart{lower} \rotatebox{-90}{\tiny$#2$}};}

\newcommand{\progAnno}[1]{\blue{\prfont{/\!\!/}~#1}}
\newcommand{\progAnnoNoSlashes}[1]{\blue{\textcolor{white}{\prfont{/\!\!/}}~#1}}

\newcommand{\eeq}{~{}={}~}
\newcommand{\iiff}{~{}\iff{}~}
\newcommand{\lland}{~{}\land{}~}

\newcommand{\toolzar}{\textsf{Zar}\xspace}
\newcommand{\toolpsi}{\textsf{PSI}\xspace}

\newcommand{\wpRelWork}{\textsf{wp}}

\newif\ifappendixversion
 \appendixversiontrue     
%
\begin{document}
\title{Verifying Sampling Algorithms via\\Distributional Invariants}
%
%
\author{
    Daniel Zilken\inst{1}\orcidID{0009-0007-7221-2554}\and%
    Kevin Batz\inst{2}\orcidID{0000-0001-8705-2564}\and%
    Joost-Pieter Katoen\inst{1}\orcidID{0000-0002-6143-1926}\and%
    Tobias Winkler\inst{1}\orcidID{0000-0003-1084-6408}%
}
\authorrunning{K. Batz et al.}
%
\institute{%
    RWTH Aachen University, Aachen, Germany\and%
    University College London, London, United Kingdom
}
\maketitle
\begin{abstract}
    This paper presents a Hoare-like verification framework for discrete probabilistic programs that we apply to two non-trivial sampling algorithms: Lumbroso's \textit{Fast Dice Roller} and Saad et al.'s \textit{Fast Loaded Dice Roller}.
    These algorithms have previously resisted formal verification due to their probabilistic nature, intricate loop structure, and parametric input.
    Our approach complements existing proof rules based on inductive distributional invariants, enabling us to verify both total and partial correctness of the two algorithms.
    \keywords{Program Verification \and Probabilistic Programs \and Hoare Logic}
\end{abstract}
\section{Introduction}%
\label{sec:intro}%
\emph{Sampling random numbers} from prescribed probability distributions is fundamental to a growing number of applications in computer science and related fields.
Examples include simulations of physical processes \cite{doi:10.1073/pnas.1912789117}, randomized algorithms and data structures~\cite{DBLP:journals/cacm/Pugh90}, and machine learning~\cite{sgd}.

This paper addresses \emph{functional correctness} verification of sampling algorithms encoded as \emph{probabilistic programs}.
The goal is to ensure that such a program generates a specified---possibly parameterized---target distribution.
More precisely, the verification problem we address is as follows:
\begin{quote}
    \itshape
        Given an imperative program $\prog$ with coin flips and a distributional Hoare triple
        $\htrip{\intropre}{\prog}{\intropost}$,
        prove that $\prog$, when executed on any input (distribution) satisfying \blue{$\intropre$}, yields a (sub-)distribution satisfying \blue{$\intropost$}.
\end{quote}
In our framework, \blue{$\intropre$} and \blue{$\intropost$} are \emph{sets of (sub-)distributions} or predicates describing such sets; a (sub-)distribution $\mu$ \emph{satisfies} \blue{$\intropre$} or \blue{$\intropost$} if $\mu$ is an element of the respective set.
Hoare triples enable reasoning about samplers for \emph{parametric distributions}.
For example, $\htrip{n \geq 0}{\prog}{y \sim \unif{}{0}{n}}$
asserts that, given input $n \geq 0$, program $\prog$ outputs an integer $y$ drawn uniformly at random from $\{0, \dots, n\}$ (assuming $\prog$ does not modify $n$).

Our problem statement poses three main challenges:
\begin{enumerate}[(C1),leftmargin=9mm]
    \item\label{challenge1} The output distribution must be established \emph{exactly}---approximations do not necessarily suffice.
    \item\label{challenge2} When coin flips occur inside \emph{loops}, the target distribution may only be achieved \emph{in the limit}.
    \item\label{challenge3} Sampling algorithms may take \emph{parameters} as inputs, requiring verification over \emph{infinite families of distributions}.
\end{enumerate}
We tackle these challenges by adopting the \emph{distributional paradigm}~\cite{DBLP:conf/lics/AkshayGV18,DBLP:conf/cav/AkshayCMZ23,DBLP:conf/ijcai/0001CMZ24,DBLP:conf/birthday/AghamovBKNOPV25}, which treats distributions over program states as first-class semantic objects.
This point of view enables reasoning about \emph{distributional invariant properties} such as \enquote{In every step, variable $x$ is non-negative with probability $\geq 0.9$}.
Such properties are inexpressible in the more conventional state-centric approach to probabilistic verification (e.g.,~\cite[Chapter 10]{DBLP:books/daglib/0020348}).
Distributional invariants turn out to be a natural fit for verifying sampling algorithms, as we illustrate next.

\paragraph{Sampler Verification with Distributional Invariants: Von Neumann's Fair Coin.}%
Consider the problem of sampling a uniform random bit from a source of i.i.d.\ random bits with unknown bias $p \in (0,1)$ towards, say, $0$.
The following solution is due to Von Neumann~\cite{vonNeumann1951}:
Extract two bits from the source; if they differ, return the first (or the second), otherwise discard the bits and start over.
This is readily modeled as the program in \Cref{fig:vonNeumann}.
Albeit simple, the algorithm already presents challenges \ref{challenge1} and \ref{challenge2}.
Under the distributional paradigm, the program's dynamics is the following infinite trajectory of \emph{reachable distributions}, illustrated here for $p = 0.4$ and the initial distribution satisfying $Pr [\splitStateInline{0}{0}] = 1$:%
\footnote{%
    Some numbers have been rounded.
    The rightmost vector is the limiting distribution.
}%

\begin{figure}[t]
    \begin{minipage}{0.40\textwidth}
        \setlength{\jot}{2pt} 
        \begin{align*}
            &\progAnno{\hBrace{x = y}} \\
            &\WHILE{x = y} \\
            &\qquad \pchoice{\prspace\ass{x}{0}\prspace}{p}{\prspace\ass{x}{1}\prspace} \SEMICLN \\
            &\qquad \pchoice{\prspace\ass{y}{0}\prspace}{p}{\prspace\ass{y}{1}\prspace} \\
            &\prbracecl \\
            &\progAnno{\hBrace{x \sim \mathsf{unif}(0,1)}}
        \end{align*}
    \end{minipage}
    \begin{minipage}{0.59\textwidth}
        \centering
        \begin{tikzpicture}[scale=1.25,initial where=above,initial text=]
            \node [initial,circle split,draw,rotate=90,inner sep=2pt] at (0,0.6) (00) {\rotatebox{-90}{$0$} \nodepart{lower} \rotatebox{-90}{$0$}};
            \node [initial,circle split,draw,rotate=90,inner sep=2pt] at (0,-0.6) (11) {\rotatebox{-90}{$1$} \nodepart{lower} \rotatebox{-90}{$1$}};
            \node [accepting,circle split,draw,rotate=90,inner sep=2pt] at (-2,0) (01) {\rotatebox{-90}{$0$} \nodepart{lower} \rotatebox{-90}{$1$}};
            \node [accepting,circle split,draw,rotate=90,inner sep=2pt] at (2,0) (10) {\rotatebox{-90}{$1$} \nodepart{lower} \rotatebox{-90}{$0$}};
            
            \draw[->] (00) edge[bend left=20] node[right] {$(1{-}p)^2$} (11);
            \draw[->] (11) edge[bend left=20] node[left] {$p^2$} (00);
            \draw[->] (00) edge[bend left=-10] node[above left,near start] {$p(1{-}p)$} (01);
            \draw[->] (11) edge[bend left=10] node[below left,near start] {$p(1{-}p)$} (01);
            \draw[->] (00) edge[bend left=10] node[above right,near start] {$(1{-}p)p$} (10);
            \draw[->] (11) edge[bend left=-10] node[below right,near start] {$(1{-}p)p$} (10);
            
            \draw[->] (01) edge[loop left] node[auto] {$1$} (01);
            \draw[->] (10) edge[loop right] node[auto] {$1$} (10);
            \draw[->] (00) edge[loop right] node[above] {$p^2$} (00);
            \draw[->] (11) edge[loop left] node[below] {$(1{-}p)^2$} (11);
        \end{tikzpicture}
    \end{minipage}
    \caption{%
        \textit{Left:} Probabilistic program modeling Von Neumann's algorithm~\cite{vonNeumann1951}, where $\pchoice{\prspace L\prspace}{p}{\prspace R\prspace}$ executes sub-programs $L$ and $R$ with probability $p$ and $1-p$, respectively.
        \textit{Right:} The program's underlying Markov chain (\Cref{def:mdpOfLoop}, \cpageref{def:mdpOfLoop}).
        The terminal states (doubly circled) have a self-loop, which is important for our theory to work.
    }
    \label{fig:vonNeumann}
\end{figure}

\begin{align*}
    \begin{matrix}
        \splitStateInline{0}{0} \\[-0.5pt]
        \splitStateInline{0}{1} \\[-0.5pt]
        \splitStateInline{1}{0} \\[-0.5pt]
        \splitStateInline{1}{1}
    \end{matrix}
    \quad
    &\begin{pmatrix}
        1 \\
        \red{0} \\
        \red{0} \\
        0
    \end{pmatrix}_{\!0}
    \!\rightsquigarrow\hspace{-3mm}
    &&\begin{pmatrix}
        0.16 \\
        \red{0.24} \\
        \red{0.24} \\
        0.36
    \end{pmatrix}_{\!1}
    \!\rightsquigarrow\hspace{-3mm}
    &&\begin{pmatrix}
        0.0832 \\
        \red{0.3648} \\
        \red{0.3648} \\
        0.1872
    \end{pmatrix}_{\!2}
    \!\rightsquigarrow
    \cdots
    \rightsquigarrow\hspace{-3mm}
    &&\begin{pmatrix}
        0.0004 \\
        \red{0.4993} \\
        \red{0.4993} \\
        0.0010
    \end{pmatrix}_{\!10}
    \!\!\rightsquigarrow
    \cdots\hspace{-3mm}
    &&\begin{pmatrix}
        0.0 \\
        \red{0.5} \\
        \red{0.5} \\
        0.0
    \end{pmatrix}_{\!\omega}
    \tag{$\dagger$}
    \label{eq:trajectory}
\end{align*}
As desired, the sequence converges (pointwise) to the uniform distribution over $\{ \splitStateInline{0}{1}, \splitStateInline{1}{0}\}$.
An elegant approach for proving this fact is to verify that the reachable distributions in \eqref{eq:trajectory} satisfy the following \emph{distributional invariant}:
\begin{align*}
    Pr [\splitStateInline{0}{1}]
    ~=~
    Pr [\splitStateInline{1}{0}]
    ~,
    \qquad
    \begin{minipage}{0.55\textwidth}
        i.e., \emph{the probability of being in \splitStateInline{0}{1} is \underline{alwa}y\underline{s} equal to the probability of being in \splitStateInline{1}{0}}.
    \end{minipage}
    \tag{$\star$}
    \label{eq:vonNeumannInv}
\end{align*}
Notice that the \red{2\textsuperscript{nd}} and \red{3\textsuperscript{rd}} entries of the vectors in \eqref{eq:trajectory} are consistent with \eqref{eq:vonNeumannInv}.
As we discuss in detail in \Cref{sec:verification}, this invariant property restricted to terminal states still holds \emph{in the limit}.
Together with the fact that the program terminates \emph{almost surely} (i.e., with probability $1$), correctness follows.%
\footnote{Termination proofs are not in our scope. See, e.g., \cite{10.1145/3158121,DBLP:journals/pacmpl/MajumdarS25}.}

\paragraph{Efficient Discrete Uniform Sampling: A \enquote{Fast Dice Roller}~{\normalfont\cite{DBLP:journals/corr/abs-1304-1916}.}}%
Consider the program in \Cref{fig:fdrIntro} (left) for efficient generation of the discrete uniform distribution from unbiased i.i.d.\ random bits.
The program depends on an integer input parameter $n > 0$ and thus poses all three challenges \ref{challenge1}--\ref{challenge3}.
Regarding \ref{challenge3}, we note that the FDR effectively describes an \emph{infinite family} of finite Markov chains.
Importantly, established \emph{probabilistic model checking} techniques (e.g.,~\cite{DBLP:books/daglib/0020348,DBLP:journals/sttt/HenselJKQV22,DBLP:journals/jcss/JungesK0W21}) are not applicable to infinite families---these techniques can only verify the algorithm for finitely many values of $n$~\cite{DBLP:journals/jar/MertensKQW25}.

Lumbroso~\cite{DBLP:journals/corr/abs-1304-1916} has already sketched a surprisingly simple distributional invariant for the FDR (\Cref{fig:fdrIntro}, right).
In \Cref{sec:fdr}, we provide the remaining formal details and outline a proof of inductivity, establishing functional correctness of the FDR relative to a formal denotational semantics.

\begin{figure}[t]
    \begin{minipage}{0.49\textwidth}
        \setlength{\jot}{2pt}
        \begin{align*}%
            &\progAnno{\hBrace{n > 0}} \\
            &\ass{v}{1} \SEMICLN \ass{c}{0} \SEMICLN \\
            &\WHILE{v < n} \\
            &\qquad \ass{v}{2v} \SEMICLN \\
            &\qquad \pchoice{\ass{c}{2c}}{0.5}{\ass{c}{2c+1}} \SEMICLN \\
            &\qquad \IF{c \geq n} \\
            &\qquad	\qquad \ass{v}{v-n} \SEMICLN \ass{c}{c-n} \\[-5pt]
            &\qquad \prbracecl \\[-5pt]
            &\prbracecl \\
            &\progAnno{\hBrace{c \sim \unif{}{0}{n-1}}}
        \end{align*}
    \end{minipage}
    \begin{minipage}{0.49\textwidth}
        \enquote{%
            Consider this statement, which is a loop invariant: $c$ is uniformly distributed over $\{0,\ldots,v-1\}$}\hfill\mbox{\cite[p.~4]{DBLP:journals/corr/abs-1304-1916}%
        }
        \vspace{5mm}\\
        \begin{tikzpicture}[scale=0.8,dot/.style = {circle, fill=black, minimum size=4pt, inner sep=0pt}, every loop/.style={min distance=10mm,out=225,in=315,looseness=5}, ->,scale=0.375]
            \node[dot] (-1) at (0,1) {};
            \node[dot] (1) at (0,0) {};
            \node[dot] (2) at (-2,-1) {};
            \node[dot] (3) at (-3,-2) {};
            \node[dot] (5) at (2,-1) {};
            \node[dot] (6) at (-1,-2) {};
            \node[dot] (8) at (1,-2) {};
            \node[] at (0,-5) {$n=3$};
            
            \draw (-1) -- (1);
            \draw (1) -- (2);
            \draw (1) to [bend right = 30] (5);
            \draw (2) -- (3);
            \draw (2) -- (6);
            \draw (5) -- (8);
            \draw (5) to [bend right = 30] (1);
            
            \path (3) edge[loop below] (3);
            \path (6) edge[loop below] (6);
            \path (8) edge[loop below] (8);
        \end{tikzpicture}
        \hfill
        \begin{tikzpicture}[scale=0.8, dot/.style = {circle, fill=black, minimum size=4pt, inner sep=0pt}, every loop/.style={min distance=10mm,out=225,in=315,looseness=5}, ->,scale=0.375]
            \node[dot] (-1) at (0,1) {};
            \node[dot] (1) at (0,0) {};
            \node[dot] (2) at (-2,-1) {};
            \node[dot] (3) at (-3,-2) {};
            \node[dot] (5) at (2,-1) {};
            \node[dot] (6) at (-1,-2) {};
            \node[dot] (8) at (1,-2) {};
            \node[dot] (9) at (2,-2) {};

            \node[dot] (10) at (-3.5,-3) {};
            \node[dot] (11) at (-2.5,-3) {};
            \node[dot] (12) at (-1.5,-3) {};
            \node[dot] (13) at (-0.5,-3) {};
            \node[dot] (14) at (0.5,-3) {};

            \node[dot] (15) at (4,-1.5) {};
            \node[dot] (16) at (6,-1.5) {};
            \node[dot] (17) at (8,-1.5) {};

            \node[dot] (20) at (3.5,-3) {};
            \node[dot] (21) at (4.5,-3) {};
            \node[dot] (22) at (5.5,-3) {};
            \node[dot] (23) at (6.5,-3) {};
            \node[dot] (24) at (7.5,-3) {};
            \node[dot] (25) at (8.5,-3) {};
            
            \node[] at (2,-6) {$n=5$};
            
            \draw (-1) -- (1);
            \draw (1) -- (2);
            \draw (1) to (5);
            \draw (2) -- (3);
            \draw (2) -- (6);
            \draw (5) -- (8);
            \draw (5) to (9);

            \draw (8) to [bend right = 30] (15);
            \draw (9) to [bend right = 30] (16);
            \draw (9) to [bend right = 30] (17);

            \draw (3) -- (10);
            \draw (3) -- (11);
            \draw (6) -- (12);
            \draw (6) -- (13);
            \draw (8) -- (14);

            \draw (15) -- (20);
            \draw (15) -- (21);
            \draw (16) -- (22);
            \draw (16) -- (23);
            \draw (17) -- (24);
            \draw (17) -- (25);
            
            \draw (25) to [bend right = 30] (1);
            
            \path (10) edge[loop below] (10);
            \path (11) edge[loop below] (11);
            \path (12) edge[loop below] (12);
            \path (13) edge[loop below] (13);
            \path (14) edge[loop below] (14);
            \path (20) edge[loop below] (20);
            \path (21) edge[loop below] (21);
            \path (22) edge[loop below] (22);
            \path (23) edge[loop below] (23);
            \path (24) edge[loop below] (24);
        \end{tikzpicture}
    \end{minipage}
    \caption{%
        \textit{Left:} The Fast Dice Roller (FDR) algorithm adapted from~\cite{DBLP:journals/corr/abs-1304-1916} for discrete uniform sampling.
        \textit{Right:} Lumbroso's distributional invariant~\cite{DBLP:journals/corr/abs-1304-1916} and two members of the FDR's infinite family of underlying Markov chains (\Cref{def:mdpOfLoop}, \cpageref{def:mdpOfLoop}).
    }
    \label{fig:fdrIntro}
\end{figure}

\paragraph{Contributions.}%
In summary, this paper makes the following contributions:
\begin{description}[leftmargin=1em]
    \item[Sampler verification:]
    We formally verify two non-trivial sampling algorithms by means of distributional loop invariants:
    The FDR~\cite{DBLP:journals/corr/abs-1304-1916} (\Cref{fig:fdrIntro}, left) and a more recent, space-efficient variant for sampling from \emph{arbitrary finite-support distributions}, the Fast \emph{Loaded} Dice Roller (FLDR)~\cite{DBLP:conf/aistats/SaadFRM20}.
    \item[Lifting the distributional paradigm from Markov chains to programs:]
    We show that program-level \emph{distributional loop invariants} coincide, in a precise sense, with a notion introduced in~\cite{DBLP:conf/cav/AkshayCMZ23} for finite Markov chains (\Cref{thm:translation}).
    \item[Extending the verification framework:]
    We complement existing invariant-based proof rules for loops~\cite{DBLP:journals/ijfcs/HartogV02,DBLP:conf/esop/BartheEGGHS18} and provide novel verification techniques for reasoning about injective state updates (\Cref{sec:verification}).
\end{description}
\ifappendixversion
\else
    An extended version of this paper containing all omitted proofs is available~\cite{arxiv}.
\fi

\section{Preliminaries}%
\label{sec:prelims}%
$\mathbb{Z}$ is the set of integers, $[0,1] \subseteq \mathbb{R}$ are the \emph{real} probabilities, and $\bools = \{\true,\false\}$.
A \emph{predicate} on a set $S$ is a function $\bool \colon S \to \bools$.
We write $s \models \bool$ iff $\bool(s) = \true$, otherwise $s \not\models \bool$.
We occasionally use $\lambda$-notation to define (unnamed) functions.

\paragraph{Probabilistic Programs.}%
Throughout the paper we fix a finite set $\vars$ of (integer) \emph{program variables} ranged over by $\progvar x$, $\progvar y$, etc.
For $X \subseteq \vars$ we define the (countable) set of \emph{program states} over $X$ as $\Sigma_{X} \coloneq \{\sigma \mid \sigma\colon X \to \mathbb{Z}\}$.
We write $\Sigma$ instead of $\Sigma_\vars$.
For $\sigma \in \Sigma$, $\progvar{x} \in \vars$, and $c \in \mathbb{Z}$, we define the \emph{updated program state} $\sigma[\progvar{x}/c] \coloneqq \sigma'$, where $\sigma'(\progvar{x}) = c$ and $\sigma'(\progvar{y}) = \sigma(\progvar{y})$ for all $\progvar{y} \in \vars \setminus \{\progvar{x}\}$.
$\progvar{x}_1 \mapsto c_1,\ldots,\progvar{x}_n\mapsto c_n$ denotes a program state in which, for all $i = 1,\ldots,n$, variable $\progvar{x}_i$ has value $c_i \in \mathbb{Z}$, and all $\progvar{x} \notin \{\progvar{x}_1,\ldots,\progvar{x}_n\}$ have value $0$.
\begin{definition}[Syntax]%
    \label{probprograms}%
	A \emph{probabilistic program} $\prog$ adheres to the grammar
	\begin{align*}
		\prog \quad&\Coloneqq\quad \ski && ~\mid~ \ass{\progvar{x}}{\arith} \tag*{\textcolor{gray}{(no-op $\mid$ assignment)}} \\
        & ~\mid~ \pchoice{\prog}{\probability}{\prog} && ~\mid~ \comp{\prog}{\prog} \tag*{\textcolor{gray}{(coin flip $\mid$ sequential composition)}} \\
        & ~\mid~ \ifelse{\bool}{\prog}{\prog} && ~\mid~ \while{\bool}{\prog} \tag*{\textcolor{gray}{(conditional choice $\mid$ loop)}}
        ~,
	\end{align*}
	where $\progvar{x} \in \vars$, $\probability \in [0,1]$, $\arith \colon \Sigma \rightarrow \mathbb{Z}$, and $\bool \colon \Sigma \to \bools$ is a predicate.
\end{definition}
We do not fix a concrete syntax for the \emph{expressions} $\arith$ and for the \prfont{if}- and \prfont{while}-\emph{guards} $\bool$ in our programs.
We abbreviate $\ifelse{\bool}{\prog}{\ski}$ to $\ifelseskip{\bool}{\prog}$.
A program $\prog$ is called \emph{loop-free} if it does not contain $\prfont{while}$, and \emph{deterministic} if it does not contain any coin flip constructs.

\paragraph{Distributions.}%
Let $S$ be a countable set (often $\Sigma$ in the remainder).
A function $\mu \colon S \to [0,1]$ with \emph{mass} $\weight\mu \coloneq \sum_{a \in S} \mu(a) \leq 1$ is called a \emph{distribution}.
We write $\Deltaspec{\sim \tau} S \coloneq \{\mu \colon S \to [0,1] \mid \weight{\mu} \leq 1 \land \weight{\mu} \sim \tau\}$ for the set of distributions with mass $\sim \tau$, where $\sim$ is a comparison relation and $\tau\in [0,1]$.
For example, $\Deltaspec{< 1} S$ is the set of \emph{sub-distributions} on $S$.
We define the shortcut $\Delta S \coloneq \Deltaspec{=1} S$ for the set of \emph{proper} distributions on $S$.
For $a \in S$, the \emph{Dirac} (sometimes also called \emph{point}) distribution $\dirac a \in \Delta S$ satisfies $\dirac{a}(a) = 1$.
For a distribution $\mu \in \Deltasub S$ and a predicate $\bool\colon S \to \bools$, we define $\prosim{\mu}{\bool}{}{} \coloneqq \sum_{s\models \bool} \mu(s)$, the probability that $\bool$ is satisfied in $\mu$.
The \emph{support} of $\mu \in \Deltasub S$ is $\supp(\mu) \coloneq \{s \in S \mid \mu(s) > 0\}$.

We endow the set of distributions $\Deltasub S$ with the (pointwise) partial order, i.e., for $\mu_1,\mu_2 \in \Deltasub S$, we write $\mu_1 \leq \mu_2$ iff for all $s \in S$, $\mu_1(s) \leq \mu_2(s)$.
Note that every nondecreasing infinite sequence of distributions ($\omega$-chain) has a supremum (least upper bound) in this order.

Addition $\mu_1 + \mu_2$ and multiplication $\mu_1 \cdot \mu_2$ of distributions $\mu_1,\mu_2 \in \Deltasub S$ are both defined componentwise (the result is not necessarily a distribution for $+$).
For a predicate $\bool \colon S \to \bools$, the \emph{Iverson bracket} $[\bool] \colon S \to [0,1]$ casts truth values to numbers, i.e., for all $s \in S$, $[\bool](s) = 1$ if $s \models \bool$, and $[\bool](s) = 0$ if $s \not\models \bool$.
Note that for $\mu \in \Deltasub S$, the product $[\bool] \cdot \mu \in \Deltasub S$ is a \emph{filtered} distribution which coincides with $\mu$ on all $s \models \bool$, and is zero elsewhere.
We also lift this notation to sets of distributions $M \subseteq \Deltasub S$:
$[\bool] \cdot M \coloneqq \{[\bool] \cdot \mu \mid \mu \in M\}$.
Furthermore, we define $\onlybdists \coloneqq \{\mu \in \Deltasub S \mid \forall s \in \supp(\mu)\colon s \models \bool\}$, the set of distributions that \emph{only} assign probability mass to elements $s$ with $s \models \bool$.

The \emph{marginal distribution} $\marg{X}{\mu} \in \Deltasub \Sigma_X$ of $\mu \in \subdists$ in $X \subseteq \vars$ is
\[
    \marg{X}{\mu} ~\coloneqq~ \lambda \sigma. \sum_{\substack{\tilde\sigma|_X = \sigma|_X}} \mu(\tilde\sigma)
    ~.
    \label{eq:margDef}
\]
We also write $\marg{x}{\mu}$ if $X = \{x\}$.
Marginalization is mass-preserving.
\ifappendixversion
    See \Cref{app:details} for additional preliminary notions.
\fi

\section{Strongest Postconditions and Reachable Distributions}%
\label{sec:post}%
In this section, we first review a standard denotational semantics $\denn$ for our programming language.
We consider $\denn$ as a \emph{ground truth} relative to which we verify our sampling programs.
Afterwards, we define \emph{distributional strongest postconditions} ($\postn$) and the notion of \emph{reachable distributions} during execution of a loop.

\begin{figure}[t]
    \renewcommand{\arraystretch}{1.2} 
    \begin{tabular}{ll}
        \toprule
        Program $\prog$ & Denotational semantics $\den{\prog}(\mu)$ \hfill\gray{($\mu \in \subdists$)} \\
        \midrule
        $\ski$ & $\mu$\\
        $\ass{\progvar{x}}{\arith}$ & $\lambda \sigma.\sum_{\sigma' \colon \sigma'[\progvar{x}/\arith(\sigma')] = \sigma} \mu(\sigma')$\\
        $\pchoice{\progOne}{\probability}{\progTwo}$ & $\den{\progOne} (\probability \cdot \mu) + \den{\progTwo} ((1-\probability) \cdot \mu)$\\
        $\comp{\progOne}{\progTwo}$ & $\den{\progTwo} (\den{\progOne} (\mu))$\\
        $\ifelse{\bool}{\progOne}{\progTwo}$ & $\den{\progOne} ([\bool] \cdot \mu) + \den{\progTwo} ([\neg \bool] \cdot \mu)$\\
        $\while{\bool}{\loopbody}$ & $(\lfp \Phi)(\mu), \textnormal{ where } \Phi(f) = \lambda \eta.[\neg \bool] \cdot \eta + f\big(\den{\loopbody}([\bool] \cdot \eta)\big)$ \\
        \bottomrule
    \end{tabular}
    \caption{%
        Denotational semantics.
        $\den{\prog}(\mu)$ is the output distribution of $\prog$ on input $\mu$.
    }
    \label{fig:den}
\end{figure}

\paragraph{Denotational Distribution Transformer Semantics.}%
The denotational semantics we consider views probabilistic programs as transformers of input distributions to output distributions~\cite{DBLP:conf/focs/Kozen79}.
The definition given here closely follows~\cite{DBLP:journals/ijfcs/HartogV02}.
\begin{definition}[Denotational Semantics]%
    \label{defn:den}%
	Let $\prog$ be a probabilistic program.
    We define the \emph{denotational semantics} ${\den{\prog}\colon \subdists \to \subdists}$ by induction on the structure of $C$ as in \Cref{fig:den}.
\end{definition}
The semantics $\denn$ satisfies the following~(see, e.g.,~\cite{DBLP:conf/lopstr/KlinkenbergBKKM20}):
(1) For every program $\prog$ and input distribution $\mu$ we have $\weight{\mu} \geq \weight{\den{\prog}(\mu)}$, i.e., programs do not generate probability mass.
Further, $\weight{\mu} > \weight{\den{\prog}(\mu)}$ is only possible if $\prog$ contains a loop; the missing probability mass accounts for the probability of non-termination.
(2) $\den\prog$ is \emph{linear}, i.e., $\den{\prog}(\mu_1 + \mu_2) = \den{\prog}(\mu_1) + \den{\prog}(\mu_2)$, and $\den{\prog}(\probability \cdot \mu) = \probability \cdot\den{\prog}( \mu)$ for $\probability \in [0,1]$.
(3) The least fixed point ($\mathsf{lfp}$) occurring the $\prfont{while}$-case can be determined via fixed point iteration which, intuitively, corresponds to unfolding the loop to ever-increasing depth.

\begin{figure}[t]
    \begin{align*}
        & \progAnno{\dirac{x\mapsto0,c\mapsto1}} \tag{initial distribution $\mu_0$} \\
        & \WHILE{c=1} \\
        & \qquad \pchoice{\ass{x}{x+1}}{\nicefrac 1 2}{\ass{c}{0}} \\
        & \prbracecl ~ \progAnno{\textstyle \sum_{j = 0}^\infty \frac{1}{2^{j+1}} \dirac{x\mapsto j, c\mapsto 0} \tag{denotational semantics $\mu_{\text{geo}} = \den{\prog_{\text{geo}}}(\mu_0)$}}
    \end{align*}
    \caption{%
        Program $\prog_{\text{geo}}$ generating a geometric distribution.
    }
    \label{counterex1}
\end{figure}

\paragraph{Distributional Strongest Postconditions.}%
When verifying algorithms, we often have to consider an entire \emph{set} of possible initial distributions and characterize the resulting set of output distributions.
\begin{definition}[Distributional $\postn$]%
    \label{defn:post}%
    The \emph{distributional strongest postcondition} transformer $\post{\prog}$ of probabilistic program $\prog$ is defined as
    \[
        \post{\prog} \colon 2^{\subdists} \to 2^{\subdists}~, \quad
        \post{\prog}(\initSet) ~\coloneq~ \big\{\den{\prog}(\mu) \mid \mu \in \initSet\big\}
        ~.
    \]
\end{definition}

\paragraph{Reachable Distributions.}%
Distributional invariants for probabilistic loops (to be defined in \Cref{sec:distis}) overapproximate the set of distributions \emph{reachable} during loop execution:
\begin{definition}[Reachable Distributions]%
    \label{defn:reach}%
	Let $\prog = \while{\bool}{\loopbody}$ be a loop and $\initSet \subseteq \subdists$.
    The distributions \emph{reachable} in $\prog$ from $\initSet$ are defined as
    \[
        \reach{\prog}(\initSet)
        ~\coloneqq~
        \lfp \LFPset.~\, \initSet ~\cup~ \post{\ifelseskip{\bool}{\loopbody}}(\LFPset)
        ~.
    \]
\end{definition}
Intuitively, $\reachn$ collects all distributions encountered along the way, i.e., those reached from some initial distribution in $\initSet$ after any \emph{finite} number of loop iterations.
Unrolling the fixed point yields an equivalent characterization:
\begin{corollary}%
    \label{cor:reach}%
	For every loop $\prog = \while{\bool}{\loopbody}$ and $\initSet \subseteq \subdists$ it holds that
    \[
    	\reach{\prog}(\initSet)
        ~=~
        \bigcup_{n \in \nats} \post{\ifelseskip{\bool}{\loopbody}}^n(\initSet)
        ~.
    \]
\end{corollary}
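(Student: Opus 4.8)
The plan is to read Corollary~\ref{cor:reach} as a direct instance of Kleene's theorem (\Cref{thm:kleene}). By \Cref{defn:reach}, $\reach{\prog}(\initSet)$ is the least fixed point of
\[
    F \colon 2^{\subdists} \to 2^{\subdists}, \qquad F(\LFPset) ~\coloneq~ \initSet ~\cup~ \post{\ifelseskip{\bool}{\loopbody}}(\LFPset),
\]
over the CCPO $(2^{\subdists}, \subseteq)$, a complete lattice whose bottom element is $\bigsqcup \emptyset = \emptyset$. So the whole argument reduces to (a) checking that $F$ is monotone and continuous, so that \Cref{thm:kleene} applies, and (b) evaluating the Kleene iterates $F^n(\emptyset)$.

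For (a), the key point---which is not isolated as a lemma in the excerpt but follows immediately from \Cref{defn:post}---is that $\post{\progTwo}$ is a direct-image operator and hence preserves \emph{arbitrary} unions: for every family $(\LFPset_i)_{i \in \mathcal{I}}$ of subsets of $\subdists$,
\[
    \post{\progTwo}\Big(\bigcup_{i \in \mathcal{I}} \LFPset_i\Big) ~=~ \big\{\den{\progTwo}(\mu) \mid \mu \in \textstyle\bigcup_{i} \LFPset_i\big\} ~=~ \bigcup_{i \in \mathcal{I}} \big\{\den{\progTwo}(\mu) \mid \mu \in \LFPset_i\big\} ~=~ \bigcup_{i \in \mathcal{I}} \post{\progTwo}(\LFPset_i),
\]
in particular $\post{\progTwo}(\emptyset) = \emptyset$. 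Since unioning with the fixed set $\initSet$ also commutes with nonempty unions, $F$ preserves all nonempty unions, hence is continuous; applying this to two-element chains shows $F$ is monotone as well. Thus \Cref{thm:kleene} yields $\reach{\prog}(\initSet) = \lfp F = \bigcup_{n \in \nats} F^n(\emptyset)$.

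For (b), I would prove by induction on $n$ that $F^{n}(\emptyset) = \bigcup_{k=0}^{n-1} \post{\ifelseskip{\bool}{\loopbody}}^{k}(\initSet)$, reading the empty union as $\emptyset$ when $n = 0$. The base case is $F^{0}(\emptyset) = \emptyset$. For the step, using $\post{\ifelseskip{\bool}{\loopbody}}(\emptyset) = \emptyset$ and the union-preservation just established,
\[
    F^{n+1}(\emptyset) ~=~ \initSet \cup \post{\ifelseskip{\bool}{\loopbody}}\Big(\bigcup_{k=0}^{n-1} \post{\ifelseskip{\bool}{\loopbody}}^{k}(\initSet)\Big) ~=~ \initSet \cup \bigcup_{k=1}^{n} \post{\ifelseskip{\bool}{\loopbody}}^{k}(\initSet) ~=~ \bigcup_{k=0}^{n} \post{\ifelseskip{\bool}{\loopbody}}^{k}(\initSet).
\]
Taking the union over all $n \in \nats$ collapses the nested union to $\bigcup_{n \in \nats} \post{\ifelseskip{\bool}{\loopbody}}^{n}(\initSet)$, which is exactly the claimed identity. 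I do not expect a genuine obstacle: the argument is routine once the union-preservation of $\post{}$ is made explicit, and the only place that needs a moment's care is the index shift $k \mapsto k+1$ in the inductive step.
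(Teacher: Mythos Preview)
Your proof is correct and follows the same approach as the paper, which simply states the corollary as an immediate consequence of Kleene's theorem without further elaboration. Your write-up fills in the details the paper omits---the union-preservation of $\postn$ as a direct-image operator, the continuity of $F$, and the explicit computation of the Kleene iterates---and does so soundly.
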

Since $\post{\prog}$ is monotonic with respect to $\subseteq$, so is $\reach{\prog}$, as an immediate consequence of \Cref{cor:reach}.
Perhaps counter-intuitively, the output distribution $\den{\prog}(\mu_0)$ of a loop $\prog$ on input $\mu_0$ is \emph{not} necessarily reachable in the sense of \Cref{defn:reach}:
\begin{example}[Reachable Distributions]%
    \label{ex:geodist}%
    Consider the program $\prog_{\text{geo}}$ (\Cref{counterex1}) generating the geometric distribution $\mu_{\text{geo}}$ with parameter $\nicefrac{1}{2}$.
    We have $\mu_{\text{geo}} \notin \reach{\prog_{\text{geo}}}(\{\mu_0\})$ because, intuitively, every reachable distribution assigns a strict positive probability to some program state $\sigma$ with $\sigma(\progvar{c}) = 1$.
    More precisely, $\reach{\prog}(\{\mu_0\}) = \{\mu_i \mid i \in \mathbb{N}\}$, where $\mu_i(\progvar{x} \mapsto j, \progvar{c} \mapsto 0) = \frac{1}{2^{j+1}}$ for all $0 \leq j < i$ and, crucially, $\mu_i(\progvar{x} \mapsto i, \progvar{c} \mapsto 1) = \frac{1}{2^i}$.
\end{example}

Despite \Cref{ex:geodist}, the exact output distribution can be recovered from the reachable distributions by means of a supremum:
\begin{restatable}[From $\reachn$ to $\denn$]{theorem}{postIsSup}%
    \label{thm:postIsSup}%
	Let $\prog = \while{\bool}{\loopbody}$ and $\initDist \in \subdists$.
    The distributions in the set $[\neg \bool] \cdot \reach{\prog}(\{\initDist\})$ form a nondecreasing sequence and
    \[
    	\den{\prog}(\initDist)
        ~=~
        \sup \, [\neg \bool] \cdot \reach{\prog}(\{\initDist\})
        ~.
    \]
\end{restatable}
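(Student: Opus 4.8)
The plan is to connect the Kleene iteration defining $\den{\prog}$ with the iteration defining $\reachn$ through one explicit formula. Write $\prog = \while\bool\loopbody$, let $\Phi$ be the characteristic function of the loop from \Cref{fig:den}, and abbreviate the guarded-iteration transformer
\[
  G ~:=~ \den{\ifelseskip\bool\loopbody} ~=~ \lambda\eta.\ [\neg\bool]\cdot\eta + \den{\loopbody}\bigl([\bool]\cdot\eta\bigr)~.
\]
By property (ii) of $\denn$, $G$ is linear, and in particular $\den{\loopbody}(\mathbf 0) = \mathbf 0$. Since $\postn$ sends a singleton to a singleton, a trivial induction gives $\post{\ifelseskip\bool\loopbody}^{n}(\{\initDist\}) = \{G^n(\initDist)\}$, so by \Cref{cor:reach} we have $\reach\prog(\{\initDist\}) = \{\,G^n(\initDist)\mid n\in\nats\,\}$ and therefore $[\neg\bool]\cdot\reach\prog(\{\initDist\}) = \{\,[\neg\bool]\cdot G^n(\initDist)\mid n\in\nats\,\}$.

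Next I would record the two elementary Iverson-bracket identities $[\bool]\cdot[\neg\bool]\cdot\mu = \mathbf 0$ and $[\neg\bool]\cdot[\neg\bool]\cdot\mu = [\neg\bool]\cdot\mu$, valid pointwise. Together with $\den{\loopbody}(\mathbf 0)=\mathbf 0$ they yield the crucial observation that every \emph{filtered} sub-distribution is a fixed point of $G$: $G([\neg\bool]\cdot\mu) = [\neg\bool]\cdot\mu + \den{\loopbody}(\mathbf 0) = [\neg\bool]\cdot\mu$, and hence $G^n([\neg\bool]\cdot\mu) = [\neg\bool]\cdot\mu$ for all $n$.

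The heart of the argument is the identity
\[
  \Phi^{n+1}(\mathbf 0)(\mu) ~=~ [\neg\bool]\cdot G^n(\mu)\qquad\text{for all }n\in\nats,\ \mu\in\subdists~,
\]
which I would prove by induction on $n$. The base case $n=0$ reads $\Phi(\mathbf 0)(\mu) = [\neg\bool]\cdot\mu$, immediate from the definition of $\Phi$ and the fact that $\mathbf 0$ is the constant zero-sub-distribution. For the step, unfold $\Phi^{n+2}(\mathbf 0)(\mu) = [\neg\bool]\cdot\mu + \Phi^{n+1}(\mathbf 0)\bigl(\den{\loopbody}([\bool]\cdot\mu)\bigr)$ and apply the induction hypothesis to the second summand; on the other side, expand $[\neg\bool]\cdot G^{n+1}(\mu) = [\neg\bool]\cdot G^n\bigl(\den{\loopbody}([\bool]\cdot\mu)\bigr) + [\neg\bool]\cdot G^n\bigl([\neg\bool]\cdot\mu\bigr)$ using linearity of $G^n$, and note that the last term collapses to $[\neg\bool]\cdot\mu$ by the fixed-point observation. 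The two sides then coincide.

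Finally I would assemble the statement. By the identity above, $[\neg\bool]\cdot\reach\prog(\{\initDist\}) = \{\Phi^{n+1}(\mathbf 0)(\initDist)\mid n\in\nats\}$. Since $\mathbf 0$ is the least element and $\Phi$ is monotonic, $(\Phi^n(\mathbf 0))_n$ is an $\omega$-chain, so $(\Phi^n(\mathbf 0)(\initDist))_n$ is an $\omega$-chain in $\subdists$; its range is totally ordered, and $[\neg\bool]\cdot\reach\prog(\{\initDist\})$ is a subset of it, hence a chain. For the supremum, re-including the (zero) term $\Phi^0(\mathbf 0)(\initDist) = \mathbf 0$ changes nothing, and suprema in the function CCPO underlying $\Phi$ are computed pointwise (which is exactly what lets Kleene iteration compute $\den{\while\bool\loopbody}$), so
\[
  \sup\,[\neg\bool]\cdot\reach\prog(\{\initDist\}) ~=~ \sup_{n}\Phi^{n}(\mathbf 0)(\initDist) ~=~ \bigl(\sup_n \Phi^n(\mathbf 0)\bigr)(\initDist) ~=~ (\lfp\Phi)(\initDist) ~=~ \den\prog(\initDist)~,
\]
the penultimate step being \Cref{thm:kleene}. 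The main obstacle is pinning down the inductive identity correctly — in particular recognizing that filtered sub-distributions are $G$-fixed points, which is precisely what makes the telescoping in the inductive step close; the rest is routine bookkeeping with linearity and Iverson brackets.
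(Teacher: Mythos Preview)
Your proof is correct and follows essentially the same route as the paper's: both establish, by induction, that the Kleene iterates $\Phi^{n}(\mathbf 0)$ agree (after filtering by $[\neg\bool]$) with the iterates $G^n$ of the guarded body, and then take the supremum. Your formulation is slightly cleaner---you state the identity $\Phi^{n+1}(\mathbf 0)(\mu)=[\neg\bool]\cdot G^n(\mu)$ as an equality of sub-distributions rather than pointwise on $\neg\bool$-states, derive the chain property from monotonicity of $\Phi$ rather than by a separate direct comparison, and your index shift sidesteps a small off-by-one in the paper's base case.
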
%
\ifappendixversion%
    \begin{proof}
        See \Cref{app:subsec:postIsSupProof}.
    \end{proof}
\fi

\section{Distributional Invariants}%
\label{sec:distis}%
While prior work~\cite{DBLP:journals/ijfcs/HartogV02,DBLP:conf/esop/BartheEGGHS18} studied distributional invariants for probabilistic loops, we were inspired by distributional reasoning in probabilistic finite-state models such as (explicitly represented) Markov chains~\cite{DBLP:conf/cav/AkshayCMZ23}.
In the following, we first review the relevant notions on Markov chains, then define distributional invariants on program level, and then relate both notions.%
\begin{definition}[Markov Chain]%
    \label{def:mc}%
    A \emph{Markov chain} is a tuple $\markc = \markctuple$ with countable \emph{state space} $\states$, and \emph{transition probability\footnote{Notice that $\prmdp(s)$ is allowed to be a \emph{sub}-probability distribution. This is necessary for appropriate handling of loops that terminate with probability $<1$; see \Cref{def:mdpOfLoop}.} function} $\prmdp\colon \states \rightarrow \Deltasub \states$.
\end{definition}
Overloading notation, we view a Markov chain $\markc$ as a distribution transformer similarly to the semantics $\denn$ from \Cref{sec:post}; that is, for every distribution $\mu \in \Deltasub \states$ we define its \emph{successor distribution} as $\markc(\mu) \coloneqq \sum_{s \in \states} \mu(s) \cdot \prmdp(s)$.
\begin{definition}[Distributional Markov Chain Invariant~\cite{DBLP:conf/cav/AkshayCMZ23}\footnote{The definition in \cite{DBLP:conf/cav/AkshayCMZ23} is phrased for the \emph{induced Markov chain} of a Markov decision process and excludes sub-distributions, i.e., applies only to proper distributions.}]%
    \label{defn:distInvMDP}%
    A set $\inv \subseteq \subdistsStates$ is a \emph{distributional invariant} w.r.t.\ $\markc = \markctuple$ and $\mu_0 \in \Deltasub \states$ if $\markc^n(\mu_0) \in \inv$ for all $n \in \nats$.
    We call $\inv$ \emph{inductive} (w.r.t.\ $\markc$) if $\forall \mu \in \inv \colon \markc(\mu) \in \inv$.
\end{definition}
Notice that every inductive set $\inv$ is a distributional invariant w.r.t.\ every initial distribution $\mu_0 \in \inv$.
Property \eqref{eq:vonNeumannInv} in \Cref{sec:intro} is an informal example of a distributional Markov chain invariant. 

\begin{figure}[t]
    \begin{minipage}{0.66\textwidth}
        \centering
        \begin{align*}
            & \progAnno{\tfrac 1 3 \dirac{x\mapsto 0, y \mapsto 0} + \tfrac 2 3\dirac{x\mapsto 1, y \mapsto 0}} \tag{initial distribution $\mu_0$} \\
            & \WHILE{y = 0} \\
            & \quad \IF{x=0} \\
            & \quad\quad \pchoice{\ass{x}{1}}{\nicefrac 1 2}{\,\pchoice{\ass{y}{1}}{\nicefrac 1 4}{\comp{\ass{x}{1}}{\ass{y}{1}}}\,} \\
            & \quad \ELSE \\
            & \quad\quad \pchoice{\ass{x}{0}}{\nicefrac 1 8}{\,\pchoice{\ass{y}{1}}{\nicefrac 3 7}{\comp{\ass{x}{0}}{\ass{y}{1}}}\,} \\
            & \quad \prbracecl \prspace \prbracecl
        \end{align*}
    \end{minipage}
    \hfill
    \begin{minipage}{0.33\textwidth}
        \centering
        \begin{tikzpicture}[scale=0.5, main/.style = {draw, circle, minimum size=0.2cm, inner sep=1pt}, ->, >=stealth]
            \node[main] (st1) at (0,0)     {$0\mid0$};
            \node[main] (st2) at (0,-4)    {$1\mid0$};
            \node[main,accepting] (st3) at (4,0)     {$0\mid1$};
            \node[main,accepting] (st4) at (4,-4)    {$1\mid1$};
            \draw (0,1.5)   -> (st1) node[midway, left]     {$\nicefrac{1}{3}$};
            \draw (0,-5.5)  -> (st2) node[midway, left]     {$\nicefrac{2}{3}$};
            \draw (st1) edge[bend left=0] node[right=-2pt]        {$\nicefrac{1}{2}$} (st2);
            \draw (st1) edge              node[above]        {$\nicefrac{1}{8}$} (st3);
            \draw (st1) edge              node[near end,below]  {$\nicefrac{3}{8}$} (st4);
            \draw (st2) edge[bend left=30] node[left]       {$\nicefrac{1}{8}$} (st1);
            \draw (st2) edge              node[near end,above]  {$\nicefrac{1}{2}$} (st3);
            \draw (st2) edge              node[below]        {$\nicefrac{3}{8}$} (st4);
            \path (st3) edge[loop above] node[right] {$1$} (st3);
            \path (st4) edge[loop below] node[right] {$1$} (st4);
        \end{tikzpicture}
    \end{minipage}
    \caption{%
        \textit{Left:} Probabilistic loop with initial distribution.
        \textit{Right:} Reachable fragment of the loop's big-step operational Markov chain. Terminal states have a double boundary.
    }
    \label{fig:operationalMC}
\end{figure}

\paragraph{Distributional Invariants for Probabilistic Loops.}%
Based on the $\reachn$-transformer (\Cref{defn:reach}), we now define distributional invariants for probabilistic loops.
Our definition parallels---and is inspired by---the one for Markov chains (\Cref{defn:distInvMDP}).
\begin{definition}[Distributional Loop Invariant]%
    \label{defn:distInv}%
    Let $\prog = \while{\bool}{\loopbody}$ and $\initSet \subseteq \subdists$. A set $\inv \subseteq \subdists$ such that $\reach{\prog}(\initSet) \subseteq \inv$ is called \emph{distributional loop invariant} w.r.t.\ $\prog$ and $\initSet$.
    Further, $\inv$ is \emph{inductive} if
    \[
        \forall \mu \in \inv \colon \den{\ifelseskip{\bool}{\loopbody}}(\mu) \,\in\, \inv
        \tag*{\textcolor{gray}{(i.e., $\inv$ is preserved by one guarded iteration)}.}
    \]
\end{definition}

\paragraph{From Probabilistic Loops to Markov Chains.}
We now establish a relationship between distributional invariants in Markov chains (\Cref{defn:distInvMDP}) and distributional loop invariants (\Cref{defn:distInv}).
In a nutshell, loop invariants are invariants in a certain Markov chain defined in the spirit of a big-step operational semantics:
\begin{definition}[Operational Markov Chain]%
    \label{def:mdpOfLoop}%
    The \emph{operational Markov chain} of a loop $\prog = \while{\bool}{\loopbody}$ is $\markcTr = (\Sigma, \prmdp)$, where $\prmdp(\sigma) \coloneqq \den{\ifelseskip{\bool}{\loopbody}}(\dirac \sigma)$.
\end{definition}
Intuitively, the operational Markov chain has a transition from every program state $\sigma$ satisfying the loop guard to every state reachable from $\sigma$ in \emph{one guarded loop iteration}. Let us consider a more formal example next.
\begin{example}[Invariants in Markov Chains]%
    \label{ex:tobiInvariant}%
    \Cref{fig:operationalMC} (right) depicts the fragment of the operational Markov chain of the program in \Cref{fig:operationalMC} (left) that is reachable from the initial distribution $\mu_0$.
    It can be verified that the Markov chain admits the following inductive invariant w.r.t.\ $\mu_0$:
    \begin{align*}
        \inv ~\coloneq~ \big\{ \mu \in \dists ~{}\mid{}~ & \pr{\mu}{\progvar{x} = 1 \land \progvar y = 0} ~=~ 2 \cdot \pr{\mu}{\progvar{x} = 0 \land \progvar y = 0} \\ 
        ~{}\land{}~ & \pr{\mu}{\progvar{x} = 1 \land \progvar y = 1} ~=~ \phantom{2{}\cdot{}}\pr{\mu}{\progvar{x} = 0 \land \progvar y = 1} ~\big\}
    \end{align*}
    By \Cref{thm:translation} (stated below), $\inv$ is thus also an inductive invariant for the program $\prog$ in \Cref{fig:operationalMC} (left).
\end{example}

Formally, the correspondence between distributional invariants in loops and Markov chains is as follows:
\begin{restatable}{theorem}{translation}%
    \label{thm:translation}%
	Let $\prog = \while{\bool}{\loopbody}$ with operational Markov chain $\mdpTr$.
    Further, let $\initDist \in \subdists$ and $\inv \subseteq \subdists$.
    Then:
    \begin{align*}
        & \inv \text{ is a distributional invariant for } \prog \text{ and } \{\initDist\} \text{ (Def.~\ref{defn:distInv})} \\
        \text{if and only if} \qquad & \inv \text{ is a distributional invariant for } \mdpTr \text{ and } \initDist \text{ (Def.~\ref{defn:distInvMDP})}
        ~.
    \end{align*}
    Furthermore, $\inv$ is inductive w.r.t.\ $\prog$ iff $\inv$ is inductive w.r.t.\ $\mdpTr$.
\end{restatable}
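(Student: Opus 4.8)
The plan is to reduce both claimed equivalences to a single identity: the one-step distribution transformer of the operational Markov chain $\mdpTr$ coincides with the denotational semantics $\den{\ifelseskip{\bool}{\loopbody}}$ of one guarded loop iteration. Concretely, I would first prove the following \emph{key lemma}: for every $\mu \in \subdists$,
\[
    \mdpTr(\mu) ~=~ \den{\loopbody}([\bool]\cdot\mu) + [\neg\bool]\cdot\mu ~=~ \den{\ifelseskip{\bool}{\loopbody}}(\mu)~.
\]
The second equality is immediate from \Cref{fig:den} since $\den{\ski}([\neg\bool]\cdot\mu) = [\neg\bool]\cdot\mu$. For the first, unfold $\mdpTr(\mu) = \sum_{\sigma\in\Sigma}\mu(\sigma)\cdot\prmdp(\sigma)$ and split the sum along $\bool$ using \Cref{def:mdpOfLoop}: the $\sigma\not\models\bool$ part contributes $\sum_{\sigma\not\models\bool}\mu(\sigma)\cdot\dirac\sigma = [\neg\bool]\cdot\mu$, while the $\sigma\models\bool$ part is $\sum_{\sigma\models\bool}\mu(\sigma)\cdot\den{\loopbody}(\dirac\sigma)$, which by linearity of $\den{\loopbody}$ (property~(ii) of $\denn$) equals $\den{\loopbody}\bigl(\sum_{\sigma\models\bool}\mu(\sigma)\cdot\dirac\sigma\bigr) = \den{\loopbody}([\bool]\cdot\mu)$.

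From the key lemma, a routine induction on $n$ yields $\mdpTr^n(\initDist) = \den{\ifelseskip{\bool}{\loopbody}}^n(\initDist)$ for all $n\in\nats$, and since $\post{\ifelseskip{\bool}{\loopbody}}^n(\{\initDist\}) = \{\den{\ifelseskip{\bool}{\loopbody}}^n(\initDist)\}$, \Cref{cor:reach} gives $\reach{\prog}(\{\initDist\}) = \{\,\mdpTr^n(\initDist)\mid n\in\nats\,\}$. The first equivalence then unfolds directly: $\reach{\prog}(\{\initDist\})\subseteq\inv$ holds iff $\mdpTr^n(\initDist)\in\inv$ for all $n$, i.e.\ \Cref{defn:distInv} becomes \Cref{defn:distInvMDP} verbatim. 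The inductivity equivalence is even more direct: $\post{\ifelseskip{\bool}{\loopbody}}(\inv)\subseteq\inv$ says $\den{\ifelseskip{\bool}{\loopbody}}(\mu)\in\inv$ for all $\mu\in\inv$, which by the key lemma is precisely $\mdpTr(\mu)\in\inv$ for all $\mu\in\inv$.

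I expect the only real subtlety to be justifying the interchange of the (countable) sum over $\sigma\models\bool$ with the application of $\den{\loopbody}$. This requires linearity of $\den{\loopbody}$ in its countably-infinite form, which I would derive from finite linearity (property~(ii)) together with $\omega$-continuity of $\den{\loopbody}$: fixing an increasing exhaustion $F_0\subseteq F_1\subseteq\cdots$ of the countable set $\{\sigma\mid\sigma\models\bool\}$, the partial sums $\sum_{\sigma\in F_k}\mu(\sigma)\cdot\dirac\sigma$ form an $\omega$-chain whose supremum is $[\bool]\cdot\mu$, and $\den{\loopbody}$ preserves this supremum. Everything else — the induction and the two definition-unfoldings — is bookkeeping.
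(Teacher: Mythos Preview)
Your proposal is correct and follows essentially the same route as the paper: the paper proves the iterated identity $\mdpTr^n(\mu_0) = \den{\ifelseskip{\bool}{\loopbody}}^n(\mu_0)$ by induction (embedding your one-step key lemma inside the inductive step rather than stating it separately), then unfolds both definitions exactly as you do. Your explicit treatment of the countable-linearity subtlety via $\omega$-continuity is a point the paper glosses over by simply invoking ``linearity of $\den{\loopbody}$'' on a countable sum of Diracs, so you are in fact being more careful there.
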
%
\begin{proof}[sketch]%
    The key observation is that the stream of distributions $\markcTr^n(\mu_0)$, $n \in \nats$, resulting from iterating $\markcTr$ on $\mu_0$ coincides with $\reach{\prog}(\{\mu_0\})$.
    The proof also exploits linearity of $\denn$\ifappendixversion~(see \Cref{subsec:prooftranslation} for details)\fi.
\end{proof}

\section{Invariant Verification Framework}%
\label{sec:verification}%
\paragraph{Proof Rules for Loops.}%
Our framework uses proof rules for reasoning about loops. It is convenient to formulate these rules in Hoare triple notation.
\begin{definition}[Hoare Triples~\cite{DBLP:journals/ijfcs/HartogV02}]%
    \label{def:hoareTriples}%
    Let $\hSetOne, \hSetTwo \subseteq \subdists$ and $\prog$ be a program.
    A \emph{Hoare triple} $\tripleGen$ is \emph{valid} (denoted: $\models\tripleGen$) if $\post{\prog}(\hSetOne) \subseteq \hSetTwo$.
\end{definition}

We provide one proof rule each for \emph{partial} and \emph{total} correctness in \Cref{fig:whileRulesMainSec}.
Note that a program $\prog$ is \emph{almost surely terminating (AST)} w.r.t.\  $\initSet \subseteq \subdists$ if $\forall \mu \in \initSet\colon \weight{\mu} = \weight{\den{\prog}(\mu)}$. Furthermore, the \emph{$\omega$-closure} of $\initSet$ is defined as
\[
	\closOf\initSet
    ~\coloneqq~
    \big\{ \text{sup}_{n \in \nats}\, \mu_n \mid (\mu_n)_{n \in \nats} \text{ is an $\omega$-chain in }\initSet\big\}
    ~.
\]
\ifappendixversion%
    The rules in \Cref{fig:whileRulesMainSec} are proved sound in \Cref{app:hoareRules}.
\fi

Our proof rules are inspired by and complement existing Hoare logics~\cite{DBLP:journals/ijfcs/HartogV02,DBLP:conf/esop/BartheEGGHS18} (see \Cref{sec:relwork} for further discussion).

\begin{figure}[t]
    \begin{gather*}
        \frac{\triple{\inv}{\ifelseskip{\bool}{\loopbody}}{\inv}}
        {\triple{\inv}{\while{\bool}{\loopbody}}{\closOf{[\neg \bool] \inv}}}
        \tag{\textsc{While-P}}\label{WhileP} \\[1em]
        \frac{%
            \triple{\inv}{\ifelseskip{\bool}{\loopbody}}{\inv}
            \qquad
            \prog \text{ AST w.r.t.\ } \inv
            \qquad
            \tau = \inf_{\mu \in \inv} \weight{\mu}
        }%
        {\triple{\inv}{\while{\bool}{\loopbody}}{\closOf{[\neg \bool] \inv} \cap \Deltaspec{\geq \tau}\Sigma}}
        \tag{\textsc{While-T}}\label{WhileT}
    \end{gather*}
    \caption{Proof rules for reasoning about loops.}
    \label{fig:whileRulesMainSec}
\end{figure}

\paragraph{Injective Assignments.}%
We now present a simplification technique that can be applied to assignments \emph{injective in \progvar{$x$}}, i.e., assignments $\ass{\progvar{x}}{\arith}$ such that for any two program states $\sigma \neq \sigma'$, it holds that $\sigma[\progvar{x}/\arith(\sigma)] \ne \sigma'[\progvar{x}/\arith(\sigma')]$.
Such assignments induce a \emph{partial inverse} $\arith^{-1}_\progvar{x}$%
\ifappendixversion%
    ~(see  \Cref{def:partialInverse} in \Cref{app:hoareRules} for details)%
\else%
    (see \cite{arxiv} for details)%
\fi.
\begin{definition}[Substitution Operator for Distributions]%
    \label{defn:substForDists}%
    Let $\arith\colon \Sigma \dashrightarrow \mathbb{Z}$ be a partial function, $\progvar{x} \in \vars$, and $\mu \in \subdists$.
    We define the \emph{substitution}
    \[
    	\mu[\progvar{x}/\arith]
        ~\coloneqq~
        \lambda \sigma.
    	\begin{cases}
    		\mu(\sigma[\progvar{x}/\arith(\sigma)]) & \textnormal{if } \arith(\sigma) \textnormal{ is defined} \\
    		0 & \textnormal{otherwise.}
    	\end{cases}
    \]
\end{definition}
For example, we have $\dirac{\progvar x \mapsto 1}[x/x+1] = \dirac{\progvar x \mapsto 0}$.
Substitution is not mass preserving; e.g., $\dirac{\progvar x \mapsto 1}[x/2 x] = 0$ because there does not exist $x \in \mathbb Z$ such that $2x = 1$.
\begin{restatable}[Injective Assignments]{lemma}{injassignments}%
    \label{lem:injassignments}%
    Let $\ass{\progvar x}{\arith}$ be an assignment where $\arith$ is injective in $\progvar x$ with partial inverse $\arith^{-1}_\progvar{x}$. 
    Then it holds for all $\mu \in \subdists$ that
    \[
        \den{\ass{\progvar{x}}{\arith}}(\mu)
        ~=~
        \mu[\progvar x / \arith^{-1}_\progvar{x}]
        ~.
    \]
\end{restatable}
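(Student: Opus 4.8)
The plan is to unfold both sides of the claimed equation at an arbitrary program state $\sigma \in \Sigma$ and show they agree pointwise. Recall from \Cref{fig:den} that $\den{\ass{\progvar{x}}{\arith}}(\mu) = \lambda \sigma. \sum_{\sigma' \colon \sigma'[\progvar{x}/\arith(\sigma')] = \sigma} \mu(\sigma')$, so the left-hand side evaluated at $\sigma$ is a sum of $\mu(\sigma')$ over all preimages $\sigma'$ of $\sigma$ under the update map $f\colon \sigma' \mapsto \sigma'[\progvar x / \arith(\sigma')]$. On the right-hand side, unfolding \Cref{defn:substForDists} gives $\mu[\progvar x / \arith^{-1}_\progvar x](\sigma) = \mu(\sigma[\progvar x / \arith^{-1}_\progvar x(\sigma)])$ whenever $\arith^{-1}_\progvar x(\sigma)$ is defined, and $0$ otherwise. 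So the core of the argument is a bijection-of-indices claim: the set $\{\sigma' \mid \sigma'[\progvar x / \arith(\sigma')] = \sigma\}$ is either empty (exactly when $\arith^{-1}_\progvar x(\sigma)$ is undefined) or the singleton $\{\sigma[\progvar x / \arith^{-1}_\progvar x(\sigma)]\}$.

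The key observation driving this is that any $\sigma'$ with $f(\sigma') = \sigma$ must agree with $\sigma$ on every variable other than $\progvar x$: indeed $f$ only ever changes the value of $\progvar x$, so $\sigma'(\progvar y) = \sigma(\progvar y)$ for all $\progvar y \neq \progvar x$. Hence $\sigma'$ is determined by its value $c \coloneqq \sigma'(\progvar x)$, i.e.\ $\sigma' = \sigma[\progvar x / c]$, and the equation $f(\sigma') = \sigma$ becomes $\sigma[\progvar x / c][\progvar x / \arith(\sigma[\progvar x/c])] = \sigma$, which holds iff $\arith(\sigma[\progvar x / c]) = \sigma(\progvar x)$, i.e.\ iff $\arith_{\sigma, \progvar x}(c) = \sigma(\progvar x)$. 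By injectivity of $\arith_{\sigma,\progvar x}$ (from \Cref{def:partialInverse}), this equation has at most one solution $c$, and it has a solution iff $\sigma(\progvar x)$ is in the range of $\arith_{\sigma,\progvar x}$, which is precisely the condition that $\arith^{-1}_\progvar x(\sigma) = \arith_{\sigma,\progvar x}^{-1}(\sigma(\progvar x))$ is defined; in that case the unique solution is $c = \arith^{-1}_\progvar x(\sigma)$.

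Putting these together: if $\arith^{-1}_\progvar x(\sigma)$ is undefined, the preimage set is empty, so the left-hand side is the empty sum $0$, matching the right-hand side. If $\arith^{-1}_\progvar x(\sigma)$ is defined, the preimage set is exactly $\{\sigma[\progvar x / \arith^{-1}_\progvar x(\sigma)]\}$, so the left-hand side is $\mu(\sigma[\progvar x / \arith^{-1}_\progvar x(\sigma)])$, which is exactly the right-hand side by \Cref{defn:substForDists}. Since $\sigma$ was arbitrary, the two sub-distributions coincide. One should also note in passing that the hypothesis "$\arith$ injective in $\progvar x$" is what makes $\arith^{-1}_\progvar x$ well-defined as a partial function in the first place, so the statement is sensible.

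I do not expect any serious obstacle here; the whole proof is bookkeeping with program-state updates. The one spot requiring a little care is the reduction from "$\sigma'$ arbitrary with $f(\sigma') = \sigma$" to "$\sigma' = \sigma[\progvar x/c]$ for the relevant $c$" — one must explicitly justify that $\sigma'$ and $\sigma$ agree off $\progvar x$ before substituting, rather than assuming it. The identity $\sigma[\progvar x / c][\progvar x / d] = \sigma[\progvar x / d]$ and the characterization of when $\sigma[\progvar x / d] = \sigma$ (namely $d = \sigma(\progvar x)$) are the only algebraic facts needed, both immediate from the definition of state update.
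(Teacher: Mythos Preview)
Your proposal is correct and follows essentially the same approach as the paper: both arguments reduce the sum in $\den{\ass{\progvar{x}}{\arith}}(\mu)(\sigma)$ to at most one term by identifying the preimage set with the singleton $\{\sigma[\progvar x / \arith^{-1}_\progvar x(\sigma)]\}$ (or the empty set), which is exactly what the paper packages as the observation that $\lambda \sigma.\,\sigma[\progvar x/\arith(\sigma)]$ has partial inverse $\lambda \sigma.\,\sigma[\progvar x/\arith^{-1}_\progvar x(\sigma)]$. Your write-up is in fact more explicit than the paper's, spelling out why any preimage must agree with $\sigma$ off $\progvar x$ and then invoking injectivity of $\arith_{\sigma,\progvar x}$, whereas the paper simply asserts the partial-inverse observation without justification.
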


\paragraph{Unmodified Variables.}%
The following is a useful observation:
\begin{restatable}[Unmodified Variables]{lemma}{unmodvars}%
    \label{thm:unmodvars}%
    Let $\prog$ be a program with initial distribution $\mu_0 \in \subdists$ and let $\mu = \den{\prog}(\mu_0)$.
    For all $\progvar{x} \in \unmod{\prog}$ and $z \in \mathbb{Z}$,
    \[
        \prosim{\mu}{\progvar{x}}{=}{z} ~\leq~ \prosim{\mu_0}{\progvar{x}}{=}{z} ~,
    \] 
    where $\unmod{\prog}$ is the set of \emph{unmodified variables} in $\prog$.
\end{restatable}
In \Cref{thm:unmodvars}, we obtain an inequality due to potential non-termination of $\prog$.
In the following two sections, we apply the techniques from this section to two non-trivial sampling algorithms.

\section{The Fast Dice Roller}%
\label{sec:fdr}%

\begin{figure}[t]
    \begin{minipage}{0.49\textwidth}
        \setlength{\jot}{2pt} 
        \begin{align*}
            &\progAnno{\Mfdr = \{\dirac{v \mapsto 1, c\mapsto 0, n \mapsto \numb{n}} \mid  \numb{n} > 0 \}} \\
            &\WHILE{v < n} \\
            &\qquad \ass{v}{2v} \SEMICLN \\
            &\qquad \pchoice{\ass{c}{2c}}{\nicefrac 1 2}{\ass{c}{2c+1}} \SEMICLN \\
            &\qquad \IF{c \geq n} \\
            &\qquad	\qquad \ass{v}{v-n} \SEMICLN \ass{c}{c-n} \\[-3pt]
            &\qquad \prbracecl \\[-3pt]
            &\prbracecl
        \end{align*}
    \end{minipage}
    \hfill
    \begin{minipage}{0.49\textwidth}\centering
        \begin{tikzpicture}[->,scale=0.9,every state/.style={inner sep=0pt}]
            \node[state,initial,initial where=above,initial text=](1) at (0,0)  {$1,0$};
            \node[state](2) at (-2,-1) {$2,0$};
            \node[state,accepting](3) at (-3,-2) {$4,0$};
            \node[state](5) at (2,-1) {$2,1$};
            \node[state,accepting](6) at(-1,-2) {$4,1$};
            \node[state,accepting](8) at(1,-2) {$4,2$};
            
            \draw(1)-- node[above] {$\nicefrac 1 2$} (2);
            \draw(1)edge[bend right=10] node[below] {$\nicefrac 1 2$}(5);
            \draw(2)--node[auto] {$\nicefrac 1 2$}(3);
            \draw(2)--node[auto] {$\nicefrac 1 2$}(6);
            \draw(5)--node[auto] {$\nicefrac 1 2$}(8);
            
            \draw(5) to [bend right=10] node[above] {$\nicefrac 1 2$} (1);
            
            \path  (3)   edge[loop below] node[auto]  {\scriptsize$1$} (3);
            \path  (6)   edge[loop below] node[auto]  {\scriptsize$1$} (6);
            \path  (8)   edge[loop below] node[auto]  {\scriptsize$1$} (8);
        \end{tikzpicture}
    \end{minipage}
    \caption{%
        \textit{Left:} Program $\progfdr$.
        \textit{Right:} Markov chain $\mathcal{M}_{\progfdr}$ (fragment; $\numb{n} = 3$). The state $(4,3)$ is not reached, as it satisfies $c < n$.
    }
    \label{fig:MarkovChainFDR3}
\end{figure}

The \emph{Fast Dice Roller} (FDR)~\cite{DBLP:journals/corr/abs-1304-1916} takes an integer $\textvar{n} >0$ as input and outputs any integer $\textvar c \in \{0,\dots,\textvar{n}-1\}$ with uniform probability $\nicefrac{1}{\textvar{n}}$.
We verify the FDR as stated in \Cref{fig:MarkovChainFDR3} (left), a slight adaptation of the original program~\cite{DBLP:journals/corr/abs-1304-1916} to suit our syntax.
The FDR solves the uniform sampling problem using an optimal (expected) number of \emph{fair} coin flips~\cite{DBLP:journals/corr/abs-1304-1916} (we do not verify optimality here).

The FDR operates on $\varsfdr = \{\progvar{v}, \progvar{c}, \progvar{n}\}$, where \emph{values} are denoted $\numb v, \numb c, \numb n \in \mathbb{Z}$.
The variables $\textvar{v}$ and $\textvar{c}$ are initialized with $1$ and $0$, respectively.
The read-only variable $\textvar{n} \in \unmod{\progfdr}$ is initialized with an arbitrary $\numb{n}> 0$.
We thus consider the \emph{infinite} set of initial (Dirac) distributions $\Mfdr = \{\dirac{v \mapsto 1, c\mapsto 0, n \mapsto \numb{n}} \mid  \numb{n} > 0 \}$.
As $n$ is not modified, the operational Markov chain $\markc_\progfdr$ consists of infinitely many disconnected fragments: one fragment for each value $\numb{n}$.
\ifappendixversion%
    The fragments reachable from state $n \mapsto \numb n, v\mapsto 1,c \mapsto 0$ for $\numb n = 3$, $\numb n = 5$, and $\numb n = 9$ are depicted in \Cref{fig:MarkovChainFDR3}, \Cref{fig:fdrIntro}, and \Cref{fig:MarkovChainFDR9} in \Cref{app:practical}, respectively.
\else%
    The fragments reachable from state $n \mapsto \numb n, v\mapsto 1,c \mapsto 0$ for $\numb n = 3$, $\numb n = 5$, and $\numb n = 9$ are depicted in \Cref{fig:MarkovChainFDR3}, \Cref{fig:fdrIntro}, and \cite[Appendix]{arxiv}, respectively.
\fi

For $x \in \vars$ and integer $m \geq 0$, let the (discrete) \emph{uniform distribution} on $\{0,\ldots,m\}$ be defined as $\unif{x}{0}{m} \coloneq \frac{1}{m+1} \cdot [0 \leq x \leq m]$.
Formally, our goal is to prove validity of the Hoare triple $\triple{\Mfdr}{\progfdr}{\hSetTwo_\textsc{fdr}}$, where
\[
    \hSetTwo_\textsc{FDR}
    ~\coloneq~
    \{\mu \in \dists ~\mid~ \exists \numb n > 0 \colon \pr{\mu}{n=\numb n} = 1 \land \marg{c}{\mu} = \unif{c}{0}{\numb n-1}\}
    ~.
\]
Notice that the marginalization operator $\mathbb{M}_{c}$ operator (defined on page \pageref{eq:margDef}) discards the variables which are irrelevant for stating the correctness of the FDR.

\begin{figure}[t]
    \setlength{\jot}{1pt} 
    \begin{align*}
        & \progAnno{[v<n] \cdot \mu} \\
        & \ass{v}{2v} \SEMICLN \\
        & \progAnno{[\tfrac v 2 < n] \cdot \mu[v / \tfrac v 2]} \\ 
        & \pchoice{\ass{c}{2c}}{\nicefrac 1 2}{\ass{c}{2c+1}} \SEMICLN \\
        & \progAnno{\tfrac 1 2 \cdot [\tfrac v 2 < n] \cdot \left(\mu[v/\tfrac v 2, c / \tfrac c 2] + \mu[v/\tfrac v 2, c / \tfrac{c-1}{2}]\right) } \\ 
        & \IF{c \geq n} \\
        & \qquad \progAnno{\tfrac 1 2 \cdot [\tfrac v 2 < n \land c \geq n] \cdot \left(\mu[v/\tfrac v 2, c / \tfrac c 2] + \mu[v/\tfrac v 2, c / \tfrac{c-1}{2}]\right) } \\ 
        & \qquad \ass{v}{v-n} \SEMICLN \\
        & \qquad \progAnno{\tfrac 1 2 \cdot [\tfrac{v+n}{2} < n \land c \geq n] \cdot \left(\mu[v/\tfrac{v+n}{2}, c / \tfrac c 2] + \mu[v/\tfrac{v+n}{2}, c / \tfrac{c-1}{2}]\right) } \\ 
        & \qquad \ass{c}{c-n} \\
        & \qquad \progAnno{\tfrac 1 2 \cdot [\tfrac{v+n}{2} < n \land c \geq 0] \cdot \left(\mu[v/\tfrac{v+n}{2}, c / \tfrac{c+n}{2}] + \mu[v/\tfrac{v+n}{2}, c / \tfrac{c+n-1}{2}]\right) } \\ 
        & \prbracecl \\
        & \progAnno{\tfrac 1 2 \cdot [\tfrac{v+n}{2} < n \land c \geq 0] \cdot \left(\mu[v/\tfrac{v+n}{2}, c / \tfrac{c+n}{2}] + \mu[v/\tfrac{v+n}{2}, c / \tfrac{c+n-1}{2}]\right)} \\[-2pt]
        & \qquad \progAnnoNoSlashes{ + \tfrac 1 2 \cdot [\tfrac v 2 < n \land c < n] \cdot \left(\mu[v/\tfrac v 2, c / \tfrac c 2] + \mu[v/\tfrac v 2, c / \tfrac{c-1}{2}]\right)}
    \end{align*}
    \caption{Derivation of {$\den{\loopbody}([\textvar{v} < \textvar{n}] \mu)$}, where $\loopbody$ is the loop body of the FDR.}
    \label{nextfdr}
\end{figure}

We propose the following distributional loop invariant for $\progfdr$ and $\Mfdr$:
\begin{align*}
    \Ifdr & \eeq \Big\{  \mu \in \dists  ~\mid~  \exists \numb{n} > 0 \colon \prosim{\mu}{\progvar{n}}{=}{\numb{n}} ~=~ 1 \tag{a}\label{fdrA} \\
    &\lland \pr{\mu}{0 < \progvar{v} < 2n-1} ~=~ 1 \tag{b}\label{fdrB} \\
    &\lland \pr{\mu}{0 \leq \progvar{c} < \minim{\progvar{v}}{n}} ~=~ 1 \tag{c}\label{fdrC} \\
    &\lland \forall \sigma, \sigma' \models 0 \,{\leq}\, c \,{<}\, \minim{\progvar v}{n} \land n \,{=}\, \numb n \colon \sigma(\progvar{v}) {=} \sigma'(\progvar{v}) {\implies} \mu(\sigma) {=} \mu(\sigma')
    \Big\}
    \tag{d}\label{fdrD}
\end{align*}
The set $\Ifdr$ thus contains all \emph{proper} distributions satisfying conditions \eqref{fdrA} through \eqref{fdrD}; see \Cref{fig:fdrreach}.
Intuitively, \eqref{fdrA} ensures that the value of $\progvar n$ is fixed in every $\mu \in \Ifdr$.
Conditions \eqref{fdrB} and \eqref{fdrC} ensure bounds on $\textvar{v}$ and $\textvar{c}$, respectively.
In fact, \eqref{fdrA} through \eqref{fdrC} altogether ensure that each $\mu \in \Ifdr$ has \emph{finite support} (\Cref{fig:fdrreach}).
Finally, condition \eqref{fdrD} formalizes Lumbroso's original correctness argument:
\enquote{$\textvar{c}$ is uniformly distributed over $\{0,\dots,\textvar{v}-1\}$}~\cite{DBLP:journals/corr/abs-1304-1916}.
In our formulation of the FDR, variable $\textvar{c}$ is at all times uniformly distributed between $0$ and the \emph{minimum} of $\textvar{v} - 1$ and $\textvar{n} - 1$. The uniform distribution is established by forcing the probability masses assigned to all the states $\sigma$ with $\sigma \models 0 \leq c < \minim{\progvar v}{n}$ to be equal.
The four conditions are illustrated in \Cref{fig:fdrreach}.
\begin{restatable}[FDR Invariant]{lemma}{fdrinv}%
    \label{lem:fdrinv}%
    $\Ifdr$ is an inductive distributional invariant for $\progfdr$ and the set of initial distributions $\Mfdr = \{\dirac{v \mapsto 1, c\mapsto 0, n \mapsto \numb{n}} \mid  \numb{n} > 0 \}$.
\end{restatable}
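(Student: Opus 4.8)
The plan is to establish the two requirements of \Cref{defn:distInv}: the initialization $\Mfdr \subseteq \Ifdr$, and inductivity, i.e.\ $\den{\loopbody}([\bool]\mu) + [\neg\bool]\mu \in \Ifdr$ for every $\mu \in \Ifdr$, where $\bool \equiv (v < n)$ and $\loopbody$ denotes the loop body of $\progfdr$; together these show that $\Ifdr$ is an inductive invariant. The initialization amounts to a routine check of \eqref{fdrA}--\eqref{fdrD} on the Dirac distributions in $\Mfdr$, with \eqref{fdrD} holding vacuously since each support is a singleton.

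For inductivity, the structural idea I would use is that \eqref{fdrB}--\eqref{fdrD} say exactly that every $\mu \in \Ifdr$ with parameter $\numb n$ (fixed by \eqref{fdrA}) is a \emph{finite nonnegative combination of column indicators}
\[
    \mu \eeq \sum_{\numb v = 1}^{2\numb n - 2} q_{\numb v}\, U_{\numb v}, \qquad q_{\numb v} \ge 0, \qquad U_{\numb v} \coloneqq [\,v = \numb v \,\land\, n = \numb n \,\land\, 0 \le c < \minim{\numb v}{\numb n}\,] ,
\]
and, conversely, that any such combination of total mass $1$ lies in $\Ifdr$ (here \eqref{fdrB} bounds the range of $\numb v$, \eqref{fdrC} the per-column range of $c$, and \eqref{fdrD} forces uniformity within each column). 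Splitting along the guard, $[\neg\bool]\mu = \sum_{\numb v \ge \numb n} q_{\numb v} U_{\numb v}$ (each such column is already guard-free and of the required shape) and $[\bool]\mu = \sum_{\numb v < \numb n} q_{\numb v} U_{\numb v}$, so by linearity of $\denn$ (property~(ii) in \Cref{sec:post}) it suffices to compute $\den{\loopbody}(U_{\numb v})$ for $1 \le \numb v < \numb n$ and check it has the same form.

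I would obtain this by pushing $U_{\numb v}$ through the three statements of $\loopbody$, invoking \Cref{lem:injassignments} for each of the injective assignments $\ass{v}{2v}$, $\ass{c}{2c}$, $\ass{c}{2c+1}$, $\ass{v}{v-n}$, $\ass{c}{c-n}$ (here $(\numb v, d)$ abbreviates the state assigning $\numb v$ to $v$, $d$ to $c$, and $\numb n$ to $n$): $\ass{v}{2v}$ transports the column to index $2\numb v$; the fair coin splits each state $(2\numb v, d)$ into $(2\numb v, 2d)$ and $(2\numb v, 2d+1)$ with half the mass, so column $2\numb v$ then carries uniform mass $\tfrac12 q_{\numb v}$ over all $c \in \{0,\dots,2\numb v - 1\}$; and the conditional $\ifelseskip{c \ge n}{\ldots}$ leaves the states with $c < \numb n$ in place and folds the rest down to column $2\numb v - \numb n$ via $c \mapsto c - \numb n$. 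Distinguishing $2\numb v \le \numb n$ from $2\numb v > \numb n$ (and using $\numb v < \numb n$ to see that $2\numb v - \numb n < \numb n$ in the second case, so the folded block is a \emph{complete} column), this yields
\[
    \den{\loopbody}(U_{\numb v}) \eeq \tfrac12\, U_{2\numb v} ~+~ [\,2\numb v > \numb n\,] \cdot \tfrac12\, U_{2\numb v - \numb n} ~.
\]
Summing over $\numb v$ and adding $[\neg\bool]\mu$ back, $\den{\loopbody}([\bool]\mu) + [\neg\bool]\mu$ is again a finite nonnegative combination of column indicators; it has total mass $1$ because $\loopbody$ is loop-free and hence mass-preserving, and its $n$-marginal is still the Dirac at $\numb n$ because $n \in \unmod{\loopbody}$ (\Cref{thm:unmodvars}). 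Thus it lies in $\Ifdr$, which is inductivity.

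The step I expect to be the main obstacle is the bookkeeping behind the last display: one must track precisely which $c$-values survive the conditional and which are folded, depending on the sign of $2\numb v - \numb n$, and in particular verify that the folded block is exactly the full column $c \in \{0,\dots,2\numb v - \numb n - 1\}$ rather than a partial one — this is where the guard $\numb v < \numb n$ and the bound $v < 2\numb n - 1$ from \eqref{fdrB} are essential. One also has to check (routine, but not to be skipped) that the new indices $2\numb v$ and $2\numb v - \numb n$ stay within $\{1,\dots,2\numb n-2\}$ and that the per-column supports respect \eqref{fdrC}, so that \eqref{fdrB}--\eqref{fdrD} are genuinely re-established for the successor distribution.
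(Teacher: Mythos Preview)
Your proposal is correct and the overall two-part structure (initialization then inductivity) matches the paper, but your inductivity argument takes a genuinely different route.

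The paper keeps $\mu$ abstract and computes $\den{\loopbody}([\bool]\mu)$ \emph{symbolically}: it pushes $[\bool]\mu$ through the body using \Cref{lem:injassignments} to obtain a closed-form expression in terms of $\mu[v/\tfrac v 2,\ldots]$, $\mu[v/\tfrac{v+n}{2},\ldots]$, etc.\ (this is \Cref{nextfdr}), then splits that expression into pieces according to the Iverson brackets, simplifies to two disjoint summands $\mu_4,\mu_5$, and finally checks \eqref{fdrA}--\eqref{fdrD} on $\mu_4+\mu_5$ by a case-by-case calculation.

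You instead exploit the observation that \eqref{fdrA}--\eqref{fdrD} characterise $\Ifdr$ exactly as the set of mass-$1$ nonnegative combinations of the column indicators $U_{\numb v}$, and then use linearity of $\denn$ to reduce inductivity to the single-column transformation law $\den{\loopbody}(U_{\numb v}) = \tfrac12 U_{2\numb v} + [2\numb v > \numb n]\cdot \tfrac12 U_{2\numb v - \numb n}$. This is cleaner for a human reader and makes the ``column $\to$ column(s)'' dynamics visible---it is essentially the combinatorial content of the Markov-chain pictures (\Cref{fig:MarkovChainFDR3}, \Cref{fig:MarkovChainFDR9}). The paper's approach, by contrast, is more mechanical and closer to what one would automate: no structural insight about $\Ifdr$ is needed, one just grinds through the substitution calculus. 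One small point to make explicit in your write-up: $U_{\numb v}$ itself is not a sub-distribution (its mass is $\min(\numb v,\numb n)$), so when you invoke \Cref{lem:injassignments} and linearity you should work with $q_{\numb v} U_{\numb v}$, which \emph{is} in $\subdists$ since $\sum_{\numb v} q_{\numb v}\min(\numb v,\numb n)=1$.
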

\begin{proof}[sketch]
    The proof is split into two parts:
    \begin{enumerate}
        \item \textit{Containment of the initial distributions:}
        We argue that $\Mfdr \subseteq \Ifdr$. Let $\mu \in \Mfdr$. Since $\mu$ is a Dirac distribution, condition \eqref{fdrA} is satisfied.
        Condition \eqref{fdrD} is satisfied because, since $\mu$ is Dirac, there do not exist $\sigma \neq \sigma' \in \Sigma$ with $\mu(\sigma) > 0$ \emph{and} $\mu(\sigma') > 0$.
        Conditions \eqref{fdrB} and \eqref{fdrC} are trivially satisfied.
        \item \textit{Inductivity:}
        We show that for all $\mu \in \Ifdr$ it holds that $\den{\ifelseskip{\bool}{\loopbody}}(\mu) \in \Ifdr$ where $\bool = [\textvar{v} < \textvar{n}]$ and $\loopbody$ are the guard and the loop body of the FDR, respectively.
        We derive the sub-expression $\den{\loopbody}([\bool] \mu)$ in \Cref{nextfdr} by applying \Cref{lem:injassignments} repeatedly.
        \ifappendixversion%
            See \Cref{sec:fdrinv} for details.
        \fi%
    \end{enumerate}
\end{proof}

\begin{figure}[t]
    \begin{minipage}{0.55\textwidth}
        \begin{tikzpicture}[
            scale=0.85,
            every node/.style={font=\footnotesize},
            main/.style = {draw, circle split, minimum size=0.7cm, inner sep=1pt, align=center},
            axis/.style={thick,->},
            grid/.style={gray!30, thin},
            ellip/.style={red, thick}
        ]
            \fill[black!7] (0.5,-0.5) -- (6.7,-0.5) -- (6.7,3.5) -- (3.5,3.5) -- (3.5,2.5) -- (2.5,2.5) -- (2.5,1.5) -- (1.5,1.5) -- (1.5,0.5) -- (0.5,0.5) -- cycle;

            \foreach \x in {0,...,6} {
                \draw[grid] (\x,-0.5) -- (\x,3.5);
            }
            \foreach \y in {0,...,3} {
                \draw[grid] (-0.5,\y) -- (6.7,\y);
            }

            \draw[axis] (-0.3,-0.7) -- (6.9,-0.7);
            \draw[axis] (-0.5,-0.3) -- (-0.5,3.7) node[above=2pt] {$c$};

            \node[main] (00) at (0,0) {$0$\nodepart{lower}$0$};
            \node[main] (10) at (1,0) {$1$\nodepart{lower}$0$};
            \node[main] (01) at (0,1) {$0$\nodepart{lower}$1$};
            \node[main] (11) at (1,1) {$1$\nodepart{lower}$1$};
            \node[main] (20) at (2,0) {$2$\nodepart{lower}$0$};
            \node[main] (21) at (2,1) {$2$\nodepart{lower}$1$};

            \node[main] (n0) at (4,0) {$\numb{n}$\nodepart{lower}$0$};
            \node[main] (nn1) at (4,2.8) {$\numb{n}$\nodepart{lower}$\numb{n}{-}1$};
            \node[main] (2n0) at (6,0) {$2\numb{n}{-}2$\nodepart{lower}$0$};
            \node[main] (2nn1) at (6,2.8) {$2\numb{n}{-}2$\nodepart{lower}$\numb{n}{-}1$};

            \node at (3,0) {$\cdots$};
            \node at (5,0) {$\cdots$};
            \node at (4,1.5) {$\vdots$};

            \draw[ellip] (2,0.5) ellipse (0.5cm and 1.4cm);
            \draw[ellip] (4,1.5) ellipse (0.75cm and 2.4cm);

            \node[red] at (6.0, 1.1) {$\mu(\sigma) = \mu(\sigma')$};
            \node[red] at (3,0.8) {$\cdots$};
            \node[red] at (5.2,1.5) {$\cdots$};

            \node at (3,-1.05) {$v$};
        \end{tikzpicture}
    \end{minipage}
    \begin{minipage}{0.44\textwidth}
        In $\Ifdr$, condition \eqref{fdrA} fixes $\numb{n} \in \nats$, allowing us to focus on $(\textvar{v}, \textvar{c}) \in \mathbb{Z}^2$.
        Conditions \eqref{fdrB}--\eqref{fdrC} ensure that only finitely many states are reachable.
        By condition \eqref{fdrD}, the invariant assigns uniform probabilities per column for $\textvar{c} < \minim{\progvar{v}}{n}$ (red ellipses).
    \end{minipage}
    \caption{%
        Illustration of the invariant $\Ifdr$.
    }
    \label{fig:fdrreach}
\end{figure}

Using \Cref{lem:fdrinv} and the proof rules \ref{WhileP} and \ref{WhileT} from \Cref{sec:verification}, we can now derive the following partial and total correctness properties:

\begin{restatable}[FDR Correctness]{theorem}{fdrtotal}%
	\label{thm:fdrpartial}%
    \label{thm:fdrtotal}%
    Program $\progfdr$ is partially correct, i.e., for all $\mu_0 \in \Mfdr$ and $\numb{n} > 0$ such that $\prosim{\mu_0}{\progvar{n}}{=}{\numb{n}} = 1$, the final distribution $\mu = \den{\progfdr}(\mu_0)$ satisfies:
    \[
        \exists r \in [0,1] \colon \quad \marg{\progvar{c}}{\mu} \eeq r \cdot \unif{\progvar{c}}{0}{\numb{n}-1}
    ~.
    \]
    Moreover, if $\progfdr$ is AST w.r.t.\ $\Ifdr$, then program $\progfdr$ is totally correct, i.e.,
    \[
        \marg{\progvar{c}}{\mu} \eeq \unif{\progvar{c}}{0}{\numb{n}-1} ~.
    \]
\end{restatable}
\noindent%
Notice that the partial correctness statement in \Cref{thm:fdrpartial} ensures equal probabilities for each final value of $0 \leq \progvar c < \textvar{n}$, but does not imply anything about the termination probability.
We can prove that $\progfdr$ is AST using existing techniques~\cite{DBLP:journals/pacmpl/MajumdarS25,10.1145/3158121}%
\ifappendixversion%
    \ (see \Cref{app:fdrast}).
\else
    \ (see \cite[Appendix]{arxiv}).
\fi
This establishes total correctness of the FDR.

\section{The Fast Loaded Dice Roller}
\label{sec:fldr}

\begin{figure}[t]
    \begin{minipage}[t]{0.5\textwidth}
        \begin{align*}
            &\progAnno{\textnormal{Inputs}: a_1,\dots,a_n}\\
            &\ass{m}{\textstyle\sum_{i = 1}^n a_i} \SEMICLN \ass{k}{\lceil \log(m) \rceil} \SEMICLN \\
            & \ass{a_{n + 1}}{2^k - m} \SEMICLN\\
            & \texttt{for } j = 0,\dots,k ~\texttt{do} ~\prbraceop\\
            & \quad d := 2 \cdot \bound{h}{j-1} - 3\\
            & \quad \texttt{for } i = 1,\dots,n+1 ~\texttt{do} ~\prbraceop\\
            & \qquad  \texttt{bool} ~\ass{w}{(a_i >> k-j)} \SEMICLN \\
            & \qquad  \ass{h[j]}{h[j] + w} \SEMICLN\\
            & \qquad  \IF{w} \ass{H[d,j]}{i} \SEMICLN \ass{d}{d-1} \prbracecl\\
            & \quad \prbracecl~  \prbracecl
        \end{align*}
    \end{minipage}%
    \hfill
    \begin{minipage}[t]{0.45\textwidth}
        \begin{align*}
                &\progAnno{\Mfldr = \{\dirac{d \mapsto 0, c \mapsto 0, n \mapsto \numb{n}} \mid \numb{n} > 0\}}\\ 
                &\WHILE{\,d < \bound{h}{c}\,}\\
                &\quad \ass{c}{c+1} \SEMICLN\\
                &\quad \pchoice{\,\ass{d}{2d}\,}{\nicefrac 1 2}{\,\ass{d}{2d+1}\,} \SEMICLN \\
                &\quad \IF{\,H[d,c] = n + 1\,}\\
                &\qquad \progAnno{\text{reject and start over}} \\
                &\qquad	 \ass{c}{0} \SEMICLN \\
                &\qquad \ass{d}{0} \\
                &\quad \prbracecl\\
                &\prbracecl
        \end{align*}
    \end{minipage}
    \caption{\textit{Left:} Preprocessing for the FLDR. \textit{Right:} The FLDR program $\progfldr$. The output is given by the value of $H[d,c]$ upon program termination.}
    \label{translatedfLOADEDdrPre}
\end{figure}

The \emph{Fast Loaded Dice Roller} (FLDR)~\cite{DBLP:conf/aistats/SaadFRM20} samples a random integer from a given discrete probability distribution with finite support.
Specifically, given positive integers $a_1,\dots,a_n$ as input, the algorithm outputs $i \in \{1,\dots,n\}$ with probability $\frac{a_i}{m}$, where $m \coloneqq \sum_{i = 1}^n a_i$.
We use the tuple notation $(\frac{a_1}{m},\dots,\frac{a_n}{m})$ for the resulting output distribution.

Like the FDR from \Cref{sec:fdr}, the FLDR relies on fair coin flips only.
The high-level idea is to perform rejection sampling from the proposal distribution $(\frac{a_1}{2^k}, \dots, \frac{a_n}{2^k}, 1-\frac{m}{2^k})$, where $k = \lceil \log_2 m \rceil$.
Because the denominators are powers of $2$, this distribution can be sampled efficiently using only unbiased random bits.
While the FLDR does not achieve optimal runtime in terms of the number of coin flips, it can be implemented in a memory-efficient manner~\cite{DBLP:conf/aistats/SaadFRM20}.

First, a preprocessing (\Cref{translatedfLOADEDdrPre}, left) prepares suitable data structures: an array $h$ and a matrix $H$ (illustrated in \Cref{fig:bintree}). Then, the main program $\progfldr$ (\Cref{translatedfLOADEDdrPre}, right) is executed.
This program is a slight variant of the original FLDR from \cite{DBLP:conf/aistats/SaadFRM20}:
It is written in our syntax and contains less branches in the loop body.
The short-hand notation $\bound{h}{c} \coloneqq 2^c - \sum_{j=0}^{c} h[j] \cdot 2^{c-j}$ denotes the first position in row $c$ of matrix $H$ that contains a number other than $0$.
Next, we provide an intuition for the data structures $h$ and $H$, which encode a certain \emph{binary tree}.

\paragraph{The Binary Tree Underlying a Distribution.}%
Given a distribution $(p_1, \dots, p_n)$ with $n \geq 1$, its \emph{underlying binary tree} is constructed as follows:
For each $i$, a leaf labelled $i$ appears in row $c \geq 0$ if and only if the binary expansion of $p_i$ has a $1$ at position $c$ (e.g., the binary representation of $\frac{3}{8}$ is $0.011$, which has $1$s at positions $2$ and $3$).
Within each row, the leaves are ordered by their labels in descending order, which uniquely determines the structure of the binary tree.

The FLDR preprocessing effectively determines the binary tree underlying its proposal distribution $(\frac{a_1}{2^k}, \dots, \frac{a_n}{2^k}, 1-\frac{m}{2^k})$.
The tree is stored in the data structures $h$ and $H$ in the following manner:
$h[c]$ is the number of leaves in row $c$ of the tree, and we have $\H{d}{c} = j$ if in the $c$-th row, the $d$-th node is labelled by $j$; otherwise, $\H{d}{c} = 0$.
\begin{example}
    \label{ex:bintree}
    Consider the input $(a_1, a_2, a_3) = (3, 2, 1)$.
    The target output distribution is thus $(\frac{3}{6}, \frac{2}{6}, \frac{1}{6})$, and we have $m = 3 + 2 + 1 = 6$ and $k = \lceil \log_2 m \rceil = 3$.
    Since $2^k = 8$, the proposal distribution is $(\frac{3}{8}, \frac{2}{8}, \frac{1}{8}, \frac{2}{8})$ with binary encoding $(0.011, 0.01, 0.001, 0.01)$.
    \Cref{fig:bintree} shows the resulting binary tree for the proposal distribution as well as the data structures $h$ and $H$.
    
    \begin{figure}[t]
        \begin{center}
            \begin{tikzpicture}[node distance={20mm}, main/.style = {draw, shape = circle, minimum size = 0.1cm, inner sep=3pt},->,scale=0.7]
                \node[main, fill = black](1) at (0,0) {};
                
                \node[main, fill= black](2) at (-2,-0.7) {};
                \node[main, fill= black](3) at (2,-0.7) {};
                
                \node[main, fill= black](4) at (-3,-1.4) {};
                \node[main](5) at (-1,-1.4) {$4$};
                \node[main](6) at (1.3,-1.4) {$2$};
                \node[main](7) at (2.8,-1.4) {$1$};
                
                \node[main](8) at (-3.8,-2.1) {$3$};
                \node[main](9) at (-2.2,-2.1) {$1$};

                \draw(1) to (2);
                \draw(1) to (3);
                \draw(2) to (4);
                \draw(2) to (5);
                \draw(3) to (6);
                \draw(3) to (7);
                \draw(4) to (8);
                \draw(4) to (9);
                
                \node(t1) at (5,-1.2) {
                    \begin{tabular}{@{\hspace{6pt}} r @{\hspace{6pt}} r @{\hspace{6pt}}}
                    \toprule
                    $c$ & $h[c]$ \\
                    \midrule
                    0 & 0 \\
                    1 & 0 \\
                    2 & 3 \\
                    3 & 2 \\
                    \bottomrule
                \end{tabular}
                };
                \node(t2) at (10,-1.2) {
                    \begin{tabular}{@{\hspace{6pt}} r @{\hspace{8pt}} r @{\hspace{6pt}} r @{\hspace{6pt}} r @{\hspace{6pt}} r @{\hspace{6pt}}}
                    \toprule
                    $\H{d}{c}$ & $d = 0$ & 1 & 2 & 3 \\
                    \midrule
                    $c = 0$ & 0 &   &   &   \\
                    1       & 0 & 0 &   &   \\
                    2       & 0 & 4 & 2 & 1 \\
                    3       & 3 & 1 &   &   \\
                    \bottomrule
                \end{tabular}
                };
            \end{tikzpicture}
            \caption{
                \textit{Left:} Binary tree underlying the proposal distribution $(\frac{3}{8}, \frac{2}{8}, \frac{1}{8}, \frac{2}{8})$, which is $(0.011, 0.01, 0.001, 0.01)$ in binary.
                \textit{Right:} Data structures $h$ and $H$ storing the binary tree.
                $h$ determines the number of leaves in each row and $H$ stores the leaf labels.
            }
            \label{fig:bintree}
        \end{center}
    \end{figure}
\end{example}

Since $h,H$, $k$, and $m$ remain fixed (unmodified) after the preprocessing, we formally consider the program $\progfldr$ to operate on $\vars_{\textsc{fldr}} = \{c,d,n\}$.
The set of initial distributions is $\Mfldr = \{\dirac{d \mapsto 0, c \mapsto 0, n \mapsto \numb{n}} \mid \numb{n} > 0\}$.

\paragraph{A Distributional Invariant for the FLDR.}%
We first define the following auxiliary notation:
$
    \isThere{n, h, H}{c} \coloneqq [\H{\bound{h}{c}}{c} = n + 1]
$
is a Boolean ($0$ or $1$) indicating if row $c$ contains a leaf labeled $n+1$.
The probability to draw $i \in \{1,\ldots,n\}$ with zero rejections is
\[
    \prob{H}{i} ~\coloneqq~ \sum_{d,c\,:\,\H{d}{c} = i} 2^{-c}
\]
and with any number of rejections, the probability is
\[
    \probt{n, H}{i} ~\coloneqq~ \frac{\prob{H}{i} }{1-\prob{H}{n+1}}~.
\]
Similarly, the probability to draw $i$ without rejections starting at level $j$ of the tree is
\[
    \prob{H}{i, j} ~\coloneqq~ \sum_{d,c\,:\,\H{d}{c} = i,\,c > j} 2^{-(c-j)}
\]
and
$
    \probt{n, H}{i, j} \coloneqq \prob{H}{i, j} + \prob{H}{n+1, j} \cdot \probt{n, H}{i}
$
with rejections.
Also recall that $k$ has been defined above as $k = \lceil \log_2 m \rceil$.
We propose the following distributional invariant $\Ifldr$ for the FLDR:
{\allowdisplaybreaks\begin{align*}
    &\Ifldr ~\coloneqq~ \Big\{\mu \in \dists ~|~ \exists \numb{n} > 0\colon \big(\prosim{\mu}{\progvar{n}}{=}{\numb{n}} ~=~ 1 \tag{a}\label{fldrA}
    \\[3pt]
    &~\land~ \prosim{\mu}{0}{\leq}{\progvar{d} \leq \numb{n}} ~=~ 1\tag{b}\label{fldrB}
    \\[5pt]
    &~\land~ \prosim{\mu}{0}{\leq}{\progvar{c} \leq k} ~=~ 1\tag{c}\label{fldrC}
    \\[5pt]
    &\begin{aligned}
        &~\land~ \forall  i \in \{1,\ldots,\numb{n}\} \colon \\
        &\sum_{\sigma\colon \!\H{\sigma(d)}{\sigma(c)} = i} \!\!\!\!\!\!\!\!\Big( \mu(\sigma) + \textstyle\sum_{j=0}^{k-1} \mu(\sigma[d/0, c/j]) \cdot \probt{\numb{n}, H}{i,j} \Big) ~\leq~ \probt{\numb{n}, H}{i} 
    \end{aligned}\tag{d}\label{fldrD}
    \\[5pt]
    &\begin{aligned}
        &~\land~ \forall \sigma,\sigma' ~\models~ \big(\progvar{n} = \numb{n} ~\land~ 0 \leq \progvar{d} < \bound{h}{\progvar{c}}\big)\colon\\
        &\qquad\qquad \sigma(\progvar{c}) = \sigma'(\progvar{c}) \implies \mu(\sigma) = \mu(\sigma')
    \end{aligned}\tag{e}\label{fldrE}
    \\[5pt]
    &\begin{aligned}
        &~\land~ \forall \sigma,\sigma' ~\models~ \big(\progvar{n} = \numb{n} ~\land~ \bound{h}{\progvar{c}} + \isThere{\numb{n},h,H}{\progvar{c}} \leq \progvar{d} < \bound{h}{\progvar{c}} + h[\progvar{c}]\big) \colon
        \\[-4pt]
        & \qquad\qquad \sigma(\progvar{c}) = \sigma'(\progvar{c}) \implies \mu(\sigma) = \mu(\sigma') \qquad\qquad\qquad\qquad\qquad\qquad \big)\Big\}
    \end{aligned} \tag{f}\label{fldrF}
\end{align*}}%
To provide some intuition for the invariant $\Ifldr$, let us briefly explain the role of each condition.
As in the FDR invariant from \Cref{sec:fdr}, conditions \eqref{fldrA}--\eqref{fldrC} ensure that $\numb n$ is fixed, and that the variables $c$ and $d$ remain within the valid bounds determined by the number of rows and the width of the binary tree, respectively.
Conditions \eqref{fldrE} and \eqref{fldrF} enforce uniformity of probabilities among program states that correspond to nodes in the same row of the binary tree: specifically, \eqref{fldrE} ensures uniform probabilities for all internal nodes within a row, while \eqref{fldrF} does the same for output nodes, except for those labelled with the rejection value $n+1$ (which always have probability $0$ since the algorithm restarts upon reaching them).
Condition \eqref{fldrD} is unique to the FLDR invariant and is essential for inductiveness: it guarantees that, for each possible output $1 \leq i \leq n$, the total probability mass assigned to output nodes labelled $i$—both now and in all future loop iterations—never exceeds the desired output probability.
\ifappendixversion%
    See \Cref{subsec:InvariantFLDR} for further details.
\else
    See~\cite[Appendix]{arxiv} for further details.
\fi

Using our invariant, we obtain the following partial and total correctness results for the FLDR:
\begin{restatable}[FLDR Correctness]{theorem}{fldrtotal}%
    \label{thm:fldrtotal}%
    \label{thm:fldrpartial}%
    Program $\progfldr$ is partially correct, i.e., for all $\mu_0 \in \Mfldr$ and $\numb{n} \in \nats_{> 0}$ with $\prosim{\mu_0}{\progvar{n}}{=}{\numb{n}} = 1$, the output distribution $\mu = \den{\progfldr}(\mu_0)$ satisfies:
    \[
        \forall 1 \leq i \leq \numb{n}\colon \sum_{\sigma\colon \H{\sigma(d)}{\sigma(c)} = i} \mu(\sigma) ~\leq~ \frac{a_i}{m} ~.
    \]
        If $\progfldr$ is AST w.r.t.\ $\Ifldr$, then program $\progfldr$ is totally correct, i.e.,
    \[
        \forall 1 \leq i \leq \numb{n}\colon \sum_{\sigma\colon \H{\sigma(d)}{\sigma(c)} = i} \mu(\sigma) ~=~ \frac{a_i}{m} ~.
    \]
\end{restatable}
\noindent As for the FDR, we have a proof that our FLDR encoding $\progfldr$ is in fact AST, using  methods from \cite{DBLP:journals/pacmpl/MajumdarS25,10.1145/3158121}
\ifappendixversion%
    (see \Cref{app:fldrast})%
\else
    (see~\cite[Appendix]{arxiv})%
\fi.

\section{Related Work}%
\label{sec:relwork}%
This paper was inspired by recent work viewing explicit finite-state Markov models as distribution transformers~\cite{DBLP:conf/lics/AkshayGV18,DBLP:conf/cav/AkshayCMZ23,DBLP:conf/ijcai/0001CMZ24,DBLP:conf/birthday/AghamovBKNOPV25}.
The main difference to these works is that the programs we study naturally induce \emph{infinite} Markov chains or \emph{infinite families}.
We also focus on \emph{limiting} distributions rather than safety properties of the reachable distributions.

Hoare rules for loops based on distributional invariants have been studied in~\cite{DBLP:journals/ijfcs/HartogV02,DBLP:conf/esop/BartheEGGHS18}.
A technical difference to our rules (\Cref{fig:whileRulesMainSec}) is that we apply a closure operation to the \emph{postcondition} rather than requiring the \emph{invariant} to be already closed.
We thus impose fewer constraints on the invariant.
While this does not play a major role in our case studies, it could be beneficial in other examples.
Similar proof rules with an extension to nondeterminism are provided by Zilberstein et al. \cite{10.1145/3743131,Zilberstein2025}.
Their (demonic) \emph{outcome logic} enables reasoning about \emph{sets} of inputs and outcomes of a program similar to our $\postn$-transformer.

Several other theoretical frameworks exist that are---at least in principle---applicable to random sampler verification (i.e., they address challenges \ref{challenge1}--\ref{challenge3} from \cpageref{challenge1}).  
A prominent example is the \emph{weakest pre-expectation} (\wpRelWork) calculus~\cite{DBLP:series/mcs/McIverM05,DBLP:phd/dnb/Kaminski19,DBLP:conf/stoc/Kozen83}, a generalization of Dijkstra's classic weakest pre\emph{condition} calculus~\cite{DBLP:journals/cacm/Dijkstra75,DBLP:books/ph/Dijkstra76}.
Using \wpRelWork, the correctness of e.g., a uniform sampling program $\prog$ is expressed as
$\wpRelWork\llbracket\prog\rrbracket([\mathit{res} = i]) = \tfrac{1}{n} \cdot [0 \leq i < n]$, where $\mathit{res}$ stores the result and $i$ does not occur in $\prog$. 
However, proving this equality requires guessing and verifying an \emph{exact} least fixed point, which is notoriously difficult.
To our knowledge, no published \wpRelWork-based proof exists for either of our two case studies.
A program analysis using \emph{probability generating functions} was proposed in \cite{DBLP:conf/lopstr/KlinkenbergBKKM20}, but \emph{comparisons} between program variables (e.g., \( v < n \)) were identified as a fundamental obstacle~\cite{DBLP:conf/cav/ChenKKW22}.
Recently, Haase et al.~\cite{haase2026generatingfunctionsmeetoccupation} provided a novel correctness proof for the FDR program from \Cref{sec:fdr}.
Their proof uses a so-called \emph{occupation invariant}---an approach orthogonal to our distributional invariants---which requires reasoning about the expected number of visits to each program state.
The resulting invariant is arguably more complex and less intuitive than the distributional invariant we employ.
Bagnall et al.~\cite{DBLP:journals/pacmpl/Bagnall0023} describe \toolzar, a tool that automatically generates formally verified executable samplers from high-level probabilistic program specifications.
A \emph{type system} for verifying sampling algorithms built from pre-defined deterministic samplers based on infinite streams was proposed in \cite{DBLP:conf/lics/Dahlqvist0S23}. It appears promising to investigate whether the FDR could be verified assuming a pre-defined sampler for fair coin flips, though this would likely require extending their type system with unbounded iteration.

The related problem of inferring the \emph{posterior distribution} of probabilistic programs with \emph{conditioning} has also been studied extensively.
The tool \toolpsi~\cite{DBLP:conf/cav/GehrMV16} computes exact expressions for density functions but does not support unbounded while-loops~\cite[Sec.~3.6]{DBLP:conf/cav/GehrMV16}, which are required by the F(L)DR algorithms.
Other approaches \cite{wang2024static,DBLP:conf/pldi/BeutnerOZ22} support such loops, but provide only \emph{approximations}.

Additional related work includes \emph{testing} random samplers with statistical guarantees~\cite{DBLP:conf/aaai/ChakrabortyM19} and \emph{equivalence refutation} for probabilistic programs, which can prove \emph{incorrectness}, i.e., the presence of a bug in a given sampling algorithm~\cite{DBLP:journals/pacmpl/ChatterjeeGNZ24}.
Meel et al.~\cite{DBLP:conf/cav/SarkarCM25} assess the quality of binomial samplers and verify the program code directly using a statistical distance framework.

\section{Conclusion and Future Work}%
\label{sec:conclusion}%
This paper provided formal correctness proofs for two efficient, non-trivial sampler algorithms: the FDR and the FLDR, which are key to a broad range of applications. We presented a framework to verify quantitative properties of probabilistic programs by viewing programs as distribution transformers.

For future work, we plan to investigate assertion languages for distributions, which would make reasoning with distributional invariants more syntactic and thus more amenable to automation.
A particularly promising direction is to leverage the assertion language provided by Barthe et al.\ \cite{DBLP:conf/esop/BartheEGGHS18}, who also provide a weakest precondition-style approach for reasoning about reachable distributions.
Another opportunity for future work is to extend our verification framework toward probabilistic programs with non-determinism~\cite{DBLP:series/mcs/McIverM05}---this appears feasible since the current approach already deals with sets of distributions.

%
%
\ifappendixversion
    \newpage
    \appendix
    \section{Additional Preliminaries}
\label[appendix]{app:details}

\paragraph{Fixed Point Theory.}
A sequence of distributions $(\mu_n)_{n \in \nats}$ is \emph{ascending} if $\mu_n < \mu_{n+1}$ for every $n \in \nats$ and a sequence is called an $\omega$\emph{-chain} if $\mu_n \leq \mu_{n+1}$ for every $n \in \nats$.  The relation $\leq$ is to be understood componentwise, i.e. $\mu_1 \leq \mu_2$ holds iff $\mu_1(\sigma) \leq \mu_2(\sigma)$ for all $\sigma \in \Sigma$. Similarly, the \emph{supremum} of an $\omega$-chain is to be understood componentwise, i.e., $(\sup (\mu_n)_{n \in \nats})(\sigma) = \sup \{\mu_n(\sigma) ~|~ n \in \nats\} = \lim_{n \rightarrow \infty} \mu_n(\sigma)$.

A \emph{fixed point} of a function $f\colon D \rightarrow D$ is an element $d \in D$ such that $f(d) = d$. A \emph{least} fixed point is a fixed point $d$ of $f$ with $d \sqsubseteq d'$ for any fixed point $d'$ of $f$, with respect to a partial order $\sqsubseteq$ on $D$. Recall that a partial order is a reflexive, transitive and antisymmetric binary relation. A subset $S \sqsubseteq D$ is called a \emph{chain} if, for all $s_1, s_2 \in S\colon s_1 \sqsubseteq s_2$ or $s_2 \sqsubseteq s_1$. An element $d \in D$ is called an upper bound of $S$ if $s \sqsubseteq d$ for all $s \in S$. The least upper bound (supremum) of $S$, denoted $\bigsqcup S$ or $\sup S$, is an upper bound $d$ of $S$ such that for every upper bound $d'$ of $S$, we have $d \sqsubseteq d'$. A partial order on $D$ is called \emph{chain complete} (CCPO) if each of its chains has a least upper bound in $D$.

A function $F\colon D \rightarrow D'$ between partial orders $D$ and $D'$ respectively is called monotonic if, for every $d_1, d_2 \in D$ with $d_1 \sqsubseteq d_2$ it holds that $F(d_1) \sqsubseteq' F(d_2)$. A function $F$ is called continuous \cite{10.7551/mitpress/3054.001.0001} if, for every non-empty chain $S \subseteq D$, it holds that $F(\bigsqcup S) = \bigsqcup F(S)$. The fixpoint theorem by Kleene gives us insights about the least fixed point of a function.
\begin{theorem}[Kleene's Fixed Point Theorem~\cite{10.7551/mitpress/3054.001.0001}]%
    \label{thm:kleene}%
	Let $(D, \sqsubseteq)$ be a CCPO and $F\colon D \rightarrow D$ continuous. Then the least fixed point of $F$ is
	\[
		\lfp F ~=~ \bigsqcup\{F^n(\bigsqcup \emptyset) ~|~ n \in \nats\} ~.
	\]
\end{theorem}

    \section{Appendix to \Cref{sec:post}}
\label{app:post}

\paragraph{Comment on the definition of $\postn$:}
Unlike for e.g., $\den{\prog}$ and classic $\postn$ in deterministic programs (see e.g.,~\cite[Ch.~12]{DBLP:books/daglib/0067387}), we do not give a \enquote{direct} inductive definition of $\post{\prog}$. Notice that the following intuitive expression for the probabilistic choice case yields only an overapproximation of $\post{\prog}$ and thus does not infer an inductive definition:
\[
	\probability \cdot \post{\progOne}(\initSet) + (1 - \probability) \cdot \post{\progTwo}(\initSet) \supseteq \post{\pchoice{\progOne}{\probability}{\progTwo}}(\initSet) ~.
\]
It remains an open question whether such kind of definition for $\post{\prog}$ exists.

\subsection{Denotational Semantics: Additional Example}%

\begin{example}[Denotational Semantics]
    In \Cref{fig:denExample} (left) we determine the denotational semantics of a loop-free program for a given (Dirac) input distribution.
    \Cref{fig:denExample} (right) shows a loopy program---again with a fixed initial distribution---and the associated fixed point iteration.
    In this case, the iteration terminates after finitely many steps.
    The loop \enquote{leaks} half of the initial probability mass as it terminates only with probability $\nicefrac 1 2$.
\end{example}

\begin{figure}[t]
    \begin{minipage}{0.55\textwidth}
        \setlength{\jot}{1.25pt} 
        \begin{align*}
            %
            & \progAnno{\dirac{x \mapsto 0, y \mapsto 0}} \tag{initial distribution} \\
            & \IF{x = 0} \\
            & \quad \progAnno{\dirac{x \mapsto 0, y \mapsto 0}} \\ 
            & \quad \pchoice{\ass{x}{1}}{\nicefrac 1 2}{\ass{y}{1}} \\
            & \quad \progAnno{\tfrac 1 2 \dirac{x \mapsto 1, y \mapsto 0} + \tfrac 1 2 \dirac{x \mapsto 0, y \mapsto 1}} \\
            & \prbracecl \SEMICLN \progAnno{\tfrac 1 2 \dirac{x \mapsto 1, y \mapsto 0} + \tfrac 1 2 \dirac{x \mapsto 0, y \mapsto 1}} \\
            & \IF{x = 1} \\
            & \quad \progAnno{\tfrac 1 2 \dirac{x \mapsto 1, y \mapsto 0}} \\
            & \quad \pchoice{\ass{x}{0}}{\nicefrac 1 3}{\ass{y}{1}} \\
            & \quad \progAnno{\tfrac 1 6 \dirac{x \mapsto 0, y \mapsto 0} + \tfrac 1 3 \dirac{x \mapsto 1, y \mapsto 1}} \\
            & \prbracecl \, \progAnno{\tfrac 1 6 \dirac{x \mapsto 0, y \mapsto 0} + \tfrac 1 3 \dirac{x \mapsto 1, y \mapsto 1} + \tfrac 1 2 \dirac{x \mapsto 0, y \mapsto 1}} 
            %
        \end{align*}
    \end{minipage}
    \hfill
    \begin{minipage}{0.35\linewidth}
        \setlength{\jot}{1.25pt} 
        \begin{align*}
            & \progAnno{\dirac{x \mapsto 0, y \mapsto 1}} \tag{${}={} \mu_0$} \\
            & \WHILE{x<1} \\
            & \quad \pchoice{\ass{y}{0}}{\nicefrac 1 2}{\ass{x}{x+y}} \\
            & \prbracecl \\
            & \progAnno{\tfrac 1 2 \dirac{x \mapsto 1, y \mapsto 1}}
        \end{align*}
        
        {\setlength{\tabcolsep}{6pt}
        \renewcommand{\arraystretch}{1.1}
        \begin{tabular}{r l}
            $i$ & $\Phi^i(\mathbf{0})(\mu_0)$ \\ 
            \midrule
            $0$ & $0$ \\
            $1$ & $\dirac{x \mapsto 0, y \mapsto 1} \textcolor{gray}{~{}={}~ \mu_0}$ \\
            $2$ & $\tfrac 1 2 \dirac{x \mapsto 1, y \mapsto 1}$ \\
            $\geq 3$ & $\tfrac 1 2 \dirac{x \mapsto 1, y \mapsto 1}$ 
        \end{tabular}}
    \end{minipage}
    \caption{%
        \textit{Left:} Loop-free example program. The denotational semantics is determined top-to-bottom.
        \textit{Right:} A loop and its associated fixed point iteration.
    }
    \label{fig:denExample}
\end{figure}

\subsection{Proof of \Cref{thm:postIsSup}}
\label{app:subsec:postIsSupProof}

\postIsSup*
\begin{proof}
    Let $\initSet = \{\mu_0\}$.
    We have
    \begin{align*}
        & \reach{\prog}(\{\mu_0\}) \\
        \eeq & \bigcup_{n \in \nats} \post{\ifelseskip{\bool}{\loopbody}}^n(\{\mu_0\}) \tag{\Cref{cor:reach}} \\
        \eeq & \bigcup_{n \in \nats} \left\{\den{\ifelseskip{\bool}{\loopbody}}^n(\mu_0)\right\} ~.\tag{Definition of $\post{\dots}$ for singletons}
    \end{align*}
    We first show that $[\neg \bool] \cdot \reach{\prog}(\{\mu_0\})$ is a chain. Let $n \in \nats$, and let
    \[
    	\mu = \den{\ifelseskip{\bool}{\loopbody}}^n(\mu_0)
    \]
    and
    \[
    	\mu' = \den{\ifelseskip{\bool}{\loopbody}}^{n+1}(\mu_0) ~.
    \]
    By definition of the denotational semantics, $\mu' = [\neg \bool] \cdot \mu + \den{\loopbody}([\bool]\cdot\mu)$ which yields that $[\neg \bool] \cdot \mu' \geq [\neg \bool] \cdot \mu$. Thus, we conclude that $[\neg \bool] \cdot \reach{\prog}(\{\mu_0\})$ is a chain.
    
    It remains to show the equation in the theorem.
    Let $\sigma \in \Sigma$ be a program state.
    If $\sigma \models \bool$, 
    \[
    	\den{\prog}(\mu_0)(\sigma) = 0 = \sup \big( [\neg \bool] \cdot \reach{\prog}(\{\mu_0\}) \big)(\sigma) ~.
    \]
    If $\sigma \not\models \bool$, we show by induction on $n \in \nats$, that for the characteristic function of the LFP for the denotational semantics, i.e. $\Phi(f) = \lambda \mu'.[\neg \bool] \cdot \mu' + f(\den{\loopbody}([\bool] \cdot \mu'))$, it holds for all $n \in \nats$ that
    \[
    	\Phi^{n}(0)(\mu_0)(\sigma) = \den{\ifelseskip{\bool}{\loopbody}}^n(\mu_0)(\sigma) ~.
    \]
    For $n = 0$, we have that
    \[
    	\Phi^{n}(0)(\mu_0)(\sigma) = \mu_0(\sigma) = \den{\ifelseskip{\bool}{\loopbody}}^n(\mu_0)(\sigma) ~.
    \]
    
    For $n > 0$, we have that
    \begin{align*}
    	\Phi^{n}(0)(\mu_0)(\sigma) & = \Phi(\Phi^{n-1}(0))(\mu_0)(\sigma) \tag{$\Phi^{n}(0) = \Phi(\Phi^{n-1}(0)$}\\
    	& = [\neg \bool] \mu_0(\sigma) + \Phi^{n-1}(0)(\den{\loopbody}([\bool]\mu_0))(\sigma) \tag{Definition $\Phi(\Phi^{n-1}(0))(\mu_0)$}\\
    	& = [\neg \bool] \mu_0(\sigma) + \den{\ifelseskip{\bool}{\loopbody}}^{n-1}(\den{\loopbody}([\bool]\mu_0))(\sigma)\tag{I.H.}\\
    	& = \den{\ifelseskip{\bool}{\loopbody}}^{n-1}([\neg \bool] \mu_0)(\sigma)\\
     & \phantom{MMM} + \den{\ifelseskip{\bool}{\loopbody}}^{n-1}(\den{\loopbody}([\bool]\mu_0))(\sigma) \tag{Since $[\bool][\neg\bool]\mu_0 = 0$}\\
    	& = \den{\ifelseskip{\bool}{\loopbody}}^{n-1}([\neg \bool] \mu_0 + \den{\loopbody}([\bool]\mu_0))(\sigma) \tag{Linearity of Den. Sem.}\\
    	& = \den{\ifelseskip{\bool}{\loopbody}}^{n-1}(\den{\ifelseskip{\bool}{\loopbody}}(\mu_0))(\sigma) \tag{Definition of $\den{\ifelseskip{\bool}{\loopbody}}(\mu_0)$}\\
    	& = \den{\ifelseskip{\bool}{\loopbody}}^n(\mu_0)(\sigma) \tag{$\den{\prog}^n(\dots) = \den{\prog}^{n-1}(\den{\prog}(\dots))$}
    \end{align*}
    which concludes the induction. Using the result of the induction, i.e.
    \[
        \forall n \in \nats\colon \Phi^{n}(0)(\mu_0)(\sigma) = \den{\ifelseskip{\bool}{\loopbody}}^n(\mu_0)(\sigma),
    \]
    we obtain the final result for the proof of the theorem:
    \begin{align*}
        \den{\prog}(\mu_0)(\sigma) & \eeq \sup \{\Phi^{n}(0) ~|~ n \in \nats\}(\mu_0)(\sigma) \tag{Definition of Den. Sem.}\\
        & \eeq \sup \{\Phi^{n}(0)(\mu_0)(\sigma) ~|~ n \in \nats\} \tag{$\sigma \models \neg \bool$}\\
        & \eeq \sup \{\den{\ifelseskip{\bool}{\loopbody}}^n(\mu_0)(\sigma) ~|~ n \in \nats\}\tag{Induction Result}\\
        & \eeq \sup \{[\neg \bool] \cdot \den{\ifelseskip{\bool}{\loopbody}}^n(\mu_0)(\sigma) ~|~ n \in \nats\}\tag{$\sigma \models \neg \bool$}\\
        & \eeq \sup \big( [\neg \bool] \cdot \reach{\prog}(\{\mu_0\}) \big)(\sigma)~.\tag{\Cref{cor:reach}}
    \end{align*}
\end{proof}

    \section{Appendix to \Cref{sec:distis}}
\label{app:distis}

\subsection{A Small-step MDP Translation}
\label{subsec:smallsteptr}
In \Cref{sec:distis}, we provide an operational markov chain of a probabilistic loop in the spirit of a big-step operational semantics. In the literature \cite{DBLP:journals/pacmpl/BatzBKW24}, a small-step semantics is the more common approach and can be defined (for our syntax of probabilistic programs) as follows:
For the translation from a probabilistic \emph{and even nondeterministic} program $\prog$ to its underlying MDP $\mdpTr$ via a small-step semantics, we extend our definition for probabilistic programs by nondeterministic choices $\prog = \nchoice{\progOne}{\progTwo}$. Then, we define the set of program configurations as $\conf \coloneqq (\pgcl \cup \{\Downarrow\}) \times \Sigma$. The small-step execution relation of the form $\rightarrow~ \subseteq \conf \times \{\tau,\alpha,\beta\} \times [0,1] \times \conf$ is given by the rules in \Cref{fig:smallstep} (adapted from \cite{DBLP:journals/pacmpl/BatzBKW24}, Figure 18). Based on the small-step execution relation, a small-step operational MDP is defined.

\begin{definition}
The small-step operational MDP $\mdpTr$ of a probabilistic and nondeterministic program is $\mdpTr = (\states, \act, \prmdp)$, where
\begin{itemize}
\item $\states = \{(\prog',\sigma') \in \conf ~|~ \exists \sigma \in \Sigma\colon (\prog',\sigma') \textnormal{ is reachable from } (\prog, \sigma)\}$
\item $\act = \{\tau,  \alpha, \beta\}$
\item $\prmdp\colon \states \times \act \times \states \rightarrow [0,1]$ is the transition probability function given by
\[
	\prmdp(t,l,t') =
	\begin{cases}
		q & \textnormal{ if } t \overset{l,q}{\rightarrow} t'\\
		0 & \textnormal{ otherwise.}
	\end{cases}
\]
\end{itemize}
\end{definition}

\begin{figure}[t]
\begin{center}
\begin{tabular}{ccc}
    \AxiomC{$$}
\UnaryInfC{$\ski,\sigma \smallstep{\tau}{1} \Downarrow,\sigma$}
\DisplayProof
&
    \AxiomC{$$}
\UnaryInfC{$\ass{x}{\arith}, \sigma \smallstep{\tau}{1} \Downarrow, \sigma[x/\arith(\sigma)]$}
\vspace{0.5cm}
\DisplayProof\\
\vspace{0.5cm}

    \AxiomC{$\progOne \ne \progTwo$}
\UnaryInfC{$\pchoice{\progOne}{\probability}{\progTwo},\sigma \smallstep{\tau}{\probability} \progOne,\sigma$}
\DisplayProof
&
    \AxiomC{$\progOne \ne \progTwo$}
\UnaryInfC{$\pchoice{\progOne}{\probability}{\progTwo},\sigma \smallstep{\tau}{1 - \probability} \progTwo,\sigma$}
\DisplayProof\\
\vspace{0.5cm}
    \AxiomC{$$}
\UnaryInfC{$\Downarrow, \sigma \smallstep{\tau}{1} \Downarrow, \sigma$}
\DisplayProof
&
    \AxiomC{$$}
\UnaryInfC{$\pchoice{\prog}{\probability}{\prog},\sigma \smallstep{\tau}{1} \prog,\sigma$}
\DisplayProof\\
\vspace{0.5cm}

    \AxiomC{$$}
\UnaryInfC{$\nchoice{\progOne}{\progTwo}, \sigma \smallstep{\alpha}{1} \progOne,\sigma$}
\DisplayProof
&
    \AxiomC{$$}
\UnaryInfC{$\nchoice{\progOne}{\progTwo}, \sigma \smallstep{\beta}{1} \progTwo,\sigma$}
\DisplayProof
&
\\
\vspace{0.5cm}

    \AxiomC{$\progOne, \sigma \smallstep{l}{q} \Downarrow, \sigma'$}
\UnaryInfC{$\comp{\progOne}{\progTwo}, \sigma \smallstep{l}{q} \progTwo,\sigma'$}
\DisplayProof
&
    \AxiomC{$\progOne, \sigma \smallstep{l}{q} \progOne', \sigma'$}
\UnaryInfC{$\comp{\progOne}{\progTwo}, \sigma \smallstep{l}{q} \comp{\progOne'}{\progTwo},\sigma'$}
\DisplayProof
&
\\
\vspace{0.5cm}

    \AxiomC{$\sigma \models \bool$}
\UnaryInfC{$\ifelse{\bool}{\progOne}{\progTwo}, \sigma \smallstep{\tau}{1} \progOne,\sigma$}
\DisplayProof
&
    \AxiomC{$\sigma \not\models \bool$}
\UnaryInfC{$\ifelse{\bool}{\progOne}{\progTwo}, \sigma \smallstep{\tau}{1} \progTwo,\sigma$}
\DisplayProof
&
\\
\vspace{0.5cm}

    \AxiomC{$\sigma \not\models \bool$}
\UnaryInfC{$\while{\bool}{\loopbody}, \sigma \smallstep{\tau}{1} \Downarrow,\sigma$}
\DisplayProof
&
    \AxiomC{$\sigma \models \bool$}
\UnaryInfC{$\while{\bool}{\loopbody}, \sigma \smallstep{\tau}{1} \comp{\loopbody}{\while{\bool}{\loopbody}},\sigma$}
\DisplayProof
&
\end{tabular}
\end{center}
\caption{The rules defining the small-step execution relation $\rightarrow~ \subseteq \conf ~\times~ \{\tau,\alpha,\beta\} ~\times~ [0,1] ~\times~ \conf$ for probabilistic and nondeterministic programs.}
\label{fig:smallstep}
\end{figure}

\subsection{Proof of \Cref{thm:translation}}
\label{subsec:prooftranslation}

\translation*

We first show the following lemma:
\begin{lemma}
    \label{thm:mcCorrespHelper}
    Let $\bool$, $\loopbody$, $\mu_0$ and $\prog$ as in \Cref{thm:translation}.
    For all $n \in \nats$ it holds that
    \[
    \den{\ifelseskip{\bool}{\loopbody}}^{n}(\mu_0)
    \eeq
    \mdpTr^n(\mu_0)
    ~.
    \]
\end{lemma}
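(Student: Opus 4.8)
The plan is to prove the identity by induction on $n$, after reducing it to a single \emph{one-step} equality. Concretely, I would first establish that
\[
    \den{\ifelseskip{\bool}{\loopbody}}(\nu) \eeq \mdpTr(\nu)
\]
holds for every $\nu \in \subdists$; the lemma then follows by an easy induction, the base case $n = 0$ being trivial (both sides equal $\mu_0$), and the step case substituting $\nu \coloneqq \den{\ifelseskip{\bool}{\loopbody}}^{n}(\mu_0) = \mdpTr^{n}(\mu_0)$ (which holds by the induction hypothesis) into the one-step equality, using $\den{\ifelseskip{\bool}{\loopbody}}^{n+1}(\mu_0) = \den{\ifelseskip{\bool}{\loopbody}}(\nu)$ and $\mdpTr^{n+1}(\mu_0) = \mdpTr(\nu)$.

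To establish the one-step equality I would expand both sides. For the left-hand side, the denotational semantics of the conditional (\Cref{fig:den}) together with $\den{\ski}(\eta) = \eta$ gives $\den{\ifelseskip{\bool}{\loopbody}}(\nu) = \den{\loopbody}([\bool]\cdot\nu) + [\neg\bool]\cdot\nu$. For the right-hand side, unfolding the action of $\mdpTr$ as a distribution transformer and the definition of $\prmdp$ (\Cref{def:mdpOfLoop}), and splitting the sum over $\Sigma$ according to the loop guard, yields $\mdpTr(\nu) = \sum_{\sigma\models\bool}\nu(\sigma)\cdot\den{\loopbody}(\dirac\sigma) + \sum_{\sigma\not\models\bool}\nu(\sigma)\cdot\dirac\sigma$. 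A pointwise comparison shows $\sum_{\sigma\not\models\bool}\nu(\sigma)\cdot\dirac\sigma = [\neg\bool]\cdot\nu$ (only the diagonal term survives at each state), so it remains to match the ``guard-true'' parts, i.e.\ to show $\den{\loopbody}([\bool]\cdot\nu) = \sum_{\sigma\models\bool}\nu(\sigma)\cdot\den{\loopbody}(\dirac\sigma)$.

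Since $[\bool]\cdot\nu = \sum_{\sigma\models\bool}\nu(\sigma)\cdot\dirac\sigma$, this amounts to the claim that $\den{\loopbody}$ commutes with a \emph{countable} sum of sub-distributions. The finite (binary) case is precisely linearity of $\denn$ (property (ii) of the denotational semantics). To reach the countable case I would write $[\bool]\cdot\nu$ as the supremum of the $\omega$-chain of finite partial sums $\big(\sum_{\sigma\in F_k}\nu(\sigma)\cdot\dirac\sigma\big)_{k\in\nats}$, where $F_0 \subseteq F_1 \subseteq \cdots$ is an exhaustion of $\{\sigma \mid \sigma\models\bool\}$ by finite sets, and then invoke $\omega$-continuity of $\den{\loopbody}\colon \subdists \to \subdists$, which is a standard property of this semantics~\cite{DBLP:conf/focs/Kozen79,DBLP:journals/ijfcs/HartogV02,DBLP:conf/lopstr/KlinkenbergBKKM20}. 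I expect this continuity / countable-additivity step to be the only genuine obstacle; everything else is routine unfolding of definitions. Combining the three displayed identities gives $\den{\ifelseskip{\bool}{\loopbody}}(\nu) = \mdpTr(\nu)$, which closes the induction.
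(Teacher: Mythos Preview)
Your proposal is correct and follows essentially the same approach as the paper: both argue by induction on $n$, with the inductive step boiling down to the one-step identity $\den{\ifelseskip{\bool}{\loopbody}}(\nu) = \mdpTr(\nu)$, which is established by expanding both sides, writing $\nu$ as a (countable) superposition of Diracs, and invoking linearity of $\den{\loopbody}$. The only noteworthy difference is presentational: you factor out the one-step identity as a separate lemma and are explicit that passing from finite to countable linearity requires $\omega$-continuity of $\den{\loopbody}$, whereas the paper folds the one-step computation directly into the inductive step and invokes ``linearity'' for the countable sum without further comment.
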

\begin{proof}[of \Cref{thm:mcCorrespHelper}]
    By induction on $n$.
    The claim holds trivially for $n=0$.
    For $n\geq 0$ we argue as follows:
    \begin{align*}
        & \den{\ifelseskip{\bool}{\loopbody}}^{n+1}(\mu_0) \\
        \eeq & \den{\ifelseskip{\bool}{\loopbody}}\left( \mdpTr^n(\mu_0) \right) \tag{I.H.} \\
        \eeq & [\bool] \cdot \den{\loopbody}\left( \mdpTr^n(\mu_0) \right) + [\neg\bool] \cdot \mdpTr^n(\mu_0) \tag{Definition of $\denn$} \\
        \eeq & [\bool] \cdot \den{\loopbody}\left( \sum_{\sigma \in \Sigma} \mdpTr^n(\mu_0)(\sigma) \cdot \dirac{\sigma} \right) + [\neg\bool] \cdot \left( \sum_{\sigma \in \Sigma} \mdpTr^n(\mu_0)(\sigma) \cdot \dirac{\sigma} \right) \tag{Rewrite in terms of Dirac distributions} \\
        \eeq & \den{\loopbody}\left( \sum_{\sigma\models \bool} \mdpTr^n(\mu_0)(\sigma) \cdot \dirac{\sigma} \right) +   \sum_{\sigma \not\models \bool} \mdpTr^n(\mu_0)(\sigma) \cdot \dirac{\sigma}  \tag{Simplification} \\
        \eeq & \sum_{\sigma\models \bool} \mdpTr^n(\mu_0)(\sigma) \cdot  \den{\loopbody}\left( \dirac{\sigma} \right) + \sum_{\sigma \not\models \bool} \mdpTr^n(\mu_0)(\sigma) \cdot    \dirac{\sigma}  \tag{Linearity of $\den{\loopbody}$} \\
        \eeq & \sum_{\sigma\models \bool} \mdpTr^n(\mu_0)(\sigma) \cdot  \prmdp(\sigma) + \sum_{\sigma \not\models \bool} \mdpTr^n(\mu_0)(\sigma) \cdot   \prmdp(\sigma)  \tag{Definition of $\prmdp$} \\
        \eeq & \sum_{\sigma\in \Sigma} \mdpTr^n(\mu_0)(\sigma) \cdot  \prmdp(\sigma)  \tag{Simplification} \\
        \eeq & \mdpTr(\mdpTr^n(\mu_0)) \tag{Definition of $\mdpTr$} \\
        \eeq & \mdpTr^{n+1}(\mu_0) 
    \end{align*}
\end{proof}

We can now prove \Cref{thm:translation}:
\begin{proof}[of \Cref{thm:translation}]
    \begin{align*}
        & \inv \text{ is a distributional invariant w.r.t.\ } \prog \text{ and } \{\initDist\} \\
        \iiff & \reach{\prog}(\{\mu_0\}) \subseteq \inv \tag{\Cref{defn:distInv}} \\
        \iiff & \forall n \in \nats \colon \post{\ifelseskip{\bool}{\loopbody}}^n\left(\{\mu_0\}\right) \subseteq \inv \tag{\Cref{cor:reach}} \\
        \iiff & \forall n \in \nats \colon \den{\ifelseskip{\bool}{\loopbody}}^n\left(\mu_0\right) \in \inv \tag{Definition of $\postn$} \\
        \iiff & \forall n \in \nats \colon \mdpTr^n\left(\mu_0\right) \in \inv \tag{\Cref{thm:mcCorrespHelper}} \\
        \iiff & \inv \text{ is a distributional invariant w.r.t.\ } \mdpTr \text{ and } \initDist \tag{\Cref{defn:distInvMDP}}
    \end{align*}
    The argument for inductivity is similar:
    \begin{align*}
        & \inv \text{ is inductive w.r.t.\ } \prog  \\
        %
        %
        \iiff & \forall \mu \in \inv \colon \den{\ifelseskip{\bool}{\loopbody}}\left(\mu\right) \in \inv \tag{\Cref{defn:distInv}} \\
        \iiff & \forall \mu \in \inv  \colon \mdpTr\left(\mu\right) \in \inv \tag{\Cref{thm:mcCorrespHelper}} \\
        \iiff & \inv \text{ is inductive w.r.t.\ } \mdpTr \tag{\Cref{defn:distInvMDP}}
    \end{align*}
\end{proof}

    \section{Appendix to \Cref{sec:verification}}
\label{app:hoare}

\paragraph{Comments on Hoare Triples} Notice that $\models\tripleGen$ if and only if $\forall \mu \in \hSetOne\colon \den{\prog}(\mu) \in \hSetTwo$.
Furthermore, inductivity of $\inv \subseteq \subdists$ w.r.t.\ $\while{\bool}{\loopbody}$ can be phrased as $\models \triple{\inv}{\ifelseskip{\bool}{\loopbody}}{\inv}$.
We only consider inductive invariants in this section and focus on inference rules for probabilistic loops.
Rules for other program statements are provided in~\cite{DBLP:journals/ijfcs/HartogV02,DBLP:conf/esop/BartheEGGHS18} (also see \Cref{subsec:loopfreerules}).

\begin{remark}[Relation to Classic Hoare Triples]
    \Cref{def:hoareTriples} generalizes traditional Hoare triples for deterministic programs, both the ones for total as well as for partial correctness.
    Indeed, if $\prog$ is deterministic and $\hSetOne,\hSetTwo$ are sets of Dirac distributions, then $\tripleGen$ encodes \emph{total} correctness, i.e.\ if $\prog$ starts with a state from $\hSetOne$, then $\prog$ terminates in a state from $\hSetTwo$.
    \emph{Partial} correctness, on the other hand, can be encoded as $\triple{\hSetOne}{\prog}{\hSetTwo \cup \{0\}}$, where $0$ is the constant zero distribution allowing for non-termination.
\end{remark}

\subsection{Hoare Rules for Loop-free Programs}
\label{subsec:loopfreerules}
We give rules with which valid Hoare triples can be derived. The soundness of the following rules for loop-free programs is easy to see.

\begin{center}
\begin{tabular}{ll}
    \AxiomC{$$}
\LeftLabel{(Skip):\;} 
\UnaryInfC{$\triple{\hSetOne}{\ski}{\hSetOne}$}
\DisplayProof
&
    \AxiomC{$\triple{\hSetOne}{\progOne}{\hSetTwo}$}
    \AxiomC{$\triple{\hSetTwo}{\progTwo}{\hSetThree}$}
\LeftLabel{(Comp):\;} 
\BinaryInfC{$\triple{\hSetOne}{\comp{\progOne}{\progTwo}}{\hSetThree}$}
\vspace{1cm}
\DisplayProof\\

    \AxiomC{$$}
\LeftLabel{(Assign):\;} 
\UnaryInfC{$\triple{\hSetOne[\progvar{x}/\arith]}{\ass{\progvar{x}}{\arith}}{\hSetOne}$}
\DisplayProof
&
    \AxiomC{$\triple{[\bool]\hSetOne}{\progOne}{\hSetTwo}$}
    \AxiomC{$\triple{[\neg \bool]\hSetOne}{\progTwo}{\hSetThree}$}
\LeftLabel{(If):\;} 
\BinaryInfC{$\triple{\hSetOne}{\ifelse{\bool}{\progOne}{\progTwo}}{\hSetTwo + \hSetThree}$}
\vspace{1cm}
\DisplayProof

\end{tabular}
\begin{tabular}{c}

    \AxiomC{$\triple{\hSetOne}{\progOne}{\hSetTwo}$}
    \AxiomC{$\triple{\hSetOne}{\progTwo}{\hSetThree}$}
\LeftLabel{(Prob):\;} 
\BinaryInfC{$\triple{\hSetOne}{\pchoice{\progOne}{\probability}{\progTwo}}{\probability \cdot \hSetTwo + (1 - \probability) \cdot \hSetThree}$}
\vspace{1cm}
\DisplayProof\\

    \AxiomC{$\triple{\hSetOne}{\prog}{\hSetThree}$}
    \AxiomC{$\triple{\hSetTwo}{\prog}{\hSetThree}$}
\LeftLabel{(Or):\;} 
\BinaryInfC{$\triple{\hSetOne \cup \hSetTwo}{\prog}{\hSetThree}$}
\vspace{1cm}
\DisplayProof\\

    \AxiomC{$\hSetOne \subseteq \hSetTwo$}
    \AxiomC{$\triple{\hSetOne}{\prog}{\hSetThree}$}
    \AxiomC{$\hSetThree \subseteq \hSetFour$}
\LeftLabel{(Imp):\;} 
\TrinaryInfC{$\triple{\hSetOne}{\prog}{\hSetFour}$}
\DisplayProof

\end{tabular}
\end{center}

\subsection{Hoare rules for loops}
\label[appendix]{app:hoareRules}
Towards developing Hoare rules for probabilistic loops, it is instructive to consider the following rule first.
It is sound for deterministic programs, but generally \emph{un}sound for probabilistic ones:
\[
    \frac{\triple{\inv}{\ifelseskip{\bool}{\loopbody}}{\inv}}
    {\triple{\inv}{\while{\bool}{\loopbody}}{[\bool] \cdot \inv}}
    \tag{$\lightning$\,\textsc{While-Unsound}\,$\lightning$}
    \label{WhileUnsound}
\]
The root cause is that $\mu_0 \in \inv$ does \emph{not} necessarily mean that $\den{\prog}(\mu_0) \in \reach{\prog}(\{\mu_0\})$ (see \Cref{ex:geodist}).
We treat this issue by applying a suitable \emph{closure operator} to the post (\Cref{defn:clos} below).
This is inspired by~\cite{DBLP:journals/ijfcs/HartogV02,DBLP:conf/esop/BartheEGGHS18}.


\begin{definition}[$\omega$-Closure]%
    \label{defn:clos}%
	For a set $\initSet \subseteq \subdists$ we define the \emph{$\omega$-closure}
	\[
		\closOf\initSet
        ~\coloneqq~
        \left\{ \sup (\mu_n)_{n \in \nats} \mid (\mu_n)_{n \in \nats} \text{ is an $\omega$-chain in }\initSet\right\}
        ~.
	\]
\end{definition}
In other words, $\closOf{\initSet}$ is the smallest $\omega$-complete partial order containing $\initSet$.
We say that $\initSet$ is \emph{closed} if $\closOf{\initSet} = \initSet$.

\begin{definition}[Almost Sure Termination]%
    \label{def:ast}%
    Let $\prog = \while{\bool}{\loopbody}$ be a loop and let $\initSet \subseteq \subdists$ be a set of initial distributions. $\prog$ is \emph{almost surely terminating (AST)} w.r.t.\ $\initSet$ if $\forall \mu \in \initSet\colon \weight{\mu} = \weight{\den{\prog}(\mu)}$. Furthermore, $\prog$ is called \emph{universally} AST if $\prog$ is AST w.r.t.\ $\subdists$.
\end{definition}

\begin{example}[AST]%
    \label{ex:ast}%
    We state (without proof) that the program from \Cref{fig:operationalMC} is AST.
    The intuition is that in every loop iteration, there is a chance of $\geq \nicefrac 1 8$ of falsifying the loop guard. This is inevitable in the long run, i.e.\ with probability one, the loop guard becomes false eventually.
\end{example}

\begin{example}[Reasoning with the \textnormal{\textsc{While}} Rules]%
    We apply \eqref{WhileT} to the loop $\prog = \while{\bool}{\loopbody}$ from \Cref{fig:operationalMC}.
    We have already stated an inductive invariant $\inv$ in \Cref{ex:tobiInvariant}, i.e.\ we know $\models \triple{\inv}{\ifelseskip{\bool}{\loopbody}}{\inv}$.
    Further, $\prog$ is AST (see \Cref{ex:ast}) and we have $\inf_{\mu \in \inv} \weight{\mu} = 1$ by definition of $\inv$.
    Applying \eqref{WhileT} thus yields $\models \triple{\inv}{\while{\bool}{\loopbody}}{\closOf{[\neg \bool] \inv} \cap \dists}$.
    Finally, we notice that $[\neg\bool]\inv$ is the set of distributions assigning arbitrary, but equal probability mass to states $x\mapsto 0, y \mapsto 1$ and $x\mapsto 1, y \mapsto 1$.
    Thus, $\closOf{[\neg\bool]\inv} = [\neg\bool]\inv$, i.e.\ $[\neg\bool]\inv$ is closed, and we find that
    \[
        \closOf{[\neg\bool]\inv} \cap \dists
        ~=~
        \left\{\tfrac 1 2 \dirac{x\mapsto 0, y \mapsto 1} + \tfrac 1 2 \dirac{x\mapsto 1, y \mapsto 1} \right\}
        ~.
    \]
    Overall, we have shown that if $\prog$ starts with a proper distribution $\mu_0 \in \inv$ (meaning that $x\mapsto 1, y \mapsto 0$ is twice as likely as $x\mapsto 0, y \mapsto 0$, and $x\mapsto 0, y \mapsto 1$ is just as likely as $x\mapsto 1, y \mapsto 1$), then $\den{\prog}(\mu_0) = \tfrac 1 2 \dirac{x\mapsto 0, y \mapsto 1} + \tfrac 1 2 \dirac{x\mapsto 1, y \mapsto 1}$.
    We have thus determined the denotational semantics \emph{exactly} for all such initial $\mu_0$.
    
    The weaker \eqref{WhileP} rule can be applied similarly to $\prog$; the difference to \eqref{WhileT} is that we only obtain $\models \triple{\inv}{\while{\bool}{\loopbody}}{\closOf{[\neg \bool] \inv}}$, where
    \[
        \closOf{[\neg\bool]\inv}
        ~=~
        \left\{p \dirac{x\mapsto 0, y \mapsto 1} + p \dirac{x\mapsto 1, y \mapsto 1} \mid 0 \leq p \leq \tfrac 1 2 \right\}
        ~.
        \tag{$\clubsuit$}
        \label{eq:tobiExPartial}
    \]
    In particular, the Hoare triple $\triple{\inv}{\while{\bool}{\loopbody}}{\closOf{[\neg \bool] \inv}}$ does not reveal whether $\prog$ terminates with positive probability on any $\mu_0 \in \inv$.
    The triple thus indicates \emph{partial correctness} only.
\end{example}

\label{subsec:novelwhilesound}
\begin{restatable}{theorem}{novelWhileSound}%
    \label{thm:novelWhileSound}%
    Rules \eqref{WhileP} and \eqref{WhileT} are sound.
\end{restatable}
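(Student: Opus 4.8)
The plan is to unfold \Cref{def:hoareTriples}: it suffices to show that for every $\mu_0 \in \inv$ the output distribution $\den{\while{\bool}{\loopbody}}(\mu_0)$ lies in the respective postcondition. So I would fix $\mu_0 \in \inv$. The premise $\triple{\inv}{\ifelseskip{\bool}{\loopbody}}{\inv}$ of both rules says precisely that $\inv$ is inductive (cf.\ the remark following \Cref{defn:distInv}), so together with $\mu_0 \in \inv$ we obtain $\reach{\prog}(\{\mu_0\}) \subseteq \inv$. Using \Cref{cor:reach} and the fact that $\postn$ applied to a singleton is again a singleton, $\reach{\prog}(\{\mu_0\})$ is the countable set $\{\mu_n \mid n \in \nats\}$, where $\mu_0$ is as fixed and $\mu_{n+1} \coloneqq \den{\ifelseskip{\bool}{\loopbody}}(\mu_n)$.

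Next I would argue that $([\neg\bool]\,\mu_n)_{n \in \nats}$ is an $\omega$-chain contained in $[\neg\bool]\inv$. Membership in $[\neg\bool]\inv$ is immediate since $\mu_n \in \inv$. For the ascending property, the denotational semantics (\Cref{fig:den}) gives $\mu_{n+1} = \den{\loopbody}([\bool]\,\mu_n) + [\neg\bool]\,\mu_n$, so multiplying componentwise by $[\neg\bool]$ yields $[\neg\bool]\,\mu_{n+1} = [\neg\bool]\cdot\den{\loopbody}([\bool]\,\mu_n) + [\neg\bool]\,\mu_n \geq [\neg\bool]\,\mu_n$, the first summand being a (nonnegative) sub-distribution. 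By \Cref{thm:postIsSup} the supremum of $[\neg\bool]\cdot\reach{\prog}(\{\mu_0\})$ equals $\den{\prog}(\mu_0)$, and this coincides with the supremum of the $\omega$-chain $([\neg\bool]\,\mu_n)_{n}$; hence $\den{\prog}(\mu_0) \in \closOf{[\neg\bool]\inv}$ by \Cref{defn:clos}. This establishes soundness of \eqref{WhileP}.

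For \eqref{WhileT} it remains to check the additional mass bound $\den{\prog}(\mu_0) \in \Deltaspec{\geq \tau}\Sigma$. Since $\prog$ is AST w.r.t.\ $\inv$ (\Cref{def:ast}) and $\mu_0 \in \inv$, we have $\weight{\den{\prog}(\mu_0)} = \weight{\mu_0} \geq \inf_{\mu \in \inv}\weight{\mu} = \tau$, while $\weight{\den{\prog}(\mu_0)} \leq 1$ holds because $\den{\prog}(\mu_0)$ is a sub-distribution. Combined with the previous paragraph this gives $\den{\prog}(\mu_0) \in \closOf{[\neg\bool]\inv} \cap \Deltaspec{\geq \tau}\Sigma$, exactly the postcondition claimed by \eqref{WhileT}.

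The only genuine subtlety I anticipate is bridging the abstract ``chain'' supplied by \Cref{thm:postIsSup} with the $\omega$-chains (sequences indexed by $\nats$) over which the $\omega$-closure is defined in \Cref{defn:clos}; this is handled by the explicit enumeration of $\reach{\prog}(\{\mu_0\})$ from \Cref{cor:reach} together with the pointwise monotonicity of $[\neg\bool]\cdot(-)$ along loop iterations. Everything else amounts to bookkeeping against the definitions, so I expect no further obstacles.
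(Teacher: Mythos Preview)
Your proof is correct and follows essentially the same approach as the paper's: fix $\mu_0 \in \inv$, use inductivity to get $\reach{\prog}(\{\mu_0\}) \subseteq \inv$, conclude that $[\neg\bool]\cdot\reach{\prog}(\{\mu_0\})$ is an $\omega$-chain in $[\neg\bool]\inv$, and apply \Cref{thm:postIsSup} together with \Cref{defn:clos}. The paper is slightly terser---it invokes \Cref{thm:postIsSup} directly for the chain property and uses monotonicity of $\reachn$ to obtain the containment $[\neg\bool]\cdot\reach{\prog}(\{\mu_0\}) \subseteq [\neg\bool]\cdot\reach{\prog}(\inv) \subseteq [\neg\bool]\inv$---whereas you spell out the enumeration $(\mu_n)_n$ via \Cref{cor:reach} and verify the ascending property by hand, which in fact addresses more carefully the ``chain versus $\omega$-chain'' gap you rightly flagged.
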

\begin{proof}
Let $\prog = \while{\bool}{\loopbody}$. For the soundness of \eqref{WhileP}, it is to prove that for all $\mu_0 \in \inv$ it holds that $\den{\prog}(\mu_0) \in \closOf{[\neg \bool] \inv}$ and for the soundness of \eqref{WhileT} respectively that for all $\mu_0 \in \inv$ it holds that $\den{\prog}(\mu_0) \in \closOf{[\neg \bool] \inv} \cap \Deltaspec{\geq \tau}\Sigma$ where $\tau = \inf_{\mu \in \inv} \weight{\mu}$.

Let $\mu_0 \in \inv$. According to \Cref{thm:postIsSup}, $[\neg \bool] \cdot \reach{\prog}(\{\mu_0\})$ is an $\omega$-chain and furthermore we show that the chain is contained in $[\neg \bool] \inv$:
\begin{align*}
    [\neg \bool] \cdot \reach{\prog}(\{\mu_0\}) & \overset{\{\mu_0\} \subseteq \inv}{\subseteq} [\neg \bool] \cdot \reach{\prog}(\inv) \overset{\inv~ \textnormal{invariant}}{\subseteq} [\neg \bool] \inv~.
\end{align*}
By definition of the $\omega$-closure it then follows that
\[
    \sup\big( [\neg \bool] \cdot \reach{\prog}(\{\mu_0\}) \big) \in \closOf{[\neg \bool] \inv}
\]
and finally by \Cref{thm:postIsSup}:
\begin{align*}
    \den{\prog}(\mu_0) & \overset{\Cref{thm:postIsSup}}{=} \sup\big( [\neg \bool] \cdot \reach{\prog}(\{\mu_0\}) \big) \in \closOf{[\neg \bool] \inv}
\end{align*}

which concludes soundness of \eqref{WhileP}. For the soundness of \eqref{WhileT}, we additionally need to show that under the assumption of $\prog$ being AST w.r.t $\inv$, it holds for all $\mu_0 \in \inv$ that
\[
    \den{\prog}(\mu_0) \in \Deltaspec{\geq \tau}\Sigma~.
\]
By the definition of $\prog$ being AST w.r.t $\inv$, we obtain for all $\mu_0 \in \inv$ that
\begin{align*}
    \weight{\den{\prog}(\mu_0)} = \weight{\mu_0} \overset{\mu_0 \in \inv}{\geq} \inf_{\mu \in \inv} \weight{\mu} = \tau
\end{align*}
which concludes soundness of \eqref{WhileT}.
\end{proof}

    \paragraph{Comments on the substitution operator for distributions.}%
It follows from the definition that substitution preserves pointwise $\cdot$ and $+$.
Notice that substitution is not only defined for distributions but also for functions of the form $[\bool]$ where $\bool$ is a predicate, in which case it holds that $[\bool][\progvar x/\arith] = [\bool[\progvar x/\arith]]$ where $\bool[\progvar x/\arith]$ results from $\bool$ by \emph{syntactically replacing}\footnote{Assuming a concrete syntactic form of $\bool$ (e.g., as a first-order formula) is given.} all free occurrences of $\progvar x$ in $\bool$ by $\arith$
If $\mu \in \subdists$, then $\mu[\progvar x / \arith]$ is a distribution assigning to state $\sigma$ the probability that $\mu$ assigns to the updated state $\sigma[\progvar x / \arith(\sigma)]$ (if $\arith(\sigma)$ is defined; otherwise $\mu[\progvar x / \arith]$ assigns $0$ to $\sigma$).
For example, for the Dirac distribution $\dirac{\progvar x\mapsto 1}$ we have $\dirac{\progvar x \mapsto 1}[x/x+1] = \dirac{\progvar x \mapsto 0}$.
Substitution is not mass preserving:
For instance, $\dirac{\progvar x \mapsto 1}[x/2 x] = 0$ because there is no $x \in \mathbb Z$ such that $2x = 1$.

Post distributions resulting from injective assignments can be expressed using the substitution operator for distributions (proof in \Cref{subsec:proofinjassignments}):

\begin{definition}[Partial Inverse]%
    \label{def:partialInverse}%
    Let $\arith \colon\Sigma \to \mathbb{Z}$ and $\progvar x \in \vars$.
    We say that $\arith$ is \emph{injective in $\progvar x$} if for all $\sigma \in \Sigma$, the function $\arith_{\sigma,\progvar x} \colon \mathbb Z \to \mathbb Z, c \mapsto \arith(\sigma[\progvar x / c])$ is injective.
    In this case, we define the \emph{partial inverse of $\arith$ in $\progvar x$} as
    \[
        \arith^{-1}_{\progvar x} \colon \Sigma \dashrightarrow \mathbb Z \,,\quad \sigma \mapsto \arith_{\sigma,\progvar x}^{-1}(\sigma(\progvar x))
        ~.
    \]
\end{definition}
For example, if $\arith = x - y$ then $\arith^{-1}_x = x + y$ and $\arith^{-1}_y = x - y$.
In practice, these expressions can be obtained by solving $x' = x-y$ and $y' = x-y$ for $x$ and $y$, respectively.
Notice that both $\arith^{-1}_x$ and $\arith^{-1}_y$ are total in this example.

\subsection{Proof of \Cref{lem:injassignments}}
\label[appendix]{subsec:proofinjassignments}

\injassignments*
\begin{proof}
    We first observe that the function $\lambda \sigma. \sigma[\progvar x / \arith(\sigma)]$ has partial inverse $\lambda \sigma. \sigma[\progvar x / \arith^{-1}_{\progvar x}(\sigma)]$.
    The claim can now be shown as follows:
    \begin{align*}
        & \den{\ass{\progvar x}{\arith}}(\mu) \\
        ~{}={}~ & \lambda \sigma. \sum_{\sigma' \colon \sigma'[\progvar x / \arith(\sigma')] = \sigma} \mu(\sigma') \tag{Definition of $\den{\ass{\progvar x}{\arith}}$} \\
        ~{}={}~ & \lambda \sigma.
        \begin{cases}
            \sum_{\sigma' \colon \sigma' = \sigma[\progvar x / \arith^{-1}_\progvar{x}(\sigma)]} \mu(\sigma')  & \text{if } \arith^{-1}_\progvar{x}(\sigma') \text{ is defined}\\
            0 & \text{else}
        \end{cases} \tag{Observation from above} \\
        ~{}={}~ & \lambda \sigma.
        \begin{cases}
            \mu(\sigma[\progvar x / \arith^{-1}_\progvar{x}(\sigma)])  & \text{if } \arith^{-1}_\progvar{x}(\sigma') \text{ is defined}\\
            0 & \text{else}
        \end{cases} \tag{Simplifying} \\
        ~{}={}~ & \mu[\progvar x / \arith^{-1}_\progvar{x}] \tag{Definition of substitution}
    \end{align*}
\end{proof}

\begin{figure}[t]
    \setlength{\jot}{2pt} 
    \begin{align*}
        & \progAnno{\mu_0} \tag*{\gray{arbitrary initial distribution}} \\
        & \IF{x > 0} \\
        & \qquad \progAnno{[x>0] \cdot \mu_0} \\
        & \qquad \ass{x}{x-y} \tag*{\gray{$(x-y)_x^{-1} = x+y$}} \\
        & \qquad \progAnno{[x+y>0] \cdot \mu_0[x/x+y]} \\
        & \ELSE \\
        & \qquad \progAnno{[x\leq0] \cdot \mu_0} \\
        & \qquad \ass{y}{2y}  \tag*{\gray{$(2y)_y^{-1} = \tfrac y 2$ (\emph{partial} division function $\mathbb Z \dashrightarrow \mathbb Z$)}} \\
        & \qquad \progAnno{[x \leq 0] \cdot \mu_0[y / \tfrac y 2]} \\
        & \prbracecl \SEMICLN \progAnno{[x+y>0] \cdot \mu_0[x/x+y] \,+\, [x \leq 0] \cdot \mu_0[y / \tfrac y 2]} \\
        & \ass{x}{x+1} \tag*{\gray{$(x+1)_x^{-1} = x - 1$}} \\
        & \progAnno{[x{-}1+y>0] \cdot \mu_0[x/x+y][x/x{-}1] + [x{-}1 \leq 0] \cdot \mu_0[y / \tfrac y 2][x / x{-}1]} \tag*{\gray{This is $\den{\prog}(\mu_0)$.}} \\
        & \progAnno{[x+y>1] \cdot \mu_0[x/x-1+y] \,+\, [x \leq 1] \cdot \mu_0[y / \tfrac y 2][x / x -1]} \tag*{\gray{(simplification)}}
    \end{align*}
    \caption{%
        Characterizing the denotational semantics $\den{\prog}$ of a program with injective assignments.
    }
    \label{fig:tobiSubstitutionEx}
\end{figure}

\begin{example}[Programs with Injective Assignments]%
    \label{ex:tobiSubstitutionEx}%
    Consider the program $\prog$ in \Cref{fig:tobiSubstitutionEx} whose assigments are all injective in the sense of \Cref{lem:injassignments}.
    The bottommost annotation helps determining the output distribution w.r.t.\ an arbitrary initial distribution $\mu_0$ (i.e., the denotational semantics) as shown by means of the following examples:
    \begin{center}
        \setlength{\tabcolsep}{4pt}
        \begin{tabular}{l | l l l l}
            \textit{Final} & \multicolumn{4}{c}{\textit{Probability of $\sigma$ in final distribution $\den{\prog}(\mu_0)$ in terms of $\mu_0$}} \\
            \textit{state $\sigma$} & $[x+y>1]$ & $\mu_0[x/x-1+y]$ & $[x \leq 1]$ & $\mu_0[y / \tfrac y 2][x / x -1]$ \\
            \midrule
            $\begin{matrix*}[l] x\mapsto 1 \\[-2pt] y\mapsto 2 \end{matrix*}$ & $[1+2>1] = 1$ & $\mu_0 \Big(\begin{matrix*}[l] x\mapsto 1 {-} 1 {+} 2 = 2 \\[-2pt] y\mapsto 2\end{matrix*}\Big)$ & $[1 \leq 1] = 1$ & $\mu_0\Big(\begin{matrix*}[l]x\mapsto 1{-}1=0 \\[-2pt] y\mapsto \tfrac 2 2 = 1\end{matrix*}\Big)$ \\[10pt]
            $\begin{matrix*}[l] x\mapsto 0 \\[-2pt] y\mapsto 1 \end{matrix*}$ & $0$ & $\gray{\mu_0 \Big(\begin{matrix*}[l] x\mapsto 0 {-} 1 {+} 1 = 0 \\[-2pt] y\mapsto 1\end{matrix*}\Big)}$ & $\gray{1}$ & $\begin{matrix}0 \\ \gray{\text{(due to partiality)}}\end{matrix}$
        \end{tabular}
    \end{center}
    In words, the probability that $\prog$ terminates in state $x\mapsto 1, y \mapsto 2$ equals the initial probability of
    $x\mapsto 2, y \mapsto 2$ plus the initial probability of $x\mapsto 0, y \mapsto 1$.
    On the other hand, the probability of terminating in $x\mapsto 0, y \mapsto 1$ is $0$ for all initial distributions.
\end{example}

\subsection{Proof of \Cref{thm:unmodvars}}
\label{subsec:unmodvars}
\unmodvars*
\begin{proof}
We only consider the assignment case $\prog = (\ass{\progvar{y}}{\arith})$ because this is the only program construct where variables change their values. We have that $\progvar{y} \ne \progvar{x}$ since $\progvar{x}$ is an unmodified variable:
\begin{align*}
    \prosim{\mu_0}{\progvar{x}}{=}{z} & = \sum_{\sigma\colon \sigma(\progvar{x}) = z} \mu_0(\sigma)\tag{Definition of $\prosim{\mu_0}{\dots}{}{}$}\\
    & = \sum_{\sigma\colon \sigma(\progvar{x}) = z} \sum_{\sigma = \sigma'[\progvar{y}/\arith(\sigma')]} \mu_0(\sigma') \tag{$\progvar{y} \ne \progvar{x}$}\\
    & = \sum_{\sigma\colon \sigma(\progvar{x}) = z} \den{\prog}(\mu_0)(\sigma)\tag{Definition of Den. Sem.}\\
    & = \prosim{\mu}{\progvar{x}}{=}{z}~. \tag{Definition of $\prosim{\mu}{\dots}{}{}$}
\end{align*}
\end{proof}

    \section{Appendix to \Cref{sec:fdr}}
\label{app:practical}

\begin{figure}[t]
    \begin{center}
        \begin{tikzpicture}[node distance={30mm}, thick, main/.style = {draw, circle, minimum size=0.8cm},->,x=1cm,y=1cm]
    
    \node(-1) at (0,1)  {};
    \node[main](1) at (0,0)  {$1,0$};
    \node[main](2) at (-3,-1) {$2,0$};
    \node[main](3) at (-5,-2) {$4,0$};
    \node[main](4) at (-6,-3){$8,0$};

    \node[main](5) at (3,-1) {$2,1$};
    \node[main](6) at(-2,-2) {$4,1$};
    \node[main](7) at(-4,-3){$8,1$};
    
    \node[main](8) at(2,-2){$4,2$};
    \node[main](9) at(-3,-3){$8,2$};
    
    \node[main](10) at(5,-2){$4,3$};
    
    \node[main](12) at (-1,-3){$8,3$};
    \node[main](13) at(1,-3){$8,4$};
    \node[main](14) at(3,-3){$8,5$};
    
    \draw(-1)--(1);
    \draw(1)--(2);
    \draw(1)--(5);
    \draw(2)--(3);
    \draw(2)--(6);
    \draw(5)--(8);
    \draw(5)--(10);
    \draw(3)--(4);
    \draw(3)--(7);
    \draw(6)--(9);
    \draw(6)--(12);
    \draw(8)--(13);
    \draw(8)--(14);
    
    \draw(10) to [bend right = 5] (2);
    \draw(10) to [bend right = 35] (5);
    \path  (4)   edge[loop below] node[above]  {} (3);
    \path  (7)   edge[loop below] node[above]  {} (3);
    \path  (9)   edge[loop below] node[above]  {} (3);
    \path  (12)   edge[loop below] node[above]  {} (3);
    \path  (13)   edge[loop below] node[above]  {} (3);
    \path  (14)   edge[loop below] node[above]  {} (3);
\end{tikzpicture}
        \caption{The $\numb{n}=6$ fragment of the big-step operational Markov chain modeling the FDR. A state label $(\numb{v},\numb{c})$ is to be interpreted as the program state $\sigma$ with $\sigma(\progvar{v}) = \numb{v}, \sigma(\progvar{c}) = \numb{c}$, and $\sigma(\progvar{n}) = \numb{n}$. The probabilities of the transitions are $0.5$ whenever there are two outgoing transitions, and otherwise $1$.}\label{fig:MarkovChainFDR6}
    \end{center}
\end{figure}

\begin{figure}[t]
    \begin{center}
        \begin{tikzpicture}[xscale=0.85,yscale=0.7, main/.style = {draw, shape = circle, fill = white, minimum size = 0.2cm, inner sep=0pt},->,x=0.3cm,y=0.5cm]
    \node(-1) at (0,1)  {};
    \node[main,accepting](161) at (-20-2,-12)  {};
    \node[main,accepting](162) at (-19-2,-12)  {};
    \node[main,accepting](163) at (-18-2,-12)  {};
    \node[main,accepting](164) at (-17-2,-12)  {};
    \node[main,accepting](165) at (-16-2,-12)  {};
    \node[main,accepting](166) at (-15-2,-12)  {};
    \node[main,accepting](167) at (-14-2,-12)  {};
    \node[main,accepting](168) at (-13-2,-12)  {};
    \node[main,accepting](169) at (-12-2,-12)  {};
    
    \node[main,accepting](109) at (18+2,-9)  {};
    \node[main,accepting](108) at (17+2,-9)  {};
    \node[main,accepting](107) at (16+2,-9)  {};
    \node[main,accepting](106) at (15+2,-9)  {};
    \node[main,accepting](105) at (14+2,-9)  {};
    \node[main,accepting](104) at (13+2,-9)  {};
    \node[main,accepting](103) at (12+2,-9)  {};
    \node[main,accepting](102) at (11+2,-9)  {};
    \node[main,accepting](101) at (10+2,-9)  {};
    
    \node[main,accepting](149) at (4,-11)  {};
    \node[main,accepting](148) at (3,-11)  {};
    \node[main,accepting](147) at (2,-11)  {};
    \node[main,accepting](146) at (1,-11)  {};
    \node[main,accepting](145) at (0,-11)  {};
    \node[main,accepting](144) at (-1,-11)  {};
    \node[main,accepting](143) at (-2,-11)  {};
    \node[main,accepting](142) at (-3,-11)  {};
    \node[main,accepting](141) at (-4,-11)  {};
    
    \node[main](81) at (-15,-7)  {};
    \node[main](82) at (-14,-7)  {};
    \node[main](83) at (-13,-7)  {};
    \node[main](84) at (-12,-7)  {};
    \node[main](85) at (-11,-7)  {};
    \node[main](86) at (-10,-7)  {};
    \node[main](87) at (-9,-7)  {};
    \node[main](88) at (-8,-7)  {};

    \node[main](71) at (-4,-6)  {};
    \node[main](72) at (-3,-6)  {};
    \node[main](73) at (-2,-6)  {};
    \node[main](74) at (-1,-6)  {};
    \node[main](75) at (0,-6)  {};
    \node[main](76) at (1,-6)  {};
    \node[main](77) at (2,-6)  {};
    
    \node[main](55) at (12,-4)  {};
    \node[main](54) at (11,-4)  {};
    \node[main](53) at (10,-4)  {};
    \node[main](52) at (9,-4)  {};
    \node[main](51) at (8,-4)  {};
    
    \node[main](44) at (-4,-3)  {};
    \node[main](43) at (-5,-3)  {};
    \node[main](42) at (-6,-3)  {};
    \node[main](41) at (-7,-3)  {};
    
    \node[main](22) at (-2,-1)  {};
    \node[main](21) at (-3,-1)  {};
    
    \node[main](11) at (0,0)  {};
    
    \node at (-25, 0) {$\numb v {=} 1$};
    \node at (-25, -1) {$\numb v {=} 2$};
    \node at (-25, -3) {$\numb v {=} 4$};
    \node at (-25, -7) {$\numb v {=} 8$};
    \node at (-25, -12) {$\numb v {=} 16$};
    \node at (-25, -6) {$\numb v {=} 7$};
    \node at (-25, -11) {$\numb v {=} 14$};
    \node at (-25, -4) {$\numb v {=} 5$};
    \node at (-25, -9) {$\numb v {=} 10$};
    
    \draw(-1)--(11);
    
    \draw(11)--(21);
    \draw(11)--(22);
    
    \draw(21)--(41);
    \draw(21)--(42);
    \draw(22)--(43);
    \draw(22)--(44);
    
    \draw(41)--(81);
    \draw(41)--(82);
    \draw(42)--(83);
    \draw(42)--(84);
    \draw(43)--(85);
    \draw(43)--(86);
    \draw(44)--(87);
    \draw(44)--(88);
    
    \draw(81)--(161);
    \draw(81)--(162);
    \draw(82)--(163);
    \draw(82)--(164);
    \draw(83)--(165);
    \draw(83)--(166);
    \draw(84)--(167);
    \draw(84)--(168);
    \draw(85)--(169);
    \draw(85)to[bend right=70](71);
    \draw(86)to[bend right=70](72);
    \draw(86)to[bend right=70](73);
    \draw(87)to[bend right=70](74);
    \draw(87)to[bend right=70](75);
    \draw(88)to[bend right=70](76);
    \draw(88)to[bend right=70](77);
    
    \draw(71)--(141);
    \draw(71)--(142);
    \draw(72)--(143);
    \draw(72)--(144);
    \draw(73)--(145);
    \draw(73)--(146);
    \draw(74)--(147);
    \draw(74)--(148);
    \draw(75)--(149);
    \draw(75)to[bend right=70](51);
    \draw(76)to[bend right=70](52);
    \draw(76)to[bend right=70](53);
    \draw(77)to[bend right=70](54);
    \draw(77)to[bend right=70](55);
    
    \draw(51)--(101);
    \draw(51)--(102);
    \draw(52)--(103);
    \draw(52)--(104);
    \draw(53)--(105);
    \draw(53)--(106);
    \draw(54)--(107);
    \draw(54)--(108);
    \draw(55)--(109);
    \draw(55)--(11);
    
    
    
    
    \path  (161)   edge[loop below] node[above]  {} (3);
    \path  (162)   edge[loop below] node[above]  {} (3);
    \path  (163)   edge[loop below] node[above]  {} (3);
    \path  (164)   edge[loop below] node[above]  {} (3);
    \path  (165)   edge[loop below] node[above]  {} (3);
    \path  (166)   edge[loop below] node[above]  {} (3);
    \path  (167)   edge[loop below] node[above]  {} (3);
    \path  (168)   edge[loop below] node[above]  {} (3);
    \path  (169)   edge[loop below] node[above]  {} (3);
    
    \path  (141)   edge[loop below] node[above]  {} (3);
    \path  (142)   edge[loop below] node[above]  {} (3);
    \path  (143)   edge[loop below] node[above]  {} (3);
    \path  (144)   edge[loop below] node[above]  {} (3);
    \path  (145)   edge[loop below] node[above]  {} (3);
    \path  (146)   edge[loop below] node[above]  {} (3);
    \path  (147)   edge[loop below] node[above]  {} (3);
    \path  (148)   edge[loop below] node[above]  {} (3);
    \path  (149)   edge[loop below] node[above]  {} (3);
    
    \path  (101)   edge[loop below] node[above]  {} (3);
    \path  (102)   edge[loop below] node[above]  {} (3);
    \path  (103)   edge[loop below] node[above]  {} (3);
    \path  (104)   edge[loop below] node[above]  {} (3);
    \path  (105)   edge[loop below] node[above]  {} (3);
    \path  (106)   edge[loop below] node[above]  {} (3);
    \path  (107)   edge[loop below] node[above]  {} (3);
    \path  (108)   edge[loop below] node[above]  {} (3);
    \path  (109)   edge[loop below] node[above]  {} (3);
\end{tikzpicture}
        \caption{%
            The FDR's Markov chain for $\numb{n}=9$ (transition probabilities omitted).
        }
        \label{fig:MarkovChainFDR9}
    \end{center}
\end{figure}

\subsection{Proof of \Cref{lem:fdrinv} (FDR Invariant)}
\label[appendix]{sec:fdrinv}
\fdrinv*
\begin{proof}
	The main part and the outline of the proof can be found in the main text in \Cref{sec:fdr}. As a first step, we apply the definition of the denotational semantics for $\prfont{if}$-statements:
    \[
        \den{\ifelseskip{\bool}{\loopbody}}(\mu) ~=~ [\neg \bool] \cdot \mu + \den{\loopbody}([\bool] \cdot \mu) ~.
    \]
    Then, we derive the sub-expression $\den{\loopbody}([\bool] \cdot \mu)$ in \Cref{nextfdr} by repeatedly applying \Cref{lem:injassignments}.
    It remains to argue that $\mu' \coloneqq [\neg \bool] \cdot \mu + \den{\loopbody}([\bool] \cdot \mu)$ is indeed contained in $\Ifdr$ again, i.e. that it satisfies conditions \eqref{fdrA} -- \eqref{fdrD}. We have that
\begin{align*}
    & [\neg \bool] \cdot \mu + \den{\loopbody}([\bool] \cdot \mu)\\
    & = [\progvar{v} \geq \progvar{n}] \mu \tag{$\mu_1$}\\
    & + \frac{1}{2} [\progvar{v} < \progvar{n}, \progvar{c} \geq 0] (\mu[\progvar{v}/\frac{\progvar{v}+\progvar{n}}{2}, \progvar{c}/\frac{\progvar{c}+\progvar{n}}{2}] + \mu[\progvar{v}/\frac{\progvar{v}+\progvar{n}}{2}, \progvar{c}/\frac{\progvar{c}+\progvar{n}-1}{2}]) \tag{$\mu_2$}\\
    & + \frac{1}{2} [\progvar{v} < 2\progvar{n}, \progvar{c} < \progvar{n}] (\mu[\progvar{v}/\frac{\progvar{v}}{2}, \progvar{c}/\frac{\progvar{c}}{2}] + \mu[\progvar{v}/\frac{\progvar{v}}{2}, \progvar{c}/\frac{\progvar{c}-1}{2}]). \tag{$\mu_3$}
\end{align*}
This expression for $[\neg \bool] \cdot \mu + \den{\loopbody}([\bool] \cdot \mu)$ is a sum of three sub-distributions $\mu_1$, $\mu_2$ and $\mu_3$. The three sub-distributions are not yet \enquote{disjoint}, meaning that there are certain program states $\sigma$ and choices for $\mu \in \Ifdr$, such that multiple of the sub-distributions $\mu_1$, $\mu_2$ and $\mu_3$ assign a positive probability to $\sigma$. This is caused by the fact that a program state can fulfill the Iverson brackets of either $\mu_2$ and $\mu_3$, or of $\mu_1$ and $\mu_3$ simultaneously. In the following, we find an expression for $[\neg \bool] \cdot \mu + \den{\loopbody}([\bool] \cdot \mu)$ that \emph{is} a sum of disjoint sub-distributions which then simplifies our argumentation that $[\neg \bool] \cdot \mu + \den{\loopbody}([\bool] \cdot \mu)$ is conatined in $\Ifdr$. First, we rewrite our expression for $[\neg \bool] \cdot \mu + \den{\loopbody}([\bool] \cdot \mu)$ to a sum of all with respect to interference (of the Iverson brackets) different parts. This results in a sum of five sub-distributions, $\mu_1$, $\mu_2$ and $\mu_3$ stand-alone (abbreviated by s.a.), $\mu_2$ and $\mu_3$ interfered, and $\mu_1$ and $\mu_3$ interfered:
\begin{align*}
    & [\neg \bool] \cdot \mu + \den{\loopbody}([\bool] \cdot \mu)\\
    & = [\progvar{v} \geq \progvar{n}, (\progvar{v} \geq 2\progvar{n} \lor \progvar{c} \geq \progvar{n})] \mu \tag{$\mu_1$ s.a.}\\
    & + [\progvar{v} \geq \progvar{n}, \progvar{v} < 2\progvar{n}, \progvar{c} < \progvar{n}] (\mu + \frac{1}{2}(\mu[\progvar{v}/\frac{\progvar{v}}{2}, \progvar{c}/\frac{\progvar{c}}{2}] + \mu[\progvar{v}/\frac{\progvar{v}}{2}, \progvar{c}/\frac{\progvar{c}-1}{2}])) \tag{$\mu_1 ~\&~ \mu_3$}\\
    & + [\progvar{v} < 2\progvar{n}, \progvar{c} < \progvar{n}, \progvar{v} < \progvar{n}, (\progvar{v} \geq \progvar{n} \lor \progvar{c} < 0)] \frac{1}{2} (\mu[\progvar{v}/\frac{\progvar{v}}{2}, \progvar{c}/\frac{\progvar{c}}{2}] + \mu[\progvar{v}/\frac{\progvar{v}}{2}, \progvar{c}/\frac{\progvar{c}-1}{2}]) \tag{$\mu_3$ s.a.}\\
    & + [\progvar{v} < 2\progvar{n}, \progvar{c} < \progvar{n}, \progvar{v} < \progvar{n}, \progvar{c} \geq 0] \frac{1}{2} (\mu[\progvar{v}/\frac{\progvar{v}}{2}, \progvar{c}/\frac{\progvar{c}}{2}] + \mu[\progvar{v}/\frac{\progvar{v}}{2}, \progvar{c}/\frac{\progvar{c}-1}{2}]\\
    & \phantom{+ [\progvar{v} < 2\progvar{n}, \progvar{c} < \progvar{n}, \progvar{v} < \progvar{n}, \progvar{c} \geq 0] \frac{1}{2} (} + \mu[\progvar{v}/\frac{\progvar{v}+\progvar{n}}{2}, \progvar{c}/\frac{\progvar{c}+\progvar{n}}{2}] + \mu[\progvar{v}/\frac{\progvar{v}+\progvar{n}}{2}, \progvar{c}/\frac{\progvar{c}+\progvar{n}-1}{2}]) \tag{$\mu_2 ~\&~ \mu_3$}\\
    & + [\progvar{v} < \progvar{n}, \progvar{c} \geq 0, (\progvar{v} \geq 2\progvar{n} \lor \progvar{c} \geq \progvar{n})] \frac{1}{2} (\mu[\progvar{v}/\frac{\progvar{v}+\progvar{n}}{2}, \progvar{c}/\frac{\progvar{c}+\progvar{n}}{2}] + \mu[\progvar{v}/\frac{\progvar{v}+\progvar{n}}{2}, \progvar{c}/\frac{\progvar{c}+\progvar{n}-1}{2}]).\tag{$\mu_2$ s.a.}
\end{align*}

This expression for $[\neg \bool] \cdot \mu + \den{\loopbody}([\bool] \cdot \mu)$ is sizewise even larger than before. However, recall that $\mu$ is not an arbitrary distribution but itself contained in $\Ifdr$. We argue that due to conditions $\eqref{fdrB}$ and $\eqref{fdrC}$ from the invariant, it holds that $(\mu_1~ s.a.) = (\mu_2~ s.a.) = (\mu_3~ s.a.) = 0$. Condition $\eqref{fdrB}$ declares that $\mu(\sigma) = 0$ for all program states $\sigma$ with $\sigma(\progvar{v}) \geq 2\cdot\sigma(\progvar{n})$, witnessing $(\mu_1~ s.a.) = 0$. Condition $\eqref{fdrC}$ declares that $\mu(\sigma) = 0$ for all program states $\sigma$ with $\sigma(\progvar{c}) < 0$, witnessing $(\mu_3~ s.a.) = 0$, or with $\sigma(\progvar{c}) \geq \sigma(\progvar{n})$, witnessing $(\mu_2~ s.a.) = 0$. This leaves us with the following final expression for $[\neg \bool] \cdot \mu + \den{\loopbody}([\bool] \cdot \mu)$:
\begin{align*}
    & [\neg \bool] \cdot \mu + \den{\loopbody}([\bool] \cdot \mu)\\
    & = [\progvar{v} \geq \progvar{n}, \progvar{v} < 2\progvar{n}, \progvar{c} < \progvar{n}] (\mu + \frac{1}{2}(\mu[\progvar{v}/\frac{\progvar{v}}{2}, \progvar{c}/\frac{\progvar{c}}{2}] + \mu[\progvar{v}/\frac{\progvar{v}}{2}, \progvar{c}/\frac{\progvar{c}-1}{2}])) \tag{$\mu_4$}\\
    & + [\progvar{v} < \progvar{n}, \progvar{c} < \progvar{n}, \progvar{c} \geq 0]\frac{1}{2} (\mu[\progvar{v}/\frac{\progvar{v}}{2}, \progvar{c}/\frac{\progvar{c}}{2}] + \mu[\progvar{v}/\frac{\progvar{v}}{2}, \progvar{c}/\frac{\progvar{c}-1}{2}]\\
    & \phantom{+ [\progvar{v} < 2\progvar{n}, \progvar{c} < \progvar{n}, \progvar{v} < \progvar{n}, \progvar{c} \geq 0]} + \mu[\progvar{v}/\frac{\progvar{v}+\progvar{n}}{2}, \progvar{c}/\frac{\progvar{c}+\progvar{n}}{2}] + \mu[\progvar{v}/\frac{\progvar{v}+\progvar{n}}{2}, \progvar{c}/\frac{\progvar{c}+\progvar{n}-1}{2}]). \tag{$\mu_5$}
\end{align*}

We simplified $[\neg \bool] \cdot \mu + \den{\loopbody}([\bool] \cdot \mu)$ to a sum of two sub-distributions $\mu_4$ and $\mu_5$ which are disjoint: Sub-distribution $\mu_4$ only assigns positive probabilities to program states $\sigma$ with $\sigma(\progvar{v}) \geq \sigma(\progvar{n})$, while sub-distribution $\mu_5$ only assigns positive probabilities to program states $\sigma$ with $\sigma(\progvar{v}) < \sigma(\progvar{n})$. Now, we argue that $[\neg \bool] \cdot \mu + \den{\loopbody}([\bool] \cdot \mu) \in \Ifdr$. To do so, we prove separately that the sum $\mu_4 + \mu_5$ satisfies each of the conditions $\eqref{fdrA}, \eqref{fdrB}, \eqref{fdrC}$ and $\eqref{fdrD}$ of $\Ifdr$.
\begin{itemize}
	\item $\eqref{fdrA}$: Since $\mu \in \Ifdr$, we have that $\mu$ satisfies condition $\eqref{fdrA}$ itself, meaning that there exists an $\numb{n} \in \nats_{> 0}$ such that $\prosim{\mu}{\progvar{n}}{=}{\numb{n}} = 1$ (and $\prosim{\mu}{\progvar{n}}{=}{\numb{n'}} = 0$ for all $\numb{n'} \ne \numb{n}$). We have that $\textvar{n}$ is an unmodified variable in the loopbody of the FDR, and thus \Cref{thm:unmodvars} is applicable and yields that $\prosim{\mu_4+\mu_5}{\progvar{n}}{=}{\numb{n}} \leq 1$ (and $\prosim{\mu_4+\mu_5}{\progvar{n}}{=}{\numb{n'}} = 0$ for all $\numb{n'} \ne \numb{n}$). Since however $\mu$, as a distribution from the invariant, has a mass of $1$, we get that
    \begin{align*}
    	1 & = \weight{\mu}\\
        & = \weight{[\neg \bool] \cdot \mu + \den{\loopbody}([\bool] \cdot \mu)}\tag{$\loopbody$ loop-free}\\
        & = \weight{\mu_4 + \mu_5}
    \end{align*}
    which lets us conclude that $\prosim{\mu_4+\mu_5}{\progvar{n}}{=}{\numb{n}} = 1$.

    \item $\eqref{fdrB}$: Condition $\eqref{fdrB}$ as stated in $\Ifdr$ is equivalent to $\prosim{\mu}{\progvar{v}}{\leq}{0} = 0 \land \prosim{\mu}{\progvar{v}}{\geq}{2\numb{n}-1} = 0$ and we prove each of the two conjuncts seperately. In the following, we often use three different ways to argue that a sub-distribution assigns $0$ to a program state. A): Iverson brackets, B): since $\mu$ as a distribution from $\Ifdr$ satisfies conditions $\eqref{fdrA}, \eqref{fdrB}, \eqref{fdrC}$ and $\eqref{fdrD}$ itself and these conditions declare various program states as unreachable, or C): we insert values $z \notin \mathbb{Z}$ into the substitution operator for distributions, which yields $0$ by definition (recall \Cref{defn:substForDists}). We have that
    \begin{align*}
        \prosim{\mu_4}{\progvar{v}}{\leq}{0} & = \sum_{\sigma\colon \sigma(\progvar{v}) \leq 0} \mu_4(\sigma) \tag{Definition of $\textnormal{Pr}_{\mu_4}(\dots)$}\\
        & = 0 \tag{Iverson Bracket of $\mu_4$}
    \end{align*}
    and
    \begin{align*}
        \prosim{\mu_5}{\progvar{v}}{\leq}{0} & = \sum_{\sigma\colon \sigma(\progvar{v}) \leq 0} \mu_5(\sigma) \tag{Definition of $\textnormal{Pr}_{\mu_5}(\dots)$}\\
        & = \sum_{\sigma\colon \sigma(\progvar{v}) \leq 0} [\progvar{v} < \progvar{n}, \progvar{c} < \progvar{n}, \progvar{c} \geq 0] \frac{1}{2} (\mu[\progvar{v}/\frac{\progvar{v}}{2}, \progvar{c}/\frac{\progvar{c}}{2}] + \mu[\progvar{v}/\frac{\progvar{v}}{2}, \progvar{c}/\frac{\progvar{c}-1}{2}]\\
        & \phantom{MMMMMMM} + \mu[\progvar{v}/\frac{\progvar{v}+\progvar{n}}{2}, \progvar{c}/\frac{\progvar{c}+\progvar{n}}{2}]\\
        & \phantom{MMMMMMM} + \mu[\progvar{v}/\frac{\progvar{v}+\progvar{n}}{2}, \progvar{c}/\frac{\progvar{c}+\progvar{n}-1}{2}])(\sigma) \tag{Definition of $\mu_5$}\\
        & = \sum_{\sigma\colon \sigma(\progvar{v}) \leq 0, \sigma(\progvar{c}) \geq 0, \sigma(\progvar{c}) < \sigma(\progvar{n})} \frac{1}{2} (\mu[\progvar{v}/\frac{\progvar{v}}{2}, \progvar{c}/\frac{\progvar{c}}{2}] + \mu[\progvar{v}/\frac{\progvar{v}}{2}, \progvar{c}/\frac{\progvar{c}-1}{2}]\\
        & \phantom{MMMMMMM} + \mu[\progvar{v}/\frac{\progvar{v}+\progvar{n}}{2}, \progvar{c}/\frac{\progvar{c}+\progvar{n}}{2}]\\
        & \phantom{MMMMMMM} + \mu[\progvar{v}/\frac{\progvar{v}+\progvar{n}}{2}, \progvar{c}/\frac{\progvar{c}+\progvar{n}-1}{2}])(\sigma) \tag{Iverson Bracket}\\
        & = \sum_{\sigma\colon \sigma(\progvar{v}) \leq 0, \sigma(\progvar{c}) \geq 0, \sigma(\progvar{c}) < \sigma(\progvar{n})} \frac{1}{2} (\mu[\progvar{v}/\frac{\progvar{v}+\progvar{n}}{2}, \progvar{c}/\frac{\progvar{c}+\progvar{n}}{2}] + \mu[\progvar{v}/\frac{\progvar{v}+\progvar{n}}{2}, \progvar{c}/\frac{\progvar{c}+\progvar{n}-1}{2}])(\sigma) \tag{$\sigma(\progvar{v}) < 0$ iff $\frac{\sigma(\progvar{v})}{2} < 0$ and $\mu$ satisfies \eqref{fdrB}}\\
        & = \sum_{\sigma\colon \sigma(\progvar{v}) \leq 0, \sigma(\progvar{c}) \geq 0, \sigma(\progvar{c}) < \sigma(\progvar{n})} \frac{1}{2} (\mu[\progvar{v}/\frac{\progvar{v}+\progvar{n}}{2}, \progvar{c}/\frac{\progvar{c}+\progvar{n}-1}{2}])(\sigma) \tag{$\frac{\progvar{c}+\progvar{n}}{2} \geq \frac{\progvar{v}+\progvar{n}}{2}$ and $\mu$ satisfies \eqref{fdrC}}\\
        & = 0 \tag{either $\frac{\progvar{c}+\progvar{n}-1}{2} \geq \frac{\progvar{v}+\progvar{n}}{2}$ and $\mu$ satisfies \eqref{fdrC} or $\frac{\progvar{c}+\progvar{n}-1}{2} \notin \mathbb{Z}$ iff $\frac{\progvar{v}+\progvar{n}}{2} \in \mathbb{Z}$}
    \end{align*}
    as well as 
    \begin{align*}
        \prosim{\mu_4}{\progvar{v}}{\geq}{2\numb{n}-1} & = \sum_{\sigma\colon \sigma(\progvar{v}) \geq 2\numb{n}-1} \mu_4(\sigma) \tag{Definition of $\textnormal{Pr}_{\mu_4}(\dots)$}\\
        & = \sum_{\sigma\colon \sigma(\progvar{v}) = 2\numb{n}-1} \mu_4(\sigma) \tag{Iverson Bracket of $\mu_4$}\\
        & = \sum_{\sigma\colon \sigma(\progvar{v}) = 2\numb{n}-1} \mu(\sigma) + \frac{1}{2}(\mu[\progvar{v}/\frac{\progvar{v}}{2}, \progvar{c}/\frac{\progvar{c}}{2}] + \mu[\progvar{v}/\frac{\progvar{v}}{2}, \progvar{c}/\frac{\progvar{c}-1}{2}])(\sigma) \tag{Definition of $\mu_4$}\\
        & = \sum_{\sigma\colon \sigma(\progvar{v}) = 2\numb{n}-1} \frac{1}{2}(\mu[\progvar{v}/\frac{\progvar{v}}{2}, \progvar{c}/\frac{\progvar{c}}{2}] + \mu[\progvar{v}/\frac{\progvar{v}}{2}, \progvar{c}/\frac{\progvar{c}-1}{2}])(\sigma) \tag{$\mu$ satisfies \eqref{fdrB}}\\
        & = 0 \tag{$\frac{2\numb{n}-1}{2} \notin \mathbb{Z}$}
    \end{align*}
    and
    \begin{align*}
        \prosim{\mu_5}{\progvar{v}}{\geq}{2\numb{n}-1} & = \sum_{\sigma\colon \sigma(\progvar{v}) \leq 0} \mu_5(\sigma) \tag{Definition of $\textnormal{Pr}_{\mu_5}(\dots)$}\\
        & = 0. \tag{Iverson Bracket of $\mu_5$}
    \end{align*}

    \item $\eqref{fdrC}$: Condition $\eqref{fdrC}$ as stated in $\Ifdr$ is equivalent to $\prosim{\mu}{\progvar{c}}{<}{0} = 0 \land \prosim{\mu}{\progvar{c}}{\geq}{\numb{n}} = 0 \land \prosim{\mu}{\progvar{c}}{\geq}{\progvar{v}} = 0$ and we prove each of the three conjuncts seperately. We have that
    \begin{align*}
        \prosim{\mu_4}{\progvar{c}}{<}{0} & = \sum_{\sigma\colon \sigma(\progvar{c}) < 0} \mu_4(\sigma) \tag{Definition of $\textnormal{Pr}_{\mu_4}(\dots)$}\\
        & = \sum_{\sigma\colon \sigma(\progvar{c}) < 0} [\progvar{v} \geq \progvar{n}, \progvar{v} < 2\progvar{n}, \progvar{c} < \progvar{n}] (\mu + \frac{1}{2}(\mu[\progvar{v}/\frac{\progvar{v}}{2}, \progvar{c}/\frac{\progvar{c}}{2}]\\
        & \phantom{MMMMMMMMM} + \mu[\progvar{v}/\frac{\progvar{v}}{2}, \progvar{c}/\frac{\progvar{c}-1}{2}]))(\sigma) \tag{Definition of $\mu_4$}\\
        & = 0 \tag{$\frac{\sigma(\progvar{c})}{2} < 0$ and $\frac{\sigma(\progvar{c})-1}{2} < 0$ and $\mu$ satisfies \eqref{fdrC}}
    \end{align*}
    and
    \begin{align*}
        \prosim{\mu_5}{\progvar{c}}{<}{0} & = \sum_{\sigma\colon \sigma(\progvar{c}) < 0} \mu_5(\sigma) \tag{Definition of $\textnormal{Pr}_{\mu_5}(\dots)$}\\
        & = 0 \tag{Iverson Bracket of $\mu_5$}
    \end{align*}
    as well as
    \begin{align*}
        \prosim{\mu_4}{\progvar{c}}{\geq}{\numb{n}} & = \sum_{\sigma\colon \sigma(\progvar{c}) \geq \numb{n}} \mu_4(\sigma) \tag{Definition of $\textnormal{Pr}_{\mu_4}(\dots)$}\\
        & = 0 \tag{Iverson Bracket of $\mu_4$}
    \end{align*}
    and
    \begin{align*}
        \prosim{\mu_5}{\progvar{c}}{\geq}{\numb{n}} & = \sum_{\sigma\colon \sigma(\progvar{c}) \geq \numb{n}} \mu_5(\sigma) \tag{Definition of $\textnormal{Pr}_{\mu_5}(\dots)$}\\
        & = 0 \tag{Iverson Bracket of $\mu_5$}
    \end{align*}
    as well as
    \begin{align*}
        \prosim{\mu_4}{\progvar{c}}{\geq}{\progvar{v}} & = \sum_{\sigma\colon \sigma(\progvar{c}) \geq \sigma(\progvar{v})} \mu_4(\sigma) \tag{Definition of $\textnormal{Pr}_{\mu_4}(\dots)$}\\
        & = 0 \tag{Iverson Bracket of $\mu_4$}
    \end{align*}
    and
    \begin{align*}
        \prosim{\mu_5}{\progvar{c}}{\geq}{\progvar{v}} & = \sum_{\sigma\colon \sigma(\progvar{c}) \geq \sigma(\progvar{v})} \mu_5(\sigma) \tag{Definition of $\textnormal{Pr}_{\mu_5}(\dots)$}\\
        & = \sum_{\sigma\colon \sigma(\progvar{v}) \leq 0} [\progvar{v} < \progvar{n}, \progvar{c} < \progvar{n}, \progvar{c} \geq 0] \frac{1}{2} (\mu[\progvar{v}/\frac{\progvar{v}}{2}, \progvar{c}/\frac{\progvar{c}}{2}] + \mu[\progvar{v}/\frac{\progvar{v}}{2}, \progvar{c}/\frac{\progvar{c}-1}{2}]\\
        & \phantom{MMMMMMM} + \mu[\progvar{v}/\frac{\progvar{v}+\progvar{n}}{2}, \progvar{c}/\frac{\progvar{c}+\progvar{n}}{2}]\\
        & \phantom{MMMMMMM} + \mu[\progvar{v}/\frac{\progvar{v}+\progvar{n}}{2}, \progvar{c}/\frac{\progvar{c}+\progvar{n}-1}{2}])(\sigma) \tag{Definition of $\mu_5$}\\
        & = \sum_{\sigma\colon \sigma(\progvar{v}) \leq 0, \sigma(\progvar{c}) \geq 0, \sigma(\progvar{c}) < \sigma(\progvar{n})} \frac{1}{2} (\mu[\progvar{v}/\frac{\progvar{v}}{2}, \progvar{c}/\frac{\progvar{c}}{2}] + \mu[\progvar{v}/\frac{\progvar{v}}{2}, \progvar{c}/\frac{\progvar{c}-1}{2}]\\
        & \phantom{MMMMMMM} + \mu[\progvar{v}/\frac{\progvar{v}+\progvar{n}}{2}, \progvar{c}/\frac{\progvar{c}+\progvar{n}}{2}]\\
        & \phantom{MMMMMMM} + \mu[\progvar{v}/\frac{\progvar{v}+\progvar{n}}{2}, \progvar{c}/\frac{\progvar{c}+\progvar{n}-1}{2}])(\sigma) \tag{Iverson Bracket}\\
        & = \sum_{\sigma\colon \sigma(\progvar{v}) \leq 0, \sigma(\progvar{c}) \geq 0, \sigma(\progvar{c}) < \sigma(\progvar{n})} \frac{1}{2} (\mu[\progvar{v}/\frac{\progvar{v}}{2}, \progvar{c}/\frac{\progvar{c}-1}{2}] + \mu[\progvar{v}/\frac{\progvar{v}+\progvar{n}}{2}, \progvar{c}/\frac{\progvar{c}+\progvar{n}}{2}]\\
        & \phantom{MMMMMMM} + \mu[\progvar{v}/\frac{\progvar{v}+\progvar{n}}{2}, \progvar{c}/\frac{\progvar{c}+\progvar{n}-1}{2}])(\sigma) \tag{$\frac{\sigma(\progvar{c})}{2} \geq \frac{\sigma(\progvar{v})}{2}$ and $\mu$ satisfies \eqref{fdrC}}\\
        & = \sum_{\sigma\colon \sigma(\progvar{v}) \leq 0, \sigma(\progvar{c}) \geq 0, \sigma(\progvar{c}) < \sigma(\progvar{n})} \frac{1}{2} (\mu[\progvar{v}/\frac{\progvar{v}}{2}, \progvar{c}/\frac{\progvar{c}-1}{2}] + \mu[\progvar{v}/\frac{\progvar{v}+\progvar{n}}{2}, \progvar{c}/\frac{\progvar{c}+\progvar{n}-1}{2}])(\sigma) \tag{$\frac{\sigma(\progvar{c}) + \numb{n}}{2} \geq \frac{\sigma(\progvar{v}) + \numb{n}}{2}$ and $\mu$ satisfies \eqref{fdrC}}\\
        & = \sum_{\sigma\colon \sigma(\progvar{v}) \leq 0, \sigma(\progvar{c}) \geq 0, \sigma(\progvar{c}) < \sigma(\progvar{n})} \frac{1}{2} (\mu[\progvar{v}/\frac{\progvar{v}+\progvar{n}}{2}, \progvar{c}/\frac{\progvar{c}+\progvar{n}-1}{2}])(\sigma) \tag{either $\frac{\sigma(\progvar{c}) - 1}{2} \geq \frac{\sigma(\progvar{v})}{2}$ and $\mu$ satisfies \eqref{fdrC} or $\frac{\sigma(\progvar{c}) - 1}{2} \notin \mathbb{Z}$ iff $\frac{\sigma(\progvar{v})}{2} \in \mathbb{Z}$}\\
        & = 0. \tag{either $\frac{\sigma(\progvar{c}) + \numb{n} - 1}{2} \geq \frac{\sigma(\progvar{v}) + \numb{n}}{2}$ and $\mu$ satisfies \eqref{fdrC} or $\frac{\sigma(\progvar{c}) + \numb{n} - 1}{2} \notin \mathbb{Z}$ iff $\frac{\sigma(\progvar{v}) + \numb{n}}{2} \in \mathbb{Z}$}
    \end{align*}

    \item $\eqref{fdrD}$: Let $\sigma, \sigma' \in \Sigma$ be two program states. If $\sigma(\progvar{v}) = \sigma'(\progvar{v}), \sigma(\progvar{n}) = \numb{n} = \sigma'(\progvar{n})$ and $0 \leq \sigma(\progvar{c}), \sigma'(\progvar{c}) < \minim{\sigma(\progvar{v})}{\numb{n}}$, then we conclude for sub-distribution
	\begin{itemize}
	\item $\mu_4$: We have that $\frac{\sigma(\progvar{c})}{2} \in \mathbb{Z}$ iff $\frac{\sigma(\progvar{c})-1}{2} \notin \mathbb{Z}$. Furthermore,
	\[
		0 \leq \frac{\sigma(\progvar{c})}{2}, \frac{\sigma'(\progvar{c})}{2} < \minim{\frac{\sigma(\progvar{v})}{2}}{\numb{n}}
	\]
	as well as
	\[
		0 \leq \frac{\sigma(\progvar{c})-1}{2}, \frac{\sigma'(\progvar{c})-1}{2} < \minim{\frac{\sigma(\progvar{v})}{2}}{\numb{n}}.
	\]
	Thus, we insert into $\mu$ only pairs of states that satisfy themselves the left-hand side of the implication of condition $\eqref{fdrD}$. Since $\mu$ itself satisfies condition $\eqref{fdrD}$, we also have that $\mu_4$ satisfies $\eqref{fdrD}$.
	\item $\mu_5$: We have that $\frac{\sigma(\progvar{c})+\numb{n}}{2} \in \mathbb{Z}$ iff $\frac{\sigma(\progvar{c})+\numb{n}-1}{2} \notin \mathbb{Z}$. Furthermore,
	\[
		0 \leq \frac{\sigma(\progvar{c})+\numb{n}}{2}, \frac{\sigma'(\progvar{c})+\numb{n}}{2} < \minim{\frac{\sigma(\progvar{v})+\numb{n}}{2}}{\numb{n}}
	\]
	as well as
	\[
		0 \leq \frac{\sigma(\progvar{c})+\sigma(\progvar{n})-1}{2}, \frac{\sigma'(\progvar{c})+\sigma(\progvar{n})-1}{2} < \min\{\frac{\sigma(\progvar{v})+\sigma(\progvar{n})}{2},\sigma(\progvar{n})\}
	\]
	and
	\[
		0 \leq \frac{\sigma(\progvar{c})}{2}, \frac{\sigma'(\progvar{c})}{2} < \minim{\frac{\sigma(\progvar{v})}{2}}{\numb{n}}
	\]
	as well as
	\[
		0 \leq \frac{\sigma(\progvar{c})-1}{2}, \frac{\sigma'(\progvar{c})-1}{2} < \minim{\frac{\sigma(\progvar{v})}{2}}{\numb{n}}.
	\]
	Thus, we insert into $\mu$ only pairs of states that satisfy themselves the left-hand side of the implication of condition $\eqref{fdrD}$. Since $\mu$ itself satisfies condition $\eqref{fdrD}$, we also have that $\mu_5$ satisfies $\eqref{fdrD}$.
	\end{itemize}
\end{itemize}
\end{proof}

\subsection{Proof of \Cref{thm:fdrpartial} (FDR Correctness)}
\label{sec:fdrpartial}
\fdrtotal*
\begin{proof}
\Cref{lem:fdrinv} proved $\Ifdr$ to be an inductive, distributional invariant $\Ifdr$ for the FDR and its initial distributions $\Mfdr$. Thus, we can apply the \eqref{WhileP} Hoare rule which yields validity of the Hoare triple
\[
    \triple{\Ifdr}{\prog}{\closOf{[\neg \bool] \Ifdr}}.
\]
Since $\Mfdr \subseteq \Ifdr$, we have for all $\mu_0 \in \Mfdr$ that $\mu_0 \in \Ifdr$ and by definition of valid Hoare triples that $\mu \in \closOf{[\neg \bool] \Ifdr}$. See \Cref{fig:fdrreach} (right) for an illustrative graphic of how distribution from $\closOf{[\neg \bool] \Ifdr}$ are shaped. Condition $\eqref{fdrA}$ of $\Ifdr$ implies that there exists an $\numb{n'} \in \nats_{> 0}$ such that $\prosim{\mu}{\progvar{n}}{=}{\numb{n'}} = 1$. Towards a contradiction, assume $\numb{n'} \ne \numb{n}$. Then, since $\progvar{n} \in \unmod{\textnormal{FDR}}$, we obtain by \Cref{thm:unmodvars} that $\prosim{\mu}{\progvar{n}}{=}{\numb{n'}} = 1 \leq \prosim{\mu_0}{\progvar{n}}{=}{\numb{n'}} = 0$, a contradiction. So, $\numb{n'} = \numb{n}$ indeed holds.

Now, let $z,z' \in \mathbb{Z}$.
\begin{itemize}
    \item If $z < 0$, then due to condition $\eqref{fdrC}$ of $\Ifdr$, we have that $\forall \sigma \in \Sigma\colon \sigma(\progvar{c}) < 0 \Rightarrow \mu(\sigma) = 0$ and thus
    \begin{align*}
        \marg{\{\progvar{c}\}}{\mu}(\progvar{c} \mapsto z) = \sum_{\sigma\colon \sigma(\progvar{c}) = z} \mu(\sigma) \overset{\eqref{fdrC}}{=} 0~.
    \end{align*}
    \item If $z \geq \numb{n}$, then due to condition $\eqref{fdrC}$ of $\Ifdr$, we have that $\forall \sigma \in \Sigma\colon \sigma(\progvar{c}) \geq \numb{n} \Rightarrow \mu(\sigma) = 0$ and thus
    \begin{align*}
        \marg{\{\progvar{c}\}}{\mu}(\progvar{c} \mapsto z) = \sum_{\sigma\colon \sigma(\progvar{c}) = z} \mu(\sigma) \overset{\eqref{fdrC}}{=} 0~.
    \end{align*}
    \item If $0 \leq z,z' \leq \numb{n}-1$, then, using conditions $\eqref{fdrB}$ and $\eqref{fdrD}$ of the invariant, we get that
    \begin{align*}
        \marg{\{\progvar{c}\}}{\mu}(\progvar{c} \mapsto z) = \sum_{\sigma\colon \sigma(\progvar{c}) = z} \mu(\sigma) & = \sum_{i \in \mathbb{Z}} \sum_{\sigma\colon \sigma(\progvar{c}) = z \land \sigma(\progvar{v}) = i} \mu(\sigma)\\
        & = \sum_{i \in \mathbb{Z}, \numb{n} \leq i} \sum_{\sigma\colon \sigma(\progvar{c}) = z \land \sigma(\progvar{v}) = i} \mu(\sigma)\tag{$\mu \in \langle \neg \bool \rangle$}\\
        & = \sum_{i \in \mathbb{Z}, \numb{n} \leq i \leq 2\numb{n}-2} \sum_{\sigma\colon \sigma(\progvar{c}) = z \land \sigma(\progvar{v}) = i} \mu(\sigma) \tag{Condition $\eqref{fdrB}$}\\
        & = \sum_{i \in \mathbb{Z}, \numb{n} \leq i \leq 2\numb{n}-2} \sum_{\sigma\colon \sigma(\progvar{c}) = z' \land \sigma(\progvar{v}) = i} \mu(\sigma) \tag{Condition $\eqref{fdrD}$}\\
        & = \marg{\{\progvar{c}\}}{\mu}(\progvar{c} \mapsto z')~.
    \end{align*}

    We have that $\weight{\mu} \leq 1$ (since the invariant contains no state values with a mass > 1) and thus for all $0 \leq z \leq \numb{n}-1$, it holds that $\marg{\{\progvar{c}\}}{\mu}(\progvar{c} \mapsto z) \leq \frac{1}{\numb{n}}$, such that we finally obtain for $r = \marg{\{\progvar{c}\}}{\mu}(\progvar{c} \mapsto z) \cdot \numb{n} \in [0,1]$ that
    \[
        \marg{\{\progvar{c}\}}{\mu} = r \cdot \unif{\progvar{c}}{0}{\numb{n}-1}~.
    \]
\end{itemize} 

    Now, we additionally assume the FDR to terminate almost surely with respect to $\Ifdr$. This allows us to apply the \eqref{WhileT} Hoare rule which yields validity of the Hoare triple
    \[
        \triple{\Ifdr}{\prog}{\closOf{[\neg \bool] \Ifdr} \cap \Deltaspec{\geq \tau}\Sigma},
    \]
    where $\tau = \inf_{\mu \in \Ifdr} \weight{\mu}$. Thus, it holds for the post that $\mu \in \Deltaspec{\geq \tau}\Sigma$. We have that $\tau = 1$ since our invariant only contains distributions with a mass of $1$ by construction, and therefore, $\weight{\mu} = 1$. Towards a contradiction, assume $r < 1$. Then,
    \begin{align*}
        1 = \weight{\mu} = \weight{\marg{\{\progvar{c}\}}{\mu}} = \weight{r \cdot \unif{\progvar{c}}{0}{\numb{n}-1}} = r \cdot 1 < 1
    \end{align*}
    which is a contradiction. Thus, we obtain that $r = 1$ and by that our final result:
    \[
        \marg{\{\progvar{c}\}}{\mu} = \unif{\progvar{c}}{0}{\numb{n}-1}~.
    \]
\end{proof}

\subsection{AST of the FDR}
\label[appendix]{app:fdrast}

We can prove our FDR encoding to be AST on $\Ifdr$ by using the sound and relative complete proof rule 3.2 by Majumdar et al. \cite{DBLP:journals/pacmpl/MajumdarS25}. In order to apply the rule, we consider the \emph{probabilistic control flow graph} underlying our FDR program and a \emph{state}-based invariant that we derive from the \emph{distributional} invariant $\Ifdr$ presented in this paper. At the loop location, the state-based invariant contains precisely the states from the support of $\Ifdr$:
\[
    \textnormal{Inv} \coloneqq \underset{\mu \in \Ifdr}{\bigcup} \supp(\mu)~.
\]
For these states, we employ the supermartingale function
    \begin{align*}
        v(\sigma) \coloneqq \begin{cases}
            0 & \textit{if } \sigma \not \models [v < n]\\
            \sigma(n) & \text{otherwise}
        \end{cases}
    \end{align*}
    which is a suitable choice, since for every $\numb{r} \in \mathbb{R}$, only finitely many states $\sigma \in \textnormal{Inv}$ satisfy $\sigma(n) \leq \numb{r}$.
And we employ the variant function
    \begin{align}
        u(\sigma) \coloneqq \begin{cases}
            0 & \textit{if } \sigma \not \models [v < n]\\
            \left\lceil \log_2 (\frac{\sigma(n)}{\sigma(v)-\sigma(c)}) \right\rceil & \text{otherwise}
        \end{cases}
    \end{align}
    which, intuitively, is suitable since it assigns to each program state an overapproximation of its minimal distance to termination (illustrated in \Cref{fig:astfdr}) and thus decreases with probability at least $0.5$ in each loop iteration.

    \begin{figure}[t]
    \begin{center}
        \begin{tikzpicture}[node distance={30mm}, thick, main/.style = {draw, circle, minimum size=0.8cm},->,x=1cm,y=1cm]
    
    \node(-1) at (0,1)  {};
    \node[main, label=above right:3](1) at (0,0)  {$1,0$};
    \node[main, label=above right:2](2) at (-3,-1) {$2,0$};
    \node[main, label=above right:1](3) at (-5,-2) {$4,0$};
    \node[main, label=above left:0](4) at (-6,-3){$8,0$};

    \node[main, label=above right:2](5) at (3,-1) {$2,1$};
    \node[main, label=above right:1](6) at(-2,-2) {$4,1$};
    \node[main, label=above:0](7) at(-4,-3){$8,1$};
    
    \node[main, label=left:1](8) at(2,-2){$4,2$};
    \node[main, label=above:0](9) at(-3,-3){$8,2$};
    
    \node[main, label=above right:3](10) at(5,-2){$4,3$};
    
    \node[main, label=above right:0](12) at (-1,-3){$8,3$};
    \node[main, label=above left:0](13) at(1,-3){$8,4$};
    \node[main, label=above right:0](14) at(3,-3){$8,5$};
    
    \draw(-1)--(1);
    \draw(1)--(2);
    \draw(1)--(5);
    \draw(2)--(3);
    \draw(2)--(6);
    \draw(5)--(8);
    \draw(5)--(10);
    \draw(3)--(4);
    \draw(3)--(7);
    \draw(6)--(9);
    \draw(6)--(12);
    \draw(8)--(13);
    \draw(8)--(14);
    
    \draw(10) to [bend right = 5] (2);
    \draw(10) to [bend right = 35] (5);
    \path  (4)   edge[loop below] node[above]  {} (3);
    \path  (7)   edge[loop below] node[above]  {} (3);
    \path  (9)   edge[loop below] node[above]  {} (3);
    \path  (12)   edge[loop below] node[above]  {} (3);
    \path  (13)   edge[loop below] node[above]  {} (3);
    \path  (14)   edge[loop below] node[above]  {} (3);
\end{tikzpicture}
        \caption{\Cref{fig:MarkovChainFDR6} revisited. The numbers above the nodes illustrate the minimal distance to termination for each program state.}\label{fig:astfdr}
    \end{center}
\end{figure}

    \section{Appendix to \Cref{sec:fldr}}
\subsection{Predicate Definitions for the FLDR}
\label[appendix]{subsec:predsfldr}
Towards defining an invariant for the FLDR, we define some helpful predicates. These are simply short-hand notations for certain expressions and make the following invariant more intuitive. We define
\[
\bound{h}{c} \coloneqq 2^c - \sum_{j=0}^{c} h[j] \cdot 2^{c-j}
\]
which represents the first position in row $c$ of the binary tree that contains a number and \emph{not} an internal node. Then, we define
\[
\isThere{n, h, H}{c} \coloneqq [\H{\bound{h}{c}}{c} = n + 1]
\]
as a boolean ($0$ or $1$) indicating whether row $c$ contains the newly created die side $n+1$. We further define
\[
\prob{H}{i} \coloneqq \sum_{d,c\colon  \H{d}{c} = i} 2^{-c}
\]
which represents the probability to draw $i$ \emph{without} taking any re-rolls into account when rolling the new die side $n+1$. 
We define
\[
\probt{n, H}{i} \coloneqq \prob{H}{i} \cdot (1 + \sum_{j=1}^{\infty}(\prob{H}{n+1})^j)
\]
which represents the probability to draw $i$ \emph{with} taking re-rolls into account when rolling the new die side $n+1$. 
We define
\[
\prob{H}{i, j} \coloneqq \sum_{d,c\colon \H{d}{c} = i, c > j} 2^{-(c-j)}
\]
which represents the probability to draw $i$ \emph{when starting at level $j$ instead of at the root}, and without taking the re-roll into account when rolling $n+1$. 
Finally, we define
\[
\probt{n, H}{i, j} \coloneqq \prob{H}{i, j} + \prob{H}{n+1, j} \cdot \probt{n, H}{i}
\]
which represents the probability to draw $i$ \emph{when starting at level $j$ instead of at the root}, and with taking the re-roll into account when rolling $n+1$. 

\begin{example}
    \label{ex:predicates}
    Continuing \Cref{ex:bintree}, for our predicates we have the following. Note that $\bound{h}{0}$ and $\bound{h}{1}$ are not really relevant values, because rows $0$ and $1$ of our binary tree do not contain internal nodes.
    \begin{center}
        \begin{tabular}{cccccc}
            \begin{tabular}{c|c}
                $c$ & $\bound{h}{c}$\\
                \hline
                0 & $1$ \\
                1 & $2$ \\
                2 & $1$\\
                3 & $0$
            \end{tabular}
            &
            \begin{tabular}{c|c}
                $c$ & $\isThere{n, h, H}{c}$\\
                \hline
                0 & $0$\\
                1 & $0$\\
                2 & $1$\\
                3 & $0$
            \end{tabular}
            &
            \begin{tabular}{c|c}
                $i$ & $\prob{H}{i}$\\
                \hline
                1 & $\nicefrac{3}{8}$\\
                2 & $\nicefrac{2}{8}$\\
                3 & $\nicefrac{1}{8}$\\
                4 & $\nicefrac{2}{8}$
            \end{tabular}
            &
            \begin{tabular}{c|c}
                $i$ & $\probt{n, H}{i}$\\
                \hline
                1 & $\nicefrac{3}{6}$\\
                2 & $\nicefrac{2}{6}$\\
                3 & $\nicefrac{1}{6}$
            \end{tabular}
            \\
            \begin{tabular}{c|c}
                $i$ & $\prob{H}{i,1}$\\
                \hline
                1 & $\nicefrac{3}{4}$\\
                2 & $\nicefrac{2}{4}$\\
                3 & $\nicefrac{1}{4}$\\
                4 & $\nicefrac{2}{4}$
            \end{tabular}
            &
            \begin{tabular}{c|c}
                $i$ & $\prob{H}{i,2}$\\
                \hline
                1 & $\nicefrac{1}{2}$\\
                2 & $0$\\
                3 & $\nicefrac{1}{2}$\\
                4 & $0$
            \end{tabular}
            &
            \begin{tabular}{c|c}
                $i$ & $\probt{n, H}{i,1}$\\
                \hline
                1 & $1$\\
                2 & $\nicefrac{4}{6}$\\
                3 & $\nicefrac{2}{6}$ 
            \end{tabular}
            &
            \begin{tabular}{c|c}
                $i$ & $\probt{n, H}{i,2}$\\
                \hline
                1 & $\nicefrac{1}{2}$\\
                2 & $0$\\
                3 & $\nicefrac{1}{2}$\\
            \end{tabular}
            
        \end{tabular}
    \end{center}
\end{example}

\subsection{Invariant Definition of the FLDR}
\label[appendix]{subsec:InvariantFLDR}
We use short-hand notations that are defined in \Cref{subsec:predsfldr}. We construct a set of distributions $\Ifldr$ that we afterwards prove to be an inductive distributional invariant for the FLDR program:
\begin{align*}
    \Ifldr &= \{\mu \in \dists ~|~ \exists \numb{n} > 0\colon (\prosim{\mu}{\progvar{n}}{=}{\numb{n}} = 1 \tag{a}\\
    &\land \prosim{\mu}{0}{\leq}{\progvar{d} \leq \numb{n}} = 1\tag{b}\\
    &\land \prosim{\mu}{0}{\leq}{\progvar{c} \leq k} = 1\tag{c}\\
    & \land \forall i \in \nats\colon (1 \leq i \leq \numb{n} \Rightarrow\\
    & \sum_{\sigma\colon \H{\sigma(d)}{\sigma(c)} = i} \big(\mu(\sigma) + \sum_{j=0}^{k-1} \mu(\sigma[d/0, c/j]) \cdot \probt{\numb{n}, H}{i,j}\big) \leq \probt{\numb{n}, H}{i})\tag{d}\\
    &\land \forall \sigma,\sigma' \models \progvar{n} = \numb{n} \land 0 \leq \progvar{d} \leq \bound{h}{\progvar{c}}-1\colon\\
    & \phantom{MMMMMMMM} \sigma(\progvar{c}) = \sigma'(\progvar{c}) \Rightarrow \mu(\sigma) = \mu(\sigma')\tag{e}\\
    & \land \forall \sigma,\sigma' \models \progvar{n} = \numb{n} \land \bound{h}{\progvar{c}} + \isThere{\numb{n},h,H}{\progvar{c}} \leq \progvar{d} < \bound{h}{\progvar{c}} + h[\progvar{c}]\colon\\
    & \phantom{MMMMMMMM} \sigma(\progvar{c}) = \sigma'(\progvar{c}) \Rightarrow \mu(\sigma) = \mu(\sigma'))\}\tag{f}
\end{align*}

The invariant contains all distributions that satisfy 6 conditions $(a)$, $(b)$, $(c)$, $(d)$, $(e)$ and $(f)$. Intuitively, conditions $(a)$, $(b)$ and $(c)$ are very similar to the ones from the invariant of the FDR: Condition $(a)$ ensures that for each $\mu \in \Ifldr$ there exists a fixed value $\numb{n} \in \nats_{> 0}$ such that program states $\sigma$ with $\sigma(\progvar{n}) \ne \numb{n}$ are never reached. Conditions $(b)$ and $(c)$ then ensure that program states assigning too low or too great of values to the variables $d$ and $c$ are never reached (these are program states that are outside of the binary tree, e.g. in \Cref{ex:bintree}).
Conditions $(e)$ and $(f)$ are similar to the condition $(d)$ of the FDR invariant: they ensure uniform probabilities for program states \enquote{in the same row}. Condition $(e)$ ensures uniform probabilities for all the inner nodes of the binary tree that lie inside the same row while condition $(f)$ ensures uniform probabilities for all the output nodes that lie inside the same row (\emph{except} nodes labelled with the new die side $n+1$ --- these are always assigned probability $0$ since we restart upon hitting them). Condition $(d)$ has no analogue in the FDR invariant but is needed to make the invariant inductive: Condition $(d)$ ensures for all possible outputs $1 \leq i \leq n$, that, now \emph{and} in the future, not too much probability is assigned to output nodes labelled with $i$.

\begin{figure}[t]
    \begin{center}
        \begin{tikzpicture}[scale=0.65, node distance={30mm}, main/.style = {draw, shape = circle, minimum size = 0.2cm, inner sep=5pt},->]
            \node[main, fill = black](1) at (0,0) {};
            
            \node[main, fill= black](2) at (-2,-1) {};
            \node[main, fill= black](3) at (2,-1) {};
            
            \node[main, fill= black](4) at (-3,-2) {};
            \node[main](5) at (-1,-2) {$4$};
            \node[main](6) at (1.3,-2) {$2$};
            \node[main](7) at (2.8,-2) {$1$};
            
            \node[main](8) at (-3.8,-3) {$3$};
            \node[main](9) at (-2.2,-3) {$1$};
            \draw[red] (0,-1) ellipse (2.5cm and 0.6cm);
            \draw[red] (0,0) ellipse (0.5cm and 0.5cm);
            \draw[red] (-3,-2) ellipse (0.5cm and 0.5cm);
            \draw[orange] (2,-2) ellipse (1.4cm and 0.6cm);
            \draw[orange] (-3,-3) ellipse (1.4cm and 0.6cm);
            \node[red, align=left] at (-2.3, 0.3) {$\mu(\sigma) = \mu(\sigma')$};
            \node[orange, align=left] at (1, -3.3) {$\mu(\sigma) = \mu(\sigma')$};
            
            \draw(1) to (2);
            \draw(1) to (3);
            \draw(2) to (4);
            \draw(2) to (5);
            \draw(3) to (6);
            \draw(3) to (7);
            \draw(4) to (8);
            \draw(4) to (9);

            \node[main, fill = red](1) at (0+8,0) {};
            
            \node[main, fill= orange](2) at (-2+8,-1) {};
            \node[main, fill= black](3) at (2+8,-1) {};
            
            \node[main, fill= green](4) at (-3+8,-2) {};
            \node[main](5) at (-1+8,-2) {$4$};
            \node[main](6) at (1.3+8,-2) {$2$};
            \node[main, blue](7) at (2.8+8,-2) {$1$};
            
            \node[main](8) at (-3.8+8,-3) {$3$};
            \node[main, blue](9) at (-2.2+8,-3) {$1$};
            \node[align=left] at (2.4+8, -3.3) {For $i = 1$:\\$\frac{1}{2} \mu(\color{red}\bullet\color{black}) + \mu(\color{orange}\bullet\color{black}) + \frac{1}{2}\mu(\color{green}\bullet\color{black}) + \mu(\color{blue}\bullet\color{black}) \leq \frac{3}{6}$};
            
            \draw(1) to (2);
            \draw(1) to (3);
            \draw(2) to (4);
            \draw(2) to (5);
            \draw(3) to (6);
            \draw(3) to (7);
            \draw(4) to (8);
            \draw(4) to (9);
        \end{tikzpicture}
        \caption{Illustration of the conditions from the invariant for program $\progfldr$.}
        \label{fig:bintreeTwo}
    \end{center}
\end{figure}

\subsection{Proof of \Cref{lem:fldrinv} (FLDR Invariant)}
\label[appendix]{sec:fldrinv}
\begin{restatable}[FLDR Invariant]{lemma}{fldrinv}%
    \label{lem:fldrinv}%
    $\Ifldr$ is an inductive distributional invariant for $\progfldr$ and the above given set of initial distributions $\Mfldr$.
\end{restatable}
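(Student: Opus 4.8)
The plan is to mirror the two–part structure used for the FDR in \Cref{lem:fdrinv}. First I would show \emph{containment of the initial distributions}, i.e.\ $\Mfldr \subseteq \Ifldr$. Fix $\mu = \dirac{d\mapsto 0, c\mapsto 0, n\mapsto \numb n}$. Condition (a) holds since $\mu$ is Dirac on a state with $n = \numb n$; conditions (b),(c) hold because $0 \le 0 \le \numb n$ and $0 \le 0 \le k$; conditions (e),(f) hold vacuously, as a Dirac distribution never assigns positive mass to two distinct states. For condition (d), note that at the unique support state $\sigma_0$ we have $\H{0}{0} = 0 \ne i$ (the root is an internal node), so $\sum_{\sigma\colon \H{\sigma(d)}{\sigma(c)} = i}\mu(\sigma) = 0$, and the only $j$ contributing to $\sum_{j=0}^{k-1}\mu(\sigma_0[d/0,c/j])\cdot\probt{\numb n, H}{i,j}$ is $j = 0$, giving $\probt{\numb n, H}{i,0}$. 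Since starting at level $0$ is the same as starting at the root, $\probt{\numb n, H}{i,0} = \probt{\numb n, H}{i}$, so (d) holds with equality.

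The second, harder step is \emph{inductivity}: for arbitrary $\mu \in \Ifldr$ I must show $\den{\ifelseskip{\bool}{\loopbody}}(\mu) = [\neg\bool]\cdot\mu + \den{\loopbody}([\bool]\cdot\mu) \in \Ifldr$, where $\bool \equiv d < \bound{h}{c}$. I would derive a closed expression for $\den{\loopbody}([\bool]\cdot\mu)$ much as in \Cref{nextfdr}: the assignments $c{:=}c{+}1$, $d{:=}2d$, $d{:=}2d{+}1$ are injective in their target variable, so \Cref{lem:injassignments} lets me push them through as substitutions (with partial inverses $c{-}1$, $\tfrac d2$, $\tfrac{d-1}{2}$), whereas the reset $c{:=}0;\,d{:=}0$ is \emph{not} injective, so for that sub-program I use the defining equation of $\denn$ for assignments directly, collapsing all mass of states $\sigma$ with $\H{\sigma(d)}{\sigma(c)} = \numb n{+}1$ onto the single state $d\mapsto 0, c\mapsto 0$. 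As in the FDR case, the result is a sum of sub-distributions whose Iverson brackets overlap; I would rewrite it as a sum of \emph{disjoint} pieces — intuitively one piece living on internal nodes (those still satisfying $\bool$ after the step, contributed by $[\neg\bool]\cdot\mu$ together with the forward image) and one piece reabsorbed into the root by the rejection branch — using conditions (b),(c) of $\mu \in \Ifldr$ to eliminate the spurious pieces, exactly as \eqref{fdrB},\eqref{fdrC} were used in \Cref{subsec:fdrinv}.

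Given this disjoint form I would then verify the six conditions for $\mu' \coloneqq \den{\ifelseskip{\bool}{\loopbody}}(\mu)$ in turn. Conditions (a)--(c) are bookkeeping: (a) follows from $n \in \unmod{\progfldr}$ via \Cref{thm:unmodvars} together with mass preservation (the loop body is loop-free); (b),(c) follow from inspecting the Iverson brackets and partial substitutions, since one iteration moves from row $c$ to row $c{+}1 \le k$ and from node $d$ to $2d$ or $2d{+}1$, always staying inside the binary tree, with rejection returning to $(0,0)$. Conditions (e),(f) — per-row uniformity of internal nodes resp.\ of non-$(\numb n{+}1)$ leaves — follow because every internal node of row $c$ splits its (by hypothesis uniform) mass evenly between its two children in row $c{+}1$; the internal nodes of row $c{+}1$ are a prefix of these children and the leaves of row $c{+}1$ the remaining ones, so both families stay uniform, while mass landing on $(\numb n{+}1)$-leaves is removed and re-injected at the root, consistently with (e) for row $0$ (cf.\ \Cref{fig:bintreeTwo}).

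The main obstacle is condition (d), namely $\sum_{\sigma\colon \H{\sigma(d)}{\sigma(c)}=i}\mu'(\sigma) + \sum_{j=0}^{k-1}\mu'(\sigma[d/0,c/j])\cdot\probt{\numb n, H}{i,j} \le \probt{\numb n, H}{i}$, which has no FDR analogue and is exactly what makes the invariant inductive. To establish it I would exploit the recursive structure of the auxiliary quantities from \Cref{subsec:predsfldr}: splitting the defining sums by whether a leaf lies one level below or strictly deeper yields identities of the form $\probt{\numb n, H}{i,j} = \tfrac12 L_i(j{+}1) + \tfrac12\probt{\numb n, H}{i,j{+}1}$, where $L_i(c)$ counts $i$-leaves in row $c$, together with $\probt{\numb n, H}{i,0} = \probt{\numb n, H}{i}$ and the rejection identity $\probt{\numb n, H}{i} = \prob{H}{i} + \prob{H}{\numb n{+}1}\cdot\probt{\numb n, H}{i}$. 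Writing $\mu'$ in terms of $\mu$ via the derived expression, the new realized mass on $i$-leaves plus the new future term telescopes against the old one using these identities: the mass moving from an internal node of row $c$ into an $i$-leaf of row $c{+}1$ is cancelled by the corresponding decrease in the $\probt{}{i,j}$-weighted future term, and the mass sent to $(\numb n{+}1)$-leaves reappears as the $j = 0$ summand weighted by $\probt{\numb n, H}{i,0} = \probt{\numb n, H}{i}$. Hence the left-hand side for $\mu'$ equals that for $\mu$, which is $\le \probt{\numb n, H}{i}$ by hypothesis. Carefully matching the level shift $j \mapsto j{+}1$, handling the $j = k{-}1$ boundary (row $k$ has no internal nodes, hence no future term), and treating the $d{:=}2d$/$d{:=}2d{+}1$ split jointly with the rejection branch is where the bulk of the computation lies; the remaining conditions are routine once the disjoint decomposition of $\den{\loopbody}([\bool]\cdot\mu)$ is in hand. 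Full details are deferred to the appendix.
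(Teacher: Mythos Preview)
Your proposal is correct and follows essentially the same approach as the paper: the same two-part structure (containment of $\Mfldr$, then inductivity), the same derivation of $\den{\loopbody}([\bool]\cdot\mu)$ via \Cref{lem:injassignments} for the injective assignments and a direct collapse for the non-injective reset, the same disjoint decomposition, and the same key observation for condition~(d) that the left-hand side is \emph{exactly preserved} under one guarded iteration (the paper states this as two equations that sum to old LHS $=$ new LHS). One minor remark: the recursive identity you write for $\probt{\numb n,H}{i,j}$ omits the rejection contribution $\tfrac12 L_{\numb n+1}(j{+}1)\cdot\probt{\numb n,H}{i}$, but since you invoke the rejection identity separately and phrase this as ``identities of the form'', this is harmless at the plan level.
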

\begin{proof}
The proof is split into two parts: Proving containment of the initial distributions, and proving that $\Ifldr$ is inductive.
    
    \emph{Containment of the initial distributions.}
    We argue that $\Mfldr \subseteq \Ifldr$. Let $\mu \in \Mfldr$. Since $\mu$ is     a Dirac distribution, condition $(a)$ is satisfied by choosing the unique $\numb{n} \in \nats_{> 0}$ such that $\mu(d \mapsto 0, c \mapsto 0, n \mapsto \numb{n}) = 1$. Conditions $(e)$ and $(f)$ are satisfied because $\mu$ is a Dirac distribution and thus there do not even exist two different program states $\sigma, \sigma' \in \Sigma$ with $\mu(\sigma) > 0$ \emph{and} $\mu(\sigma') > 0$. Condition $(d)$ is satisfied since for all $1 \leq i \leq \numb{n}$, since $\mu$ is a Dirac distribution and therefore:
    \begin{align*}
        \sum_{\sigma\colon \H{\sigma(d)}{\sigma(c)} = i} \mu(\sigma) + \sum_{j=0}^{k-1} \mu(\sigma[d/0, c/j]) \cdot \probt{\numb{n}, H}{i,j} & = \probt{\numb{n}, H}{i,0}\\
 & = \probt{\numb{n}, H}{i}~.
    \end{align*}
    Condition $(c)$ is satisfied because for all $\numb{c} \in \mathbb{Z}\setminus\{0\}$, it holds for all $\sigma \in \Sigma$ with $\sigma(\progvar{c}) = \numb{c}$ that $\mu(\sigma) = 0$. Similarly, condition $(b)$ is satisfied because for all $\numb{d} \in \mathbb{Z}\setminus\{0\}$, it holds for all $\sigma \in \Sigma$ with $\sigma(\progvar{d}) = \numb{d}$ that $\mu(\sigma) = 0$.\\
    
    \emph{Inductiveness.}
    To show that $\Ifldr$ is inductive, we show that for an arbitrary $\mu \in \Ifldr$, it holds that $\den{\ifelseskip{\bool}{\loopbody}}(\mu) \in \Ifldr$ where $\loopbody$ is the loop body of the FLDR. As a first step, we apply the definition of the denotational semantics for the conditional case:
    \[
    \den{\ifelseskip{\bool}{\loopbody}}(\mu) = [\neg \bool] \cdot \mu + \den{\loopbody}([\bool] \cdot \mu)~.
    \]
    We derive the sub-expression $\den{\loopbody}([\bool] \cdot \mu)$ in \Cref{nextfldr}, whereby we repeatedly apply \Cref{lem:injassignments} to obtain simplified expressions for the post after injective assignments. We provide the final argumentation for inductiveness (in a similar manner as for the FDR), i.e. that $[\neg \bool] \cdot \mu + \den{\loopbody}([\bool] \cdot \mu) \in \Ifldr$ for an arbitrary $\mu \in \Ifldr$. We derived an expression for $\den{\loopbody}([\bool] \cdot \mu)$ graphically in \Cref{nextfldr}. Summarized, we obtain
    \begin{align*}
        & [\neg \bool] \cdot \mu + \den{\loopbody}([\bool] \cdot \mu)\\
        & = [d \geq \bound{h}{c}] \cdot \mu\\
        & + \mu \textnormal{ where } \mu(0,0) = |[\H{d}{c}=n+1] \psi|\\
        & + [\H{d}{c} \ne n+1] \psi
    \end{align*}
    
    for $\psi$ as defined in \Cref{nextfldr}. Similar as for the Fast Dice Roller, we further transform and simplify this expression by using that $\mu$ is not an arbitrary distribution, but $\mu \in \Ifldr$. Conditions $(b)$ and $(c)$ of the invariant ensure us that for all program states $\sigma$ with $\sigma(\progvar{c}) = \sigma(\progvar{d}) = 0$, we have that $\psi(\sigma) = 0$. Furthermore, we distinguish between program states satisfying $d < \bound{h}{c}$ and $d < \bound{h}{c}$. By that, we rewrite $[\neg \bool] \cdot \mu + \den{\loopbody}([\bool] \cdot \mu)$ into the following sum of three disjoint (sub)-distributions:
    
    \begin{align*}
        & [\neg \bool] \mu + \den{\loopbody}([\bool] \mu)\\
        & = \mu \textnormal{ where } \mu(0,0) = |[\H{d}{c}=n+1] \psi| \tag{$\mu_1$}\\ 
        & + [c \ne 0 \lor d \ne 0, d < \bound{h}{c}] \cdot \psi \tag{$\mu_2$}\\ 
        & + [c \ne 0 \lor d \ne 0, d \geq \bound{h}{c}, \H{d}{c} \ne n+1] \mu + \psi \tag{$\mu_3$}
    \end{align*}
    
    Now, we argue that conditions $(a), (b), (c), (d), (e)$ and $(f)$ of $\Ifldr$ are satisfied. In the following, let $\sigma, \sigma' \in \Sigma$ be program states and let $\mu' = \mu_1 + \mu_2 + \mu_3$ denote our derived sum.
    \begin{itemize}
        \item $(a)$: The argumentation is the same as in the proof of the FDR invariant.
        \item $(b)$ and $(c)$: The argumentation is similar as in the proof of the FDR invariant: For program states $\sigma$ and $\sigma'$, we have that $\frac{\sigma(\progvar{c})}{2} \in \mathbb{Z}$ iff $\frac{\sigma(\progvar{c})-1}{2} \notin \mathbb{Z}$. Furthermore, we have that $\frac{\sigma(\progvar{d})}{2} < \bound{h}{\sigma(\progvar{c})-1}$ iff $\frac{\sigma'(\progvar{d})}{2} < \bound{h}{\sigma'(\progvar{c})-1}$. Since $\mu$ itself satisfies conditions $(b)$ and $(c)$, we get that $\psi$ satisfies conditions $(b)$ and $(c)$ and furthermore also that the whole sum $\mu'$ satisfies conditions $(b)$ and $(c)$.
        
        \item $(d)$: For all $1 \leq i \leq \numb{n}$, we have that
        \begin{align*}
            & \sum_{\sigma\colon \H{\sigma(d)}{\sigma(c)} = i} \mu'(\sigma)\\
            = & \sum_{\sigma\colon \H{\sigma(d)}{\sigma(c)} = i} \mu(\sigma) + \sum_{\sigma | \H{\sigma(d)}{\sigma(c)} = i} \mu(\sigma[d/0, c/c-1]) \tag{1}
        \end{align*}
        and that
        \begin{align*}
            & \sum_{j=0}^{k-1} \mu'(\sigma[d/0, c/j]) \cdot \probt{\numb{n}, H}{i,j}\\
            = & \sum_{j=0}^{k-1} \mu(\sigma[d/0, c/j]) \cdot \probt{\numb{n}, H}{i,j} - \sum_{\sigma | \H{\sigma(d)}{\sigma(c)} = i} \mu(\sigma[d/0, c/c-1])\tag{2}
        \end{align*}
        yielding
        \begin{align*}
            & \sum_{\sigma\colon \H{\sigma(d)}{\sigma(c)} = i} \mu'(\sigma) + \sum_{j=0}^{k-1} \mu'(\sigma[d/0, c/j]) \cdot \probt{\numb{n}, H}{i,j}\\
            = & \sum_{\sigma\colon \H{\sigma(d)}{\sigma(c)} = i} \mu(\sigma) + \sum_{j=0}^{k-1} \mu(\sigma[d/0, c/j]) \cdot \probt{\numb{n}, H}{i,j} \tag{Inserting $\mu_1$ and $\mu_2$}\\
            \leq & ~\probt{H}{i}~. \tag{$\mu$ itself satisfies condition $(b)$}
        \end{align*}

        \item $(e)$ and $(f)$: Similar argumentation as for condition $(d)$ in the proof of the FDR invariant. Since $\mu$ itself satisfies conditions $(e)$ and $(f)$ and $\psi$ satisfies conditions $(e)$ and $(f)$ (by construction of the matrix $h$), we conclude that the whole sum $\mu'$ satisfies conditions $(e)$ and $(f)$.
    \end{itemize}
\end{proof}

\begin{figure}[t]
    \begin{align*}
        &\progAnno{[d < \bound{h}{c}] \cdot \mu}\\
        &\ass{c}{c+1} \SEMICLN \\
        &\progAnno{[d < \bound{h}{c-1}] \cdot \mu[c/c-1]}\\
        &\pchoice{\ass{d}{2d}}{\nicefrac 1 2}{\ass{d}{2d+1}} \SEMICLN \\
        &\progAnno{\psi \coloneqq \tfrac{1}{2} [\tfrac{d}{2} < \bound{h}{c{-}1}] \cdot \mu[d/\tfrac{d}{2}, c/c{-}1] + \tfrac{1}{2} [\tfrac{d-1}{2} < \bound{h}{c{-}1}] \cdot \mu[d/\tfrac{d-1}{2}, c/c{-}1]}\\
        &\IF{H[d,c] = n + 1}\\
        &\qquad \progAnno{[H[d,c]=n+1] \cdot \psi}\\
        &\qquad \ass{c}{0} \SEMICLN \ass{d}{0} \\
        &\qquad \progAnno{\mu \textnormal{ where } \mu(0,0) = |[H[d,c]=n+1] \psi|}\\
        &\prbracecl\\
        &\progAnno{[H[d,c] \ne n+1] \psi + \mu \textnormal{ where } \mu(0,0) = |[H[d,c]=n+1] \psi|}
    \end{align*}
    \caption{Derivation of {$\den{\loopbody}([\bool] \cdot \mu)$}  where $\loopbody$ is the loop body of the FLDR.}
    \label{nextfldr}
\end{figure}

\subsection{Proof of \Cref{thm:fldrpartial} (FLDR Correctness)}
\label{sec:fldrpartial}
\fldrtotal*
\begin{proof}
    \Cref{lem:fldrinv} proved $\Ifldr$ to be an invariant for the FLDR and our initial set of distributions $\Mfldr$. Thus, we can apply the \eqref{WhileP} Hoare rule which yields validity of the Hoare triple
    \[
        \triple{\Ifldr}{\prog}{\closOf{[\neg \bool] \Ifldr}}~.
    \]
    Since $\Mfldr \subseteq \Ifldr$, we have for all $\mu_0 \in \Mfldr$ that $\mu_0 \in \Ifldr$ and by definition of valid Hoare triples that $\mu \in \closOf{[\neg \bool] \Ifldr}$. By that $\mu \in \langle \neg \bool \rangle$ which means that $\mu(\sigma) = 0$ for all $\sigma$ with $\sigma \models \bool$, that means with $\sigma(\progvar{d}) < \bound{h}{\progvar{c}}$. Thus, condition $(d)$ of $\Ifldr$ shows that for all $1 \leq i \leq \numb{n}\colon$
    \begin{align*}
        & \sum_{\sigma\colon \H{\sigma(d)}{\sigma(c)} = i} \mu(\sigma) + \sum_{j=0}^{k-1} \mu(\sigma[d/0, c/j]) \cdot \probt{\numb{n}, H}{i,j} \leq \probt{\numb{n}, H}{i})\\
        = & \sum_{\sigma\colon \H{\sigma(d)}{\sigma(c)} = i} \mu(\sigma) \tag{$\mu \in \clos([\neg \bool] \cdot \Ifldr)$}\\
        \leq & ~\probt{\numb{n}, H}{i} \tag{Condition $(d)$ of $\Ifldr$}\\
        = & ~\frac{a_i}{m}~. \tag{Definition of $\probt{\numb{n}, H}{i}$}
    \end{align*}

    Now, we additionally assume the FLDR to terminate almost surely with respect to $\Ifldr$. This allows us to apply the \eqref{WhileT} Hoare rule which yields validity of the Hoare triple
    \[
        \triple{\Ifldr}{\prog}{\closOf{[\neg \bool] \Ifldr} \cap \Deltaspec{\geq \tau}\Sigma},
    \]
where $\tau = \inf_{\mu \in \Ifldr} \weight{\mu}$.
    Thus, it holds for the post that $\mu \in \Deltaspec{\geq \tau}\Sigma$. We have that $\tau = 1$ since our invariant only contains distributions with a mass of $1$ by construction, and therefore, $\weight{\mu} = 1$. Towards a contradiction, assume there is an $1 \leq i \leq \numb{n}$ such that $\sum_{\sigma\colon \H{\sigma(d)}{\sigma(c)} = i} \mu(\sigma) < \frac{a_i}{m}$. Then,
    \begin{align*}
        1 = \weight{\mu} = \sum_{1 \leq i \leq \numb{n}} \sum_{\sigma\colon \H{\sigma(d)}{\sigma(c)} = i} \mu(\sigma) < \sum_{1 \leq i \leq \numb{n}} \frac{a_i}{m} = 1
    \end{align*}
    which is a contradiction. Thus, we obtain our final result:
    \[
        \forall 1 \leq i \leq \numb{n}\colon \sum_{\sigma\colon \H{\sigma(d)}{\sigma(c)} = i} \mu(\sigma) = \frac{a_i}{m}~.
    \]
\end{proof}

\subsection{Almost Sure Termination of the FLDR}
\label[appendix]{app:fldrast}

We can prove our FLDR encoding to be AST on $\Ifldr$ in a similar way as for the FDR by using the sound and complete proof rule 3.2 by Majumdar et al. \cite{DBLP:journals/pacmpl/MajumdarS25}. In order to apply the rule, we consider the \emph{probabilistic control flow graph} underlying our FLDR program and a \emph{state}-based invariant that we derive from the \emph{distributional} invariant $\Ifldr$ presented in this paper. At the loop location, the state-based invariant contains precisely the states from the support of $\Ifldr$:
\[
    \textnormal{Inv} \coloneqq \underset{\mu \in \Ifldr}{\bigcup} \supp(\mu)~.
\]
For these states, we employ the supermartingale function
\begin{align*}
    v(\sigma) \coloneqq \begin{cases}
        0 & \text{if } \sigma \not \models \bool\\
        m & \text{otherwise} ~,
    \end{cases}
\end{align*}
where $\bool$ is the loop-guard of our FLDR program. This is a suitable choice, since for every $\numb{r} \in \mathbb{R}$, only finitely many inputs $a_1,\dots,a_n$ satisfy $m \leq \numb{r}$.
Finally, we employ the variant function
\begin{align*}
    u(\sigma) \coloneqq \begin{cases}
        0 & \text{if } \sigma \not \models \bool\\
        \left\lceil \log_2 (\frac{k}{\sigma(c)-\sigma(d)}) \right\rceil & \text{otherwise}~,
    \end{cases}
\end{align*}
which, intuitively, is suitable since it assigns to each program state an overapproximation of its minimal distance to termination (illustrated in \Cref{fig:astfldr}) and thus decreases with probability at least $0.5$ in each loop iteration.

\begin{figure}[t]
    \begin{center}
        \begin{tikzpicture}[node distance={20mm}, main/.style = {draw, shape = circle, minimum size = 0.1cm, inner sep=3pt},->,scale=0.85]
                \node[main, fill = black, label=above right:2](1) at (0,0) {};
                
                \node[main, fill= black, label=above right:2](2) at (-2,-0.7) {};
                \node[main, fill= black, label=above right:1](3) at (2,-0.7) {};
                
                \node[main, fill= black, label=above left:1](4) at (-3,-1.4) {};
                \node[main, label=above left:0](6) at (1.3,-1.4) {$2$};
                \node[main, label=above right:0](7) at (2.8,-1.4) {$1$};
                
                \node[main, label=above left:0](8) at (-3.8,-2.1) {$3$};
                \node[main, label=above right:0](9) at (-2.2,-2.1) {$1$};

                \draw(1) to (2);
                \draw(1) to (3);
                \draw(2) to (4);
                \draw[bend right](2) to (1);
                \draw(3) to (6);
                \draw(3) to (7);
                \draw(4) to (8);
                \draw(4) to (9);
            \end{tikzpicture}
        \caption{The binary tree from \Cref{fig:bintree} revisited. Node $4$ is omitted since upon reaching it, the FLDR program returns to its initial state \emph{in the same loop iteration}. The numbers above the nodes illustrate the minimal distance to termination for each program state.}\label{fig:astfldr}
    \end{center}
\end{figure}

\fi
%
%
\begin{credits}
\subsubsection{\ackname}
The authors acknowledge support from the DFG RTG 2236 (UnRAVeL) and the European Union's Horizon 2020 research and innovation program under the Marie Skłodowska-Curie grant agreement No.\ 101008233 (MISSION).
\subsubsection{\discintname}
The authors have no competing interests to declare that are relevant to the content of this article.
\end{credits}
%
%
%
\bibliographystyle{splncs04}
\bibliography{references}
\end{document}